\def\highlight{1}
\theoremstyle{plain}
\newtheorem{thm1}{\protect\theoremname}[section]
\theoremstyle{remark}
\newtheorem{pthm1}[thm1]{\protect\ptheoremname}
\theoremstyle{plain}
\newtheorem{lem1}[thm1]{\protect\lemmaname}
\theoremstyle{plain}
\newtheorem{obs1}[thm1]{\protect\Observationname}
\theoremstyle{plain}
\newtheorem{inv1}[thm1]{\protect\Invariantname}
\theoremstyle{plain}
\theoremstyle{definition}
\newtheorem{defn1}[thm1]{\protect\definitionname}
\theoremstyle{plain}
\newtheorem{fact1}[thm1]{\protect\factname}
\theoremstyle{remark}
\newtheorem{rem1}[thm1]{\protect\remarkname}
\theoremstyle{plain}
\newtheorem{prop1}[thm1]{\protect\propositionname}
\newenvironment{proof}[1][\protect\proofname]{\par
	\normalfont\topsep6\p@\@plus6\p@\relax
	\trivlist
	\itemindent\parindent
	\item[\hskip\labelsep\scshape #1]\ignorespaces
}{%
	\endtrivlist\@endpefalse
}
\providecommand{\proofname}{Proof}
	\newenvironment{thm}[1][]{%
		\begin{mdframed}[nobreak=true,backgroundcolor=Aquamarine!60]\begin{thm1}[#1]%
			}{\end{thm1}\end{mdframed}%
	}
	\newenvironment{lem}[1][]{%
		\begin{mdframed}[nobreak=true,backgroundcolor=YellowGreen!60]\begin{lem1}[#1]%
		}{\end{lem1}\end{mdframed}%
	}
	\newenvironment{obs}[1][]{%
		\begin{mdframed}[nobreak=true,backgroundcolor=Salmon!60]\begin{obs1}[#1]%
			}{\end{obs1}\end{mdframed}%
	}
	\newenvironment{inv}[1][]{%
		\begin{mdframed}[nobreak=true,backgroundcolor=Salmon!60]\begin{inv1}[#1]%
			}{\end{inv1}\end{mdframed}%
	}
	\newenvironment{fact}[1][]{%
		\begin{mdframed}[nobreak=true,backgroundcolor=Salmon!60]\begin{fact1}[#1]%
			}{\end{fact1}\end{mdframed}%
	}
	\newenvironment{rem}[1][]{%
		\begin{mdframed}[backgroundcolor=Salmon!60]\begin{rem1}[#1]%
			}{\end{rem1}\end{mdframed}%
	}
	\newenvironment{pthm}[1][]{%
	\begin{mdframed}[nobreak=true,backgroundcolor=Salmon!60]\begin{pthm1}[#1]%
		}{\end{pthm1}\end{mdframed}%
	}
	\newenvironment{defn}[1][]{%
		\begin{mdframed}[innerbottommargin=0.1cm,innertopmargin=0.1cm,backgroundcolor=Apricot!60]\begin{defn1}[#1]%
			}{\end{defn1}\end{mdframed}%
	}
	\newenvironment{prop}[1][]{%
		\begin{mdframed}[backgroundcolor=Goldenrod!60]\begin{prop1}[#1]%
			}{\end{prop1}\end{mdframed}%
	}
\let\expandafter\oldproof\csname\string\proof\endcsname
	\let\oldendproof\endproof
	\renewenvironment{proof}[1][\proofname]{%
		\begin{mdframed}[backgroundcolor=lightgray!60]\oldproof[#1]%
		}{\oldendproof\end{mdframed}}
	\newenvironment{thm}[1][]{%
		\begin{thm1}[#1]%
			}{\end{thm1}%
	}
	\newenvironment{lem}[1][]{%
		\begin{lem1}[#1]%
			}{\end{lem1}%
	}
	\newenvironment{obs}[1][]{%
		\begin{obs1}[#1]%
			}{\end{obs1}%
	}
	\newenvironment{inv}[1][]{%
		\begin{inv1}[#1]%
			}{\end{inv1}%
	}
	\newenvironment{rem}[1][]{%
		\begin{rem1}[#1]%
			}{\end{rem1}%
	}
	\newenvironment{pthm}[1][]{%
		\begin{pthm1}[#1]%
			}{\end{pthm1}%
	}
	\newenvironment{defn}[1][]{%
		\begin{defn1}[#1]%
			}{\end{defn1}%
	}
	\newenvironment{prop}[1][]{\begin{prop1}[#1]}{\end{prop1}}
\LinesNumbered \RestyleAlgo{boxruled}
\date{}
\newcommand{\algorithmfootnote}[2][\footnotesize]{%
	\let\old@algocf@finish\@algocf@finish
	\def\@algocf@finish{\old@algocf@finish
		\leavevmode\rlap{\begin{minipage}{\linewidth}
				#1#2
		\end{minipage}}%
	}%
}
\providecommand{\Observationname}{Observation}
\providecommand{\corollaryname}{Corollary}
\providecommand{\definitionname}{Definition}
\providecommand{\lemmaname}{Lemma}
\providecommand{\Invariantname}{Invariant}
\providecommand{\propositionname}{Proposition}
\providecommand{\remarkname}{Remark}
\providecommand{\theoremname}{Theorem}
\providecommand{\ptheoremname}{Theorem}
\providecommand{\factname}{Fact}
\definecolor{darkred}{RGB}{200,0,0}
\begin{document}	

\global\long\def\rank{\ell}

\global\long\def\val{\operatorname{val}}

\global\long\def\ini{\operatorname{ini}}

\global\long\def\chrg{\operatorname{ptr}}

\global\long\def\true{\text{\sc true}}

\global\long\def\false{\text{\sc false}}

\global\long\def\alg{\text{\sc alg}}

\global\long\def\opt{\text{\sc opt}}

\global\long\def\FL{\text{\sc FL}}

\global\long\def\PCFL{\text{\sc PCFL}}

\global\long\def\PCSF{\text{\sc PCSF}}

\global\long\def\SF{\text{\sc SF}}

\global\long\def\SN{\text{\sc SN}}

\global\long\def\MC{\text{\sc MC}}

\global\long\def\MWC{\text{\sc MWC}}

\global\long\def\OP{\text{\sc ND}}

\global\long\def\PCOP{\text{\sc PCND}}

\global\long\def\NWSF{\text{\sc NWSF}}

\global\long\def\nul{\text{\sc null}}

\global\long\def\crit{\text{\sc crit}}

\global\long\def\conn{\text{\sc conn}}

\global\long\def\S{\mathcal{S}}

\global\long\def\T{\mathcal{T}}

\global\long\def\E{\mathcal{E}}

\global\long\def\P{\mathcal{P}}

\global\long\def\C{\mathcal{C}}


\title{Beyond Tree Embeddings -- a Deterministic Framework for Network Design with Deadlines or Delay}
\author{%
	\begin{tabular}{c}
		Yossi Azar\tabularnewline
		\texttt{\footnotesize{}azar@tau.ac.il}\tabularnewline
		{\small{}Tel Aviv University}\tabularnewline
	\end{tabular}\and%
	\begin{tabular}{c}
		Noam Touitou\tabularnewline
		\texttt{\footnotesize{}noamtouitou@mail.tau.ac.il}\tabularnewline
		{\small{}Tel Aviv University}\tabularnewline
\end{tabular}}
\maketitle

\begin{abstract}
	We consider network design problems with deadline or delay. All previous results for these models are based on randomized embedding of the graph into a tree (HST) and then solving the problem on this tree. We show that this is not necessary. In particular, we design a deterministic framework for these problems which is not based on embedding. This enables us to provide deterministic $\text{poly-log}(n)$-competitive algorithms for Steiner tree, generalized Steiner tree, node weighted Steiner tree, (non-uniform) facility location and directed Steiner tree with deadlines or with delay (where $n$ is the number of nodes). 
	
	Our deterministic algorithms also give improved guarantees over some previous randomized results. In addition, we show a lower bound of $\text{poly-log}(n)$ for some of these problems, which implies that our framework is optimal up to the power of the poly-log. Our algorithms and techniques differ significantly from those in all previous considerations of these problems.
\end{abstract}

\newpage{}

\section{Introduction}

In online minimization problems with deadlines, requests are released over a timeline. Each request has an associated deadline, by which it must be served by any feasible solution. The goal of an algorithm is to give a solution which minimizes the total cost incurred in serving the given requests. 

Another model, which generalizes the deadline model, is that of online problems with delay. In those problems, requests again arrive over a timeline. While requests no longer have a deadline, each pending request (i.e. a request which has been released but not yet served) incurs growing delay cost. The total cost of the algorithm is the cost of serving requests plus the total delay incurred over those requests; the delay cost thus motivates the algorithm to serve requests earlier. 

In this paper, we consider classic network design problems in the deadline/delay setting. In the classic (offline) setting of network design, one is given a graph of $n$ nodes and a set of connectivity requests (e.g. pairs of nodes to connect). The input contains a collection of elements (e.g. edges) with associated cost. A request is satisfied by any subset of elements which serves the connectivity request (e.g. a set of edges which connects the requested pair of nodes). A feasible solution for the offline problem is a set of elements which simultaneously satisfies all connectivity requests.

Such an offline network design problem induces an online problem with deadlines/delay as follows. The input graph is again given in advance. The requests, however, arrive over a timeline (with either a deadline or a delay function). At any point in time, the algorithm may choose to  \emph{transmit} an offline solution (i.e. a set of elements); each pending request that is served by the transmitted solution in the offline setting is served by this transmission in the online setting. In keeping with previous work on these problems, this paper considers the \emph{clairvoyant} model, in which the deadline of a request -- or its future accumulation of delay -- is revealed to the algorithm upon the release of the request. 

We next discuss such induced network design problems with deadlines/delay that have been previously considered. The usual solution for such problems is to randomly embed the general input into a tree, incurring a distortion to the metric space, then solving the problem on the resulting tree. In this paper, we present frameworks which \emph{bypass} this usual mode of work, enabling improved guarantees, generality and simplicity.

\paragraph{Steiner tree with deadlines/delay.}In this problem, requests are released on nodes of a graph with costs to the edges. Serving these requests comprises transmitting a subgraph which connects the request and a designated root node of the graph. This problem was studied in the case in which the graph is a tree -- in this case it is called the \textbf{multilevel aggregation problem} (first presented in \cite{DBLP:conf/esa/BienkowskiBBCDF16}). With $D$ the depth of the input tree, the best known results for multilevel aggregation are $O(D)$ competitiveness for the deadline model by Buchbinder \textit{et al.}~\cite{DBLP:conf/soda/BuchbinderFNT17}, and $O(D^2)$ competitiveness for the delay model in ~\cite{DBLP:conf/focs/AzarT19}. Thus, a simple algorithm for general Steiner tree with deadlines/delay based on metric tree embedding for this problem is to embed a general graph into a tree, and then using the best multilevel aggregation algorithms; in both the deadline and delay case, this can be seen to yield $O(\log^2 n)$-competitive randomized algorithms.

%

\paragraph{Facility location with deadlines/delay.}In this problem, presented in \cite{DBLP:conf/focs/AzarT19}, the input graph has weights to the edges and facility costs to the nodes. Requests arrive on the nodes of the graph, to be served by transmissions. A transmission consists of a set of facilities $U$, and a collection of pending requests $Q$. The transmission serves the requests of $Q$, and has a cost which is the sum of facility costs of the nodes in $U$, plus the sum of distances from each request of $Q$ to the closest facility in $U$. The best known algorithms for both the deadline and delay variants of this problem, also based on tree embedding, are randomized and $O(\log^2 n)$ competitive -- but apply only to the uniform problem, where the nodes' facility costs are identical.
%

%
%

This paper introduces a general deterministic framework for solving such network design problems on general graphs, with deadlines or with delay, which does not rely on tree embeddings. This framework obtains improved results to both previous problems, as well as new results for Steiner forest, nonuniform facility location, multicut, Steiner network, node-weighted Steiner forest and directed Steiner tree.

\subsection{Our Results}

We now state specifically our results for network design problems with deadlines/delay. Let $\E$ be the collection of elements in an offline network design problem. In this paper, we show the following results.
\begin{enumerate}
	\item If there exists a deterministic (randomized) $\gamma$-approximation for the offline network design problem which runs in polynomial time, then there exists an $O(\gamma \log |\E|)$-competitive deterministic (randomized) algorithm for the induced problem with deadlines, which also runs in polynomial time.
	
	\item If there exists a deterministic (randomized) $\gamma$-approximation for the \emph{prize-collecting} variant of the offline network design problem, then there exists an $O(\gamma \log |\E|)$-competitive deterministic (randomized) algorithm for the induced problem with delay, which also runs in polynomial time.
\end{enumerate}

Each of those results is obtained through designing a framework which encapsulates the given approximation algorithm.

We consider several network design problems on a graph of $n$ nodes, which are described in Subsection \ref{subsec:Intro_ConsideredProblems}. Plugging into our frameworks previously-known offline approximations (for either the original or prize-collecting variants) yields the results summarized in Table \ref{tab:Intro_ResultsTable}. Except for the algorithm for directed Steiner tree (which is randomized and runs in quasi-polynomial time due to the encapsulated approximation), all algorithms are deterministic and run in polynomial time.

\begin{table}[h!]
	\begin{center}
		\caption{Framework Applications}
		\label{tab:Intro_ResultsTable}
		\begin{tabular}{l|c|c} 
			& With Deadlines & With Delay\\
			\hline
			Edge-weighted Steiner forest & $O(\log n)$ & $O(\log n)$\\
			Multicut & $O(\log^2 n)$ & $O(\log^2 n)$\\
			Edge-weighted Steiner network & $O(\log n)$ & $O(\log n)$\\
			Node-weighted Steiner forest &$O(\log^2 n)$ & $O(\log^2 n)$\\
			Facility location (non-uniform)& $O(\log n)$ & $O(\log n)$\\			
			Directed Steiner tree & $O\left(\frac{\log^3 n}{\log \log n}\right)$ & ? \footnotemark\\			
			
		\end{tabular}
	\end{center}
\end{table}
\footnotetext{We could find no approximation result for prize-collecting directed Steiner tree. We conjecture that such an approximation algorithm exists which loses only a constant factor apart from the best approximation for the original offline problem, in which case we obtain an identical guarantee to the deadline case.}

Our frameworks improve on previous results in the following way:
\begin{enumerate}
	\item For Steiner tree with deadlines/delay, we give $O(\log n)$-competitive deterministic algorithms, while the best previously-known algorithms are randomized and $O(\log^2 n)$-competitive \cite{DBLP:conf/esa/BienkowskiBBCDF16,DBLP:conf/focs/AzarT19}. 
	
	\item For facility location with deadlines/delay, the best previously-known algorithms are randomized, $O(\log^2 n)$-competitive \cite{DBLP:conf/focs/AzarT19}, and apply only for the uniform case (where facilities have the same opening cost). We give $O(\log n)$-competitive, deterministic algorithms which apply also for the non-uniform case.
\end{enumerate}

For node-weighted Steiner forest and directed Steiner tree, our results are relatively close to the optimal solution -- in appendix we show an $\Omega(\sqrt{\log n})$ lower bound on competitiveness through applying the lower bound of \cite{DBLP:journals/corr/abs-1807-08543} for set cover with delay. As an information-theoretic lower bound, it applies for algorithms with unbounded computational power.

While the common regime in problems with deadlines/delay is that the number of requests $k$ is unbounded and the number of nodes $n$ is finite, we also address the opposite regime in which $k$ is small -- the latter being more popular in classic network design problems. We achieve the best of both worlds -- namely, we show a modification to the deadline/delay frameworks which replaces $n$ by $\min\{n,k\}$ in the competitiveness guarantees. This modification applies to all problems considered in this paper except for facility location, but conjecture that a similar algorithm would apply there as well.

\subsection{Our Techniques}
%

The \textbf{deadline framework} performs services (i.e. transmissions) of various costs; the logarithmic class of the cost of a service is called its level. Pending requests also have levels, which are maintained by the algorithm. Whenever a pending request of level $j$ reaches its deadline, a service of level $j+1$ starts. This service is only meant to serve requests of lower or equal level (we call such requests eligible for the service). After a service concludes, the level of remaining eligible requests is raised to that of the service. Intuitively, this means that once a pending request has seen a service of cost $2^j$, it refuses to be served by any cheaper service. This makes use of the aggregation property -- higher-cost services tend to be more cost-effective per request.

When a service is triggered, it has to choose which of the eligible requests to serve, subject to its budget constraint. The service prioritizes requests of earlier deadline, adding them until the budget is exceeded. The cost of serving those requests is estimated using the encapsulated approximation algorithm.

The main idea of levels exists in the \textbf{delay framework} as well. However, handling general delay functions requires more intricate procedures -- namely, for triggering a service and for choosing which requests to serve. The delay framework maintains an \emph{investment counter} for each pending request, which allows a service to pay for the delay of a request (i.e. the delay cost is charged to the budget of the service). A service is started when a large amount of delay for which no service has paid has accumulated on the requests of a particular level $j$ -- the started service is of level $j+1$. 

When choosing which of the eligible requests to serve, the algorithm considers the first point in time in which an eligible request would accumulate delay which is not paid for by its investment counter. Using its budget of $2^j$, it then attempts to push back this point in time farthest into the future -- it does so either by raising the investment counters, or by serving requests. The way to balance these two methods is problem-specific -- the framework thus formulates a prize-collecting instance, where the penalties represent future delay, and calls the encapsulated prize-collecting approximation algorithm to solve it.

\subsection{Considered Problems}
\label{subsec:Intro_ConsideredProblems}
In this paper, we consider the induced deadline/delay problems of several network design problems. We now introduce those problems.

\textbf{Steiner tree and Steiner forest.} In the Steiner forest problem, each request is a pair of terminals (i.e. nodes in the input graph), and the elements are the edges. A request is satisfied by a set of edges if the two terminals of the request are connected by those edges. The Steiner tree problem is an instance of Steiner forest in which the input also designates a specific node as the root, such that every request contains the root as one of its two terminals. A special case of the Steiner tree problem is the multilevel aggregation problem, in which the graph is a tree.

We also consider a stronger variant of the Steiner forest problem, in which each request is a \emph{subset} of nodes to be connected. While this problem is identical to the original Steiner forest in the offline setting (as the subset can be broken down to pairs), their induced deadline/delay problems are substantially different.

\textbf{Multicut.} In the offline multicut problem, each request is again a pair of terminals, and the elements are again the edges. A request is satisfied by a set of edges which, if removed from the original graph, would disconnect the pair of terminals. 

As in Steiner forest, it makes sense to define the stronger variant in which each request is a subset of nodes which must be disconnected from each other -- while both variants are equivalent in the offline setting, their induced deadline/delay problems are distinct.

\textbf{Node-weighted Steiner forest.} In this problem, the elements are the nodes, rather than edges. Each request is again a pair of terminals, and is satisfied by a solution which contains (in addition to the terminals themselves) nodes that connect the pair of terminals.

\textbf{Edge-weighted Steiner network.} This problem is identical to the Steiner forest problem, except that each request $q$ comes with a demand $f(q) \in \mathbb{N}$. A request is satisfied by a set of edges that contains $f(q)$ edge-disjoint paths between the terminals.

\textbf{Directed Steiner tree.} This problem is identical to the Steiner tree problem, except that the graph is now directed. Each pair request, where one of its terminals is the root, is satisfied by a set of edges that contain a directed path from the root to the other terminal.

\textbf{Facility location.} In the facility location problem, the requests are on the nodes of the graph. The elements are the nodes of the graph, upon which facilities can be opened. The cost of the solution is the total cost of opened facilities (opening cost) plus the distances from each request to the closest facility (connection cost).

The connection cost prevents facility location from being strictly compliant to the analysis of the framework we present. However, we nonetheless show that the framework itself applies to facility location as well.

\subsection{Related Work}

The classic online consideration of network design problems has been studied in numerous papers (e.g. \cite{DBLP:journals/siamdm/ImaseW91,DBLP:journals/algorithmica/Fotakis08,Berman1997,6108170,DBLP:journals/siamcomp/GuptaKR12,DBLP:journals/talg/AlonAABN06}). In this genre of problems, the connectivity requests arrive one after the other in a sequence (rather than over time), and must be served immediately by buying some elements which serve the request. These bought elements remain bought until the end of the sequence, and can thus be used to serve future requests. This is in contrast to the deadline/delay model considered in this paper, where the elements are \emph{transmitted} rather than bought, and thus future use of these elements requires transmitting them again (at additional cost). 

There is no connection between the classic online variant of a problem and the deadline/delay variant -- that is, neither problem is reducible to the other. There could be a stark difference in competitiveness between the two models, which depends on the network design problem. For some problems, the classic online admits much better competitive algorithms -- for example, in the multilevel aggregation problem, the classic online problem is Steiner tree on a tree, which is trivially $1$-competitive (while the best known algorithms for multilevel aggregation with deadlines/delay have logarithmic ratio). For other problems, the opposite is true -- for classic online directed Steiner tree, a lower bound of $\Omega(n^{1-\epsilon})$ exists on the competitiveness of any deterministic algorithm, for every $\epsilon>0$. In contrast, for directed Steiner tree with deadlines/delay, we present in this paper polylogarithmic-competitive algorithms.

The multilevel aggregation problem was first considered by Bienkowski \textit{et al.}~\cite{DBLP:conf/esa/BienkowskiBBCDF16}, who gave an algorithm with competitiveness which is exponential in the depth $D$ of the input tree, for the delay model. This result was then improved, first to $O(D)$ for the deadline model by Buchbinder \textit{et al.}~\cite{DBLP:conf/soda/BuchbinderFNT17}, and then to $O(D^2)$ for the general delay model in~\cite{DBLP:conf/focs/AzarT19}. These results yield $O(\log^2 n)$-competitive randomized algorithms for Steiner tree with deadlines/delay on general graphs, through metric embeddings; for more general Steiner problems (e.g. Steiner forest, node-weighted Steiner tree) no previously-known algorithm exists.

The multilevel aggregation also generalizes some past lines of work -- the TCP acknowledgement problem \cite{TCPAck_DBLP:conf/stoc/DoolyGS98,TCPAck_DBLP:journals/algorithmica/KarlinKR03,TCPAck_DBLP:conf/esa/BuchbinderJN07} is multilevel aggregation with $D=1$, and the joint replenishment problem \cite{JointRep_DBLP:conf/soda/BuchbinderKLMS08, JointRep_DBLP:journals/algorithmica/BritoKV12, JointRep_DBLP:conf/soda/BienkowskiBCJNS14} is multilevel aggregation with $D=2$.

Another problem studied in the context of delay is that of matching with delay \cite{Matching_DBLP:conf/approx/AshlagiACCGKMWW17,Matching_DBLP:conf/ciac/EmekSW17,Matching_DBLP:conf/stoc/EmekKW16,Matching_DBLP:conf/waoa/AzarF18,Matching_DBLP:conf/waoa/BienkowskiKLS18,Matching_DBLP:conf/waoa/BienkowskiKS17}. In this problem, requests arrive on points of a metric space, and gather delay until served. The algorithm may choose to serve two pending requests, at a cost which is the distance between those two requests in the metric space. This problem seems hard without making assumptions on the delay function, and thus is usually considered when the delay functions are identical and linear.

The $k$-server problem in the deadline/delay context has also been studied \cite{DBLP:conf/stoc/AzarGGP17,DBLP:conf/sirocco/BienkowskiKS18,DBLP:conf/focs/AzarT19}. In this problem, $k$ servers exist in a metric space, and requests again arrive on points of the space, gathering delay. To serve a request, the algorithm must move a server to that request, paying the distance between the server and the request. 

\section{Model and Deadline Framework}
\label{sec:FWD}

We are given a set $\E$ of elements, with costs $c:\E \to \mathbb{R}^+$. Requests are released over time, and we denote the release time of a request $q$ by $r_q$. Each request has a deadline $d_q$, by which it must be served. At any point in time, the algorithm may transmit a subset of elements $E\subseteq \E$, at a cost $\sum_{e\in E} c(e)$.

Each request $q$ is satisfied by a collection of subsets $X_q \subseteq 2^{\E}$ which is \emph{upwards-closed} -- that is, if $E_1\subseteq E_2 \subseteq \E$ and we have that $E_1 \in X_q$ then $E_2 \in X_q$. If the algorithm transmits the set of elements $E$, then all pending requests $q$ such that $E \in X_q$ are served by that transmission.

To give a concrete example of this abstract structure, consider the Steiner forest problem. In this problem, the elements $\E$ are the edges of a graph. For a request $q$ for the terminals $(u_1,u_2)$, the collection $X_q$ is the collection of edge sets $E'$ such that $(u_1, u_2)$ are in the same connected component in the spanning subgraph with edges $E'$.

One can also look at the corresponding offline problem -- given a set of requests $Q$, find a subset of elements $E'$ of the minimal total cost such that $E'\in X_q$ for every $q\in Q$. 

Now, consider a class of problems of this form -- such as Steiner tree for example -- and denote this class by $\OP$. The main result of this section is the following.

\begin{thm}
	\label{thm:FWD_Competitiveness}
	If there exists a $\gamma$ deterministic (randomized) approximation algorithm for $\OP$ which runs in polynomial time, then there exists an $O(\gamma\log |\E|)$-competitive deterministic (randomized) algorithm for $\OP$ with deadlines, which also runs in polynomial time.
\end{thm}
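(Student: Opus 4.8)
The plan is to analyze the deadline framework described in the techniques section, which groups services into levels by logarithmic cost class and raises request levels after services. Let me set up the key accounting. For a request $q$, maintain a level $\rank_q \in \mathbb{N}$, initialized so that the first service it participates in is triggered when it reaches its deadline; whenever a pending request of level $j$ hits its deadline, trigger a service of level $j+1$ with budget $\Theta(2^{j+1})$, which is allowed to serve only \emph{eligible} requests (those of level $\le j+1$), prioritizing requests in increasing order of deadline and adding them until the budget $2^{j+1}$ would be exceeded, where the cost of serving a chosen set is estimated via the $\gamma$-approximation for $\OP$. After the service concludes, every remaining eligible request has its level raised to $j+1$.

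First I would bound the algorithm's cost. Every service of level $j$ costs $O(2^j \gamma)$ (the budget, times the approximation overhead for realizing the chosen subset), so it suffices to bound, for each level $j$, the number of level-$j$ services by $O(\log|\E|)$ times the number of "level-$j$ events in $\opt$". The key structural claim is an \textbf{upper bound on the number of levels}: a request's level never exceeds $O(\log|\E|)$, because the singleton cost $\max_e c(e)$ and the unit cost $\min_e c(e)$ (rescaled) differ by a factor polynomial in $|\E|$ when all costs are comparable — and after normalizing, a service of level higher than $\log(|\E| \cdot c_{\max})$ could afford the entire element set $\E$, so it serves all eligible requests and no request ever climbs further. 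This caps the number of distinct levels at $O(\log|\E|)$, which is where the competitive factor comes from. Then within each fixed level $j$, I would charge level-$j$ services to $\opt$: when a level-$j$ service is triggered by request $q$ hitting its deadline, $\opt$ must itself have transmitted some set serving $q$ at or before $d_q$; I use the budget-exceeding property (the chosen set plus the next request costs $> 2^j$, so the "leftover" eligible requests not served form an instance whose optimal cost exceeds $2^{j}$, hence whose $\opt$-cost in the relevant time window is $\Omega(2^j)$) to argue that consecutive level-$j$ services correspond to disjoint chunks of $\opt$'s spending of magnitude $\Omega(2^j)$ each. Summing over levels gives $\alg = O(\gamma \log|\E|) \cdot \opt$.

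The main obstacle I anticipate is the charging argument within a level — specifically, showing that two successive level-$j$ services can be mapped to \emph{disjoint} portions of $\opt$'s cost. The subtlety is that a request contributing to the second service might have been eligible for the first; the level-raising mechanism is exactly what prevents this from causing over-charging (a request served-or-skipped by the first service has had its level raised past $j$, so it cannot be the trigger of a later level-$j$ service), but making this precise requires a careful invariant: at any time, the eligible pending requests of level exactly $j$ form a set whose release times postdate the previous level-$j$ service, so the optimum restricted to them is "fresh." I would formalize this with an invariant lemma stated and proved before the main charging. A secondary technical point is that the $\gamma$-approximation is used only as a cost \emph{estimator} for choosing which requests fit in the budget and as a \emph{solver} for producing the transmitted set — I must check that using an approximate (rather than exact) estimate only costs a constant factor in the budget threshold, and that randomized $\gamma$-approximations yield the guarantee in expectation by linearity (the level structure and charging are deterministic; only the per-service cost is random). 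Finally, polynomial running time is immediate since each deadline triggers one service, each service makes one call to the polynomial-time approximation plus a greedy scan over requests.
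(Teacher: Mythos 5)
Your proposal has a central flaw: you bound the \emph{total} number of distinct levels by $O(\log|\E|)$ and then try to win by a per-level disjoint-chunks argument. That level bound is false. The level of a request is $\lfloor \log(c(S_q)/\gamma)\rfloor$, which depends on the element costs $c:\E\to\mathbb{R}^+$, and nothing in the problem bounds the aspect ratio of $c$. Two requests whose cheapest singleton solutions cost $1$ and $2^{M}$ respectively will be at levels roughly $0$ and $M$, with $M$ unbounded; normalizing $c$ away changes $\opt$ by the same scaling factor and does not help. Your observation that ``a service of level higher than $\log(|\E|\cdot c_{\max})$ could afford all of $\E$'' is true, and the paper uses it (it buys all elements of cost $\le 2^{\rank_\lambda}/|\E|$ into $E_0$ in Line~\ref{line:FWD_BuyCheapEdges}), but it only says such a high-level service is \emph{perfect} and hence never charged — it does not cap the number of levels that appear, nor the cost of primary services at high levels.

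The paper gets the $O(\log|\E|)$ factor in a fundamentally different way: not by bounding the number of levels globally, but by bounding, for each \emph{optimal} service $\lambda^*$, the number of algorithm services of each level that can charge it. Concretely, it splits services into primary ($\Lambda_1$) and charged ($\Lambda^\circ$), upper-bounds $\alg$ by $O(\gamma)\cdot(\sum_{\lambda\in\Lambda_1}2^{\rank_\lambda}+\sum_{\lambda\in\Lambda^\circ}2^{\rank_\lambda})$ (Lemma~\ref{lem:FWD_ALGUpperBound}), and then shows the primary sum is $O(\opt)$ (Lemma~\ref{lem:FWD_PrimaryLowerBoundsOPT}, via a cumulative-max telescoping over levels and disjoint deadline intervals) and the charged sum is $O(\log|\E|)\cdot\opt$ (Lemma~\ref{lem:FWD_ChargeLowerBoundsOPT}). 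The latter rests on Lemma~\ref{lem:FWD_UniqueClass}: for any intersecting set of requests (in particular $Q_{\lambda^*}$), among charged services that touch it there is at most one per level. Then, with $\ell=\lfloor\log c(\lambda^*)\rfloor$, charges from levels $\le\ell$ telescope to $O(c(\lambda^*))$, levels in $(\ell,\ell+\log|\E|]$ contribute at most $c(\lambda^*)$ each and there are $O(\log|\E|)$ of them, and levels $>\ell+\log|\E|$ contribute zero because all of $\lambda^*$'s elements are already in $E_0^\lambda$. The capping ``$\min\{2^{\rank_\lambda},\OP^*_{E_0^\lambda\gets 0}(Q_{\lambda\cap\lambda^*})\}$'' and the cheap-element purchase are both essential to this last point; your sketch omits both. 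Your instinct about level-raising preventing overcharging within a level is on the right track, but it plays out as the intersecting-set/unique-class lemma, not as a global level cap. As it stands, your argument does not yield the competitiveness claim.
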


\begin{rem}
	\label{rem:FWD_QuasiPolynomialTime}
	If the approximation algorithm runs in \emph{quasi-polynomial} time, then the online algorithm also runs in quasi-polynomial time.
\end{rem}

\begin{rem}
	\label{rem:FWD_LasVegas}
	In this paper, we consider randomized approximation algorithms which have deterministic approximation guarantees and expected running time guarantees. Converting a randomized algorithm of \emph{expected} approximation guarantee and \emph{deterministic} running time to the format we consider can be achieved with repeated running of the algorithm until the resulting approximation is at most a factor of $2$ from the expected guarantee -- Markov's inquality ensures that the expected running time of this new algorithm is small.
	
	The only requirement for this conversion is that the algorithm is able to know whether its approximation meets the expected guarantee -- this requirement is met, for example, in all approximation algorithms based on LP solving + rounding (and in particular, all randomized algorithms in this paper).
\end{rem}

For a set of requests $Q$, we denote the solution for the offline problem returned by the $\gamma$ approximation by $\OP(Q)$. We also denote the optimal solution by $\OP^*(Q)$.

\subsection{The Framework}

We now present a framework for encapsulating an approximation algorithm for $\OP$ to obtain a competitive algorithm for $\OP$ with deadlines, thus proving Theorem \ref{thm:FWD_Competitiveness}.

\paragraph*{Calls to approximation algorithm.} The framework makes calls to the approximation algorithm for $\OP$ -- we denote such a call on a set of requests $Q$ by $\OP(Q)$ (the universe of elements $\E$, and the elements' costs, are identical to those of the online problem). Similarly, we denote the optimal solution for this set of requests by $\OP^*(Q)$. 

The framework also makes calls to $\OP$ where the costs of the elements are modified -- namely, that the cost of some subset of elements $E_0\subseteq \E$ is set to $0$. We use $\OP_{E_0 \gets 0}$ to denote such calls. 

When calling the approximation algorithm, we store the resulting solution (i.e. subset of elements) in a variable. If a solution is stored in a variable $S$, we use $c(S)$ to refer to the cost of that solution. Note that this cost is not necessarily the sum of costs of elements in that solution -- it is possible that the solution is for an instance in which the costs of some set of elements $E_0$ are set to $0$.

\paragraph*{Algorithm's description.} The framework is given in Algorithm \ref{alg:FWD_Algorithm}. For each pending request $q$, the algorithm maintains a level $\rank_q$. Upon the arrival of a new request $q$, the function $\UponRequest$ is called. This function assigns the initial value of the level of $q$, which is initially supposed to be the logarithmic class of the cost of the least expensive (offline) solution for $q$ -- the algorithm approximates this by making a call to the approximation algorithm on $\{q\}$, then dividing by the approximation ratio $\gamma$. Over time, the level of a request may increase.

Whenever a deadline of a pending request is reached, the function $\UponDeadline$ is called, and the algorithm starts a service. Services also have levels -- the level of a service $\lambda$, denoted by $\rank_\lambda$, is always $\rank_q +1$, where $q$ is the request which triggered the service. Intuitively, the service $\lambda$ is ``responsible'' for all pending requests of level at most $\rank_\lambda$ -- these requests are called the \emph{eligible} requests for $\lambda$. Overall, the service spends $O(\gamma \cdot 2^{\rank_\lambda})$ cost solely on serving these eligible requests.

The service constructs a transmission, which occurs at the end of the service. First, the service adds to the transmission all ``cheap'' elements -- those that cost at most $\frac{2^{\rank_\lambda}}{|\E|}$. Then, the service decides which of the eligible requests to serve, using the following procedure. It considers the requests by order of increasing deadline, adding them to the set of requests to serve. This process stops when either the cost of serving those requests, as estimated by the approximation algorithm, exceeds the budget ($O(\gamma \cdot 2^{\rank_\lambda})$), or the requests are all served. 

Since the amount by which the budget was exceeded in the ultimate iteration is unknown, the service transmits the solution found in the \emph{penultimate} iteration, in addition to a "singleton" solution to the last request to be served. 

The final step in the service is to ``upgrade'' the level of all eligible requests which are still pending after the transmission of the service. The level of those requests is assigned the level of the service. 

\begin{algorithm}
	\renewcommand{\algorithmcfname}{Algorithm}
	\caption{\label{alg:FWD_Algorithm} Network Design with Deadlines Framework} 
	
	\EFn{\UponRequest{$q$}}{
		
		Set $ S_q \gets \OP(\{q\})$
		
		Set $ I_q \gets \frac{c(S_q)}{\gamma} $.
		
		Set $\rank_q \gets \left\lfloor\log \left(I_q\right)\right\rfloor$ \tcp*[h]{the level of the request}
	}
	
	\BlankLine
	
	\EFn(\tcp*[h]{upon the deadline of a pending request $q$}){\UponDeadline{$q$}}{
		
		Start a new service $\lambda$, which we now describe.
		
		Set $\rank_\lambda \gets \rank_q + 1$. \label{line:FWD_SetServiceLevel}
		
		Set $Q_\lambda \gets \emptyset$.
		
		\BlankLine
		
		\tcp*[h]{buy all cheap elements}
		
		Set $E_0 \gets \left\{ e\in \E \middle| c(e) \le \frac{2^{\rank_\lambda}}{|\E|} \right\}$.	 	\label{line:FWD_BuyCheapEdges}
		
		\BlankLine
		
		\tcp*[h]{add eligible requests by order of deadline, until budget is exceeded}
		
		Set $S \gets \emptyset$.
		
		\While{there exists a pending $q' \notin Q_\lambda$ such that $\rank_{q^\prime} \le \rank_\lambda$}{\label{line:FWD_AddingRequestsToService}
			
			Let $q_{\text{last}} \notin Q_\lambda$ be the pending request with the earliest deadline such that $\rank_{q^\prime} \le \rank_\lambda$.
			
			Set $Q_\lambda \gets Q_\lambda \cup \{q_{\text{last}}\}$
			
			Set $S' \gets \OP_{E_0 \gets 0}(Q_\lambda)$.

			\lIf{$c(S') \ge \gamma\cdot 2^{\rank_\lambda}$}{\Break}\label{line:FWD_ServiceIsFull}
			
			Set $S \gets S'$.
			
		}
	
		\BlankLine
	
		Transmit the solution $E_0 \cup S \cup S_{q_{\text{last}}}$. \tcp*[h]{serve $Q_\lambda$} \label{line:FWD_ServeRequests}
		
		\BlankLine
		
		\tcp*[h]{upgrade still-pending requests to service's level}
		
		\ForEach{pending request $q^\prime$ such that $\rank_{q^\prime} \le \rank_\lambda$}{
	
			Set $\rank_{q^\prime} \gets \rank_\lambda $ \label{line:FWD_ServiceSetsRequestLevel}
			
		}
	}
\end{algorithm}

\subsection{Analysis}
To prove Theorem \ref{thm:FWD_Competitiveness}, we require the following definitions.

\subsubsection*{\emph{Definitions and Algorithm's Properties}}

Before delving into the proof of Theorem \ref{thm:FWD_Competitiveness}, we first define some terms used throughout the analysis, and prove some properties of the algorithm.

For a service $\lambda$, we call the value set to $\rank_{\lambda}$ the \emph{level} of $\lambda$; observe that this value does not change once defined. Similarly, for a request $q$, we call $\rank_q$ the level of $q$. Note that unlike services, the level of a request may change over time (more specifically, the level can be increased).

\begin{defn}[Service Pointer]
	Let $q$ be a request. We define $\chrg_q$ to be the last service $\lambda$ such that $\lambda$ sets $\rank_q \gets \rank_\lambda$ in Line \ref{line:FWD_ServiceSetsRequestLevel}. If there is no such service, we write $\chrg_q = \nul$. Similarly, we define $\chrg_q(t)$ to be the last service $\lambda$ before time $t$ such that $\lambda$ sets $\rank_q \gets \rank_\lambda$ in Line \ref{line:FWD_ServiceSetsRequestLevel} (with $\chrg_q(t)=\nul$ if there is no such service).
\end{defn}

\begin{defn}[Eligible Requests]
	\label{defn:FWD_EligibleRequest}
	Consider a service $\lambda$ and a request $q$ which is pending upon the start of $\lambda$, and has $\rank_q \le \rank_\lambda$ at that time. We say that $q$ was \emph{eligible} for $\lambda$.
\end{defn}

\begin{defn}[Types of Services]
	\label{defn:FWD_ServiceTypes}
	For a service $\lambda$, we say that:
	\begin{enumerate}
		\item $\lambda$ is \emph{charged} if there exists some future service $\lambda'$, which is triggered by a pending request $q$ reaching its deadline such that $\chrg_{q} (t_{\lambda'}) = \lambda$. We say that $\lambda'$ \emph{charged} $\lambda$.
		
		\item $\lambda$ is \emph{imperfect} if the $\Break$ command of Line \ref{line:FWD_ServiceIsFull} was reached in $\lambda$. Otherwise, we say that $\lambda$ is \emph{perfect}.
		
		\item $\lambda$ is \emph{primary} if, when triggered by the expired deadline of the pending request $q$, this request $q$ has $\chrg_q (t_\lambda) = \nul$. Otherwise, $\lambda$ is \emph{secondary}.
	\end{enumerate}
\end{defn}

A visualization of a possible set of services can be seen in Figure \ref{fig:FWD_ServiceDiagram}.

\begin{figure}[tb]
	\begin{center}
		\includegraphics{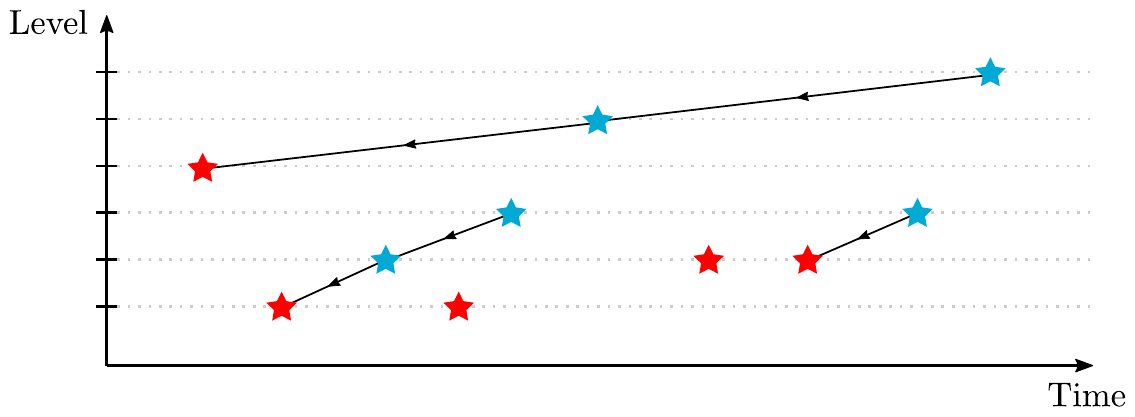}
	\end{center}

	This figure shows a possible set of services in a run of the algorithm. Each service is denoted by a star, where the location of the star indicates the time and level of the service. Primary services are denoted by red stars, and secondary services are denoted by blue stars. Each secondary service charges a previous service, of level one below its own; this charging is denoted by a directed edge from the secondary service to the charged service.
	
	Since every service can charge -- or be charged -- at most once,  the edges form disjoint paths. A property maintained by the algorithm is that a service ``dominates'' the quadrant of lesser-or-equal level and time -- once such a service occurs, no future secondary service would charge a service in this quadrant.
	
	\caption{\label{fig:FWD_ServiceDiagram} Visualization of Services}
\end{figure}

Fix any input set of requests $Q$. We denote by $\Lambda$ the final set of services by the algorithm. For every service $\lambda \in \Lambda$, we denote by $Q_\lambda$ the set of requests served by $\lambda$ (this is identical to the final value of the variable $Q_\lambda$ in the algorithm). We define $c(\lambda)$ to be the cost of the service $\lambda$. For any subset $\Lambda' \subseteq \Lambda$, we also write $c(\Lambda')=\sum_{\lambda\in \Lambda'}c(\lambda)$. Note that $\alg = c(\Lambda)$. 

We denote the set of primary services made by the algorithm by $\Lambda_1$, and the set of secondary services by $\Lambda_2$, such that $\Lambda = \Lambda_1 \cup \Lambda_2$. We denote the set of charged services by $\Lambda^\circ$.

\begin{prop}
	\label{prop:FWD_UniqueCharge}
	Each service $\lambda \in \Lambda^\circ$ is charged by at most one service.
\end{prop}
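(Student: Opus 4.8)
The plan is to argue that the charging relation is injective on $\Lambda^\circ$ by tracking the service pointer $\chrg_q$. Suppose for contradiction that a service $\lambda \in \Lambda^\circ$ is charged by two distinct services $\lambda'$ and $\lambda''$, with $\lambda'$ occurring before $\lambda''$. By definition, $\lambda'$ is triggered by a pending request $q'$ reaching its deadline with $\chrg_{q'}(t_{\lambda'}) = \lambda$, and similarly $\lambda''$ is triggered by a request $q''$ with $\chrg_{q''}(t_{\lambda''}) = \lambda$. In particular $q'$ and $q''$ must be distinct, since a request is served by the transmission of the very service its deadline triggers (Line \ref{line:FWD_ServeRequests}), so a single request cannot trigger two services. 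Also note that $\rank_{\lambda'} = \rank_{q'} + 1 = \rank_\lambda + 1 = \rank_{q''} + 1 = \rank_{\lambda''}$, since $\chrg_{q'}(t_{\lambda'}) = \lambda$ forces $\rank_{q'} = \rank_\lambda$ at time $t_{\lambda'}$ (and likewise for $q''$).

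First I would observe that $q''$ must still be pending at time $t_{\lambda'}$: if $q''$ had been served before $t_{\lambda'}$ it could not later trigger $\lambda''$. I would also argue $\chrg_{q''}(t_{\lambda'}) = \lambda$ already, i.e. that $\lambda$ is the last service to have set $q''$'s level before $t_{\lambda'}$ as well — this is because no service between $\lambda$ and $\lambda''$ changes $\rank_{q''}$ (that is exactly what $\chrg_{q''}(t_{\lambda''}) = \lambda$ means), and $\lambda'$ is one such intermediate service. Now consider the service $\lambda'$ itself. Its level is $\rank_\lambda + 1$, and $q''$ is pending at $t_{\lambda'}$ with $\rank_{q''} = \rank_\lambda \le \rank_{\lambda'}$, so $q''$ is eligible for $\lambda'$ by Definition \ref{defn:FWD_EligibleRequest}. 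Hence after $\lambda'$ transmits, one of two things happens: either $q''$ was served by $\lambda'$ (contradicting that $q''$ is still pending later to trigger $\lambda''$), or $q''$ survives, in which case the final loop of $\lambda'$ (Line \ref{line:FWD_ServiceSetsRequestLevel}) sets $\rank_{q''} \gets \rank_{\lambda'}$, making $\chrg_{q''}(t) = \lambda'$ for all $t > t_{\lambda'}$. But then $\chrg_{q''}(t_{\lambda''}) = \lambda' \ne \lambda$, contradicting the assumption that $\lambda''$ charges $\lambda$.

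The one gap to close carefully is the claim that a request which is eligible for a service but not served by it necessarily gets its level upgraded: this is precisely the content of the last \texttt{\ForEach} loop, which iterates over every pending request $q'$ with $\rank_{q'} \le \rank_\lambda$ — and eligibility of $q''$ for $\lambda'$ combined with $q''$ still being pending after $\lambda'$'s transmission places $q''$ in this loop (its level is still $\le \rank_{\lambda'}$ since the transmission does not change request levels). I expect the main subtlety — the "obstacle" — to be bookkeeping around timing: making sure that "pending at $t_{\lambda'}$", "eligible for $\lambda'$", and "not served by $\lambda'$" are all correctly linked, and that the service pointer genuinely records the \emph{last} level-setting event, so that the upgrade performed by $\lambda'$ overwrites whatever $\lambda$ did. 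Once that chain of implications is laid out, the contradiction is immediate, and the proposition follows.
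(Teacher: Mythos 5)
Your proposal is correct and follows essentially the same route as the paper: show the later charging request $q''$ was pending and eligible at the earlier charging service $\lambda'$, hence $\lambda'$ would have either served $q''$ or overwritten $\chrg_{q''}$ to $\lambda'$, contradiction either way. The paper compresses this into the one-line observation that immediately after $\lambda_1$ every still-pending request that was present at $t_{\lambda_1}$ has level at least $\rank_{\lambda_1} = \rank_{\lambda}+1$; your version just unfolds that observation by tracking $\chrg_{q''}$ explicitly, which is fine.
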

\begin{proof}
	Assume for contradiction that $\lambda$ is charged by both $\lambda_1$ and $\lambda_2$, at times $t_1$ and $t_2$ respectively, and assume without loss of generality that $t_1 <t_2$. $\lambda_2$ charged $\lambda$ due to the pending request $q_2$, such that $\rank_{q_2} = \rank_\lambda$ and $\chrg_{q_2}(t_{\lambda_2}) = \lambda$. Note that $q_2$ was pending before both $\lambda$ and $\lambda_2$, and was thus pending before $\lambda_1$. But after $\lambda_1$, all pending requests are of level at least $\rank_{\lambda_1} = \rank_\lambda + 1$, in contradiction to having $\rank_{q_2} = \rank_\lambda$ immediately before $\lambda_2$.
\end{proof}

The following lemma we prove shows that for a set of requests which exist in the same time, the collection of charged services which serve them has at most one service from each level. 

\begin{defn}
	\label{defn:FWD_IntersectingSet}
	We say that a set of requests ${Q'}=\{q_{1},\cdots,q_{k}\}$ is \emph{intersecting} if there exists time $t$ such that $t\in [r_{q_i},d_{q_i}]$ for every $i\in\{1,\cdots,k\}$. We call $t$ an \emph{intersection time} of ${Q'}$.
\end{defn}

\begin{lem}
	\label{lem:FWD_UniqueClass}
	Let $Q'$ be an intersecting set of requests. Let $\Lambda_{Q'} \subseteq \Lambda^{\circ}$ be the set of charged services in which a request from $Q'$ is served. Then for every $j\in\mathbb{Z}$,
	there exists at most one service $\lambda\in \Lambda_{Q'}$ such that $\rank_{\lambda}=j$.
\end{lem}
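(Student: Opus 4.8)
The plan is to argue by contradiction. Suppose there are two distinct services $\lambda_1,\lambda_2\in\Lambda_{Q'}$ with $\rank_{\lambda_1}=\rank_{\lambda_2}=j$, occurring at times $t_{\lambda_1}<t_{\lambda_2}$; pick requests $q_1\in Q'\cap Q_{\lambda_1}$ and $q_2\in Q'\cap Q_{\lambda_2}$, and fix an intersection time $t$ of $Q'$, so $t\in[r_{q_1},d_{q_1}]\cap[r_{q_2},d_{q_2}]$. The goal is to squeeze $t_{\lambda_2}$ between two incompatible bounds relative to the time of the service that charges $\lambda_1$.

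Since $\lambda_1\in\Lambda^{\circ}$, it is charged by some service $\lambda_1'$ (occurring at time $t_{\lambda_1'}$), triggered by a request $p$ reaching its deadline with $\chrg_p(t_{\lambda_1'})=\lambda_1$; note $d_p=t_{\lambda_1'}$ and $\rank_{\lambda_1'}=j+1$. The request $p$ was eligible for $\lambda_1$ but not served by it (it stays pending and has its level set to $j$). I would then observe that $p$ survives, with level exactly $j$, throughout $[t_{\lambda_1},t_{\lambda_1'})$: it cannot be served in this window (otherwise it could not trigger $\lambda_1'$ at $d_p$), and its level cannot change (that would reset $\chrg_p$ away from $\lambda_1$). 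Consequently no service of level $\ge j$ can occur in the open interval $(t_{\lambda_1},t_{\lambda_1'})$ --- such a service would find $p$ eligible and would either serve $p$ or re-upgrade its level, both contradictions. Since $\lambda_2$ has level $j$ and $t_{\lambda_2}>t_{\lambda_1}$, this forces $t_{\lambda_2}\ge t_{\lambda_1'}$.

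Next I would invoke the intersecting hypothesis to obtain the reverse inequality. Because $\lambda_1$ processes its eligible requests in increasing order of deadline while $p$ is eligible but never added to $Q_{\lambda_1}$, every request actually served by $\lambda_1$ --- in particular $q_1$ --- has deadline at most $d_p=t_{\lambda_1'}$. Hence $r_{q_2}\le t\le d_{q_1}\le t_{\lambda_1'}$, so $q_2$ is already released by time $t_{\lambda_1'}$; and since $q_2$ stays pending until it is served by $\lambda_2$ at $t_{\lambda_2}\ge t_{\lambda_1'}$, the request $q_2$ is pending at the start of $\lambda_1'$. Its level there is at most $j<j+1=\rank_{\lambda_1'}$ (levels only increase, and $q_2$ must have level $\le j$ when $\lambda_2$ serves it), so $q_2$ is eligible for $\lambda_1'$. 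Then $\lambda_1'$ must either serve $q_2$ --- impossible, since $\lambda_2$ serves it strictly later --- or raise $q_2$'s level to $j+1$, contradicting its eligibility for $\lambda_2$. This contradiction proves the lemma.

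The crux, and the step I expect to take the most care, is the ``domination'' claim that the time window between a charged service and the service that charged it contains no service of level $\ge j$ (cf.\ the discussion accompanying Figure \ref{fig:FWD_ServiceDiagram}); it rests on a careful invariant-style argument that the triggering request $p$ persists with an unchanged level across that window. A minor technical point is the boundary case $t_{\lambda_2}=t_{\lambda_1'}$ arising from coincident deadlines, which I would dispose of by the standard convention of breaking ties between simultaneous events (or by an infinitesimal perturbation of the deadlines).
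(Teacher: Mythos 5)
Your proof is correct and follows essentially the same argument as the paper's, reorganized: where the paper splits into the cases $t_{\lambda'}>t_{\lambda_2}$ and $t_{\lambda'}<t_{\lambda_2}$, you first rule out the former via the ``domination'' claim (any level-$\ge j$ service in the intervening window would update $\chrg_p$ or serve $p$), and then run the latter case. The key ingredients --- persistence of the triggering request $p$ with level $j$, the deadline-ordering inside $\lambda_1$ giving $d_{q_1}\le d_p$, and the intersecting-set hypothesis forcing $q_2$ to be pending and eligible for the charging service --- all coincide with the paper's, and your explicit acknowledgment of the $t_{\lambda_2}=t_{\lambda_1'}$ tie is a minor precaution the paper silently elides.
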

\begin{proof}
	Assume for contradiction that there exists $j\in\mathbb{Z}$ for which
	there exist two distinct services $\lambda_{1},\lambda_{2}\in \Lambda_{Q'}$ such that $\rank_{\lambda_{1}}=\rank_{\lambda_{2}}=j$.
	Assume without loss of generality that $t_{\lambda_1}<t_{\lambda_2}$.
	In addition, let $q_{1}\in {Q'}$ be a request served by $\lambda_{1}$, and define
	$q_{2}\in {Q'}$ to be a request served by $\lambda_{2}$. Let $t$ be an intersection time of ${Q'}$.
	
	Since $\lambda_1$ is charged, there exists a request $q'$ which was pending at its deadline, triggering a service $\lambda'$, such that $\chrg_{q'}(t_{\lambda'})= \lambda_1$. From the definition of $\chrg_{q'}$, we have that $\rank_{q'} = \rank_\lambda$ at time $t_{\lambda'}$. Thus, the service $\lambda'$ must be of level exactly $j+1$. Also note that $q'$ was eligible for $\lambda_1$.  Consider the following two cases:
	
	\begin{enumerate}
		\item  $t_{\lambda'}>t_{\lambda_2}$. Since $q'$ was pending at $t_{\lambda_1}$ and at $t_{\lambda'}$, and since $t_{\lambda_1} <t_{\lambda_2} <t_{\lambda'}$, we have that $q'$ was pending at $t_{\lambda_2} $. Observe that $\rank_{q'} = \rank_{\lambda_1}$ at $t_{\lambda_2}$, since $\lambda_1$ occurred before $\lambda_2$. But this means that $q'$ was eligible for $\lambda_2$, but was not served (since it was pending at $t_{\lambda'}$). Thus, $\lambda_2$ set $\rank_{q'} \gets \rank_{\lambda_2}$ in Line \ref{line:FWD_ServiceSetsRequestLevel}, in contradiction to having $\chrg_{q'} (t_{\lambda'})= \lambda_1$.
		
		\item $t_{\lambda'} < t_{\lambda_2}$. Consider that since $\chrg_{q'} (t_{\lambda'})= \lambda_1$, we know that $q'$ was eligible for $\lambda_1$. The service $\lambda_1$ added eligible requests by order of increasing deadline, and thus we know that the deadline of $q'$ is after the deadline of $q_1$. We know that ${Q'}$ is an intersecting set of requests, and thus $r_{q_2} \le d_{q_1}$. Therefore, we have that $r_{q_2} < d_{q'} = t_{\lambda'} < t_{\lambda_2}$, and thus $q_2$ was pending at $t_{\lambda'}$. We know that $q_2$ was eligible for $\lambda_2$, and thus $\rank_{q_2} \le j$ at that time. But this contradicts the fact that after $\lambda'$, every pending request has level at least $\rank_{\lambda'} = j+1$. 
	\end{enumerate}
\end{proof}

We now move on to proving Theorem \ref{thm:FWD_Competitiveness}. The proof consists of upper-bounding the cost of the algorithm and lower-bounding the cost of the optimal solution.

\subsubsection*{\emph{Upper-bounding $\alg$}}

We prove the following lemma, which provides an upper bound on the cost of the algorithm.

\begin{lem}
	\label{lem:FWD_ALGUpperBound}
	$\alg \le O(\gamma) \cdot \left(\sum_{\lambda\in \Lambda_1} 2^{\rank_\lambda} + \sum_{\lambda\in \Lambda^\circ} 2^{\rank_\lambda}\right)$
\end{lem}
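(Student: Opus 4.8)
The plan is to prove the bound in two stages. First I would show that each service $\lambda$ incurs cost $c(\lambda)=O(\gamma)\cdot 2^{\rank_\lambda}$, which immediately gives $\alg=c(\Lambda)=O(\gamma)\sum_{\lambda\in\Lambda}2^{\rank_\lambda}$. Second, I would split $\sum_{\lambda\in\Lambda}2^{\rank_\lambda}=\sum_{\lambda\in\Lambda_1}2^{\rank_\lambda}+\sum_{\lambda\in\Lambda_2}2^{\rank_\lambda}$ and use the charging structure (via Proposition~\ref{prop:FWD_UniqueCharge}) to absorb the secondary-service term into $\sum_{\lambda\in\Lambda^\circ}2^{\rank_\lambda}$ up to a constant factor.

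For the first stage, recall that $\lambda$ performs a single transmission, of the set $E_0\cup S\cup S_{q_{\text{last}}}$. The set $E_0$ contains only elements of cost at most $2^{\rank_\lambda}/|\E|$, so $c(E_0)\le 2^{\rank_\lambda}$. The solution $S$ is the value of $S'$ from the last loop iteration in which the test in Line~\ref{line:FWD_ServiceIsFull} failed (or $\emptyset$, if no such iteration occurred, e.g. if the loop broke on its first iteration), so the cost of $S$ in the modified instance $\OP_{E_0\gets 0}$ is strictly below $\gamma\cdot 2^{\rank_\lambda}$; since the elements of $S\cap E_0$ are already paid for in $c(E_0)$, this yields $c(E_0\cup S)<2^{\rank_\lambda}+\gamma\cdot 2^{\rank_\lambda}$. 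Finally, $q_{\text{last}}$ is eligible for $\lambda$, so $\rank_{q_{\text{last}}}\le\rank_\lambda$ at that time, and since the request's level only increases from its initial value $\lfloor\log I_{q_{\text{last}}}\rfloor$ we have $I_{q_{\text{last}}}<2^{\rank_{q_{\text{last}}}+1}$, hence $c(S_{q_{\text{last}}})=\gamma\cdot I_{q_{\text{last}}}<\gamma\cdot 2^{\rank_\lambda+1}$. Adding the three bounds gives $c(\lambda)\le c(E_0\cup S\cup S_{q_{\text{last}}})=O(\gamma)\cdot 2^{\rank_\lambda}$, and summing over $\lambda\in\Lambda$ gives $\alg=O(\gamma)\sum_{\lambda\in\Lambda}2^{\rank_\lambda}$.

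For the second stage, for each secondary service $\lambda'\in\Lambda_2$ let $q$ be the request whose expired deadline triggered it, and set $\sigma(\lambda')=\chrg_q(t_{\lambda'})\ne\nul$. By definition $\lambda'$ charges $\sigma(\lambda')$, so $\sigma(\lambda')\in\Lambda^\circ$; and since the level of a pending request equals that of the last service that set it (request levels change only through Line~\ref{line:FWD_ServiceSetsRequestLevel}), at time $t_{\lambda'}$ we have $\rank_q=\rank_{\sigma(\lambda')}$, whence $\rank_{\lambda'}=\rank_q+1=\rank_{\sigma(\lambda')}+1$. By Proposition~\ref{prop:FWD_UniqueCharge} the map $\sigma\colon\Lambda_2\to\Lambda^\circ$ is injective, so $\sum_{\lambda'\in\Lambda_2}2^{\rank_{\lambda'}}=2\sum_{\lambda'\in\Lambda_2}2^{\rank_{\sigma(\lambda')}}\le 2\sum_{\lambda\in\Lambda^\circ}2^{\rank_\lambda}$. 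Combining the two stages yields $\alg\le O(\gamma)\bigl(\sum_{\lambda\in\Lambda_1}2^{\rank_\lambda}+2\sum_{\lambda\in\Lambda^\circ}2^{\rank_\lambda}\bigr)=O(\gamma)\bigl(\sum_{\lambda\in\Lambda_1}2^{\rank_\lambda}+\sum_{\lambda\in\Lambda^\circ}2^{\rank_\lambda}\bigr)$, as claimed.

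I expect the main obstacle to be the bookkeeping in the first stage: being careful that $c(S)$ refers to the cost in the modified instance where $E_0$ is free, that the true cost of the transmitted set $E_0\cup S\cup S_{q_{\text{last}}}$ is still $O(\gamma\cdot 2^{\rank_\lambda})$ because $E_0$'s contribution is counted only once, and that the degenerate loop cases are handled. The second stage is essentially immediate once Proposition~\ref{prop:FWD_UniqueCharge} is in hand; the only point needing care there is the level identity $\rank_{\lambda'}=\rank_{\sigma(\lambda')}+1$.
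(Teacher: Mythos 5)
Your proposal is correct and follows essentially the same route as the paper: it first bounds each service's cost by $O(\gamma)\cdot 2^{\rank_\lambda}$ via the same three-part decomposition ($E_0$, $S$, $S_{q_{\text{last}}}$), and then absorbs the secondary-service sum into $\sum_{\lambda\in\Lambda^\circ}2^{\rank_\lambda}$ using the injection provided by Proposition~\ref{prop:FWD_UniqueCharge} and the level identity $\rank_{\lambda'}=\rank_{\sigma(\lambda')}+1$. The only difference is that you are somewhat more explicit about the bookkeeping (the $E_0$/$S$ overlap, strict vs.\ non-strict inequalities, the degenerate empty-loop case), which the paper glosses over but does not change the argument.
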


\begin{prop}
	\label{prop:FWD_ServiceCostBoundedByLevel}
	The total cost of a service $\lambda$ is at most $O(\gamma)\cdot 2^{\rank_\lambda}$.
\end{prop}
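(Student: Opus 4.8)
The plan is to bound separately the cost of each of the three pieces that make up the transmission of $\lambda$, namely $E_0$, $S$ and $S_{q_{\text{last}}}$ (Line \ref{line:FWD_ServeRequests}), and then add the bounds. Before doing so I would record the elementary monotonicity fact that a request's level never decreases: the only assignment to $\rank_{q'}$ after initialization is Line \ref{line:FWD_ServiceSetsRequestLevel}, executed only when $\rank_{q'}\le\rank_\lambda$, so it can only raise $\rank_{q'}$. Consequently, every request $q'$ eligible for $\lambda$ — in particular every request added to $Q_\lambda$ in the loop of Line \ref{line:FWD_AddingRequestsToService}, including $q_{\text{last}}$ — has $\rank_{q'}\le\rank_\lambda$ at that moment, hence also has \emph{initial} level at most $\rank_\lambda$. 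I would also note that $q_{\text{last}}$ is well defined: the triggering request $q$ is pending at $t_\lambda$ with $\rank_q=\rank_\lambda-1\le\rank_\lambda$ (Line \ref{line:FWD_SetServiceLevel}), so the loop of Line \ref{line:FWD_AddingRequestsToService} executes at least once.

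For $E_0$: it is defined in Line \ref{line:FWD_BuyCheapEdges} as the set of elements of cost at most $2^{\rank_\lambda}/|\E|$, and since $E_0\subseteq\E$ we get $\sum_{e\in E_0}c(e)\le|\E|\cdot 2^{\rank_\lambda}/|\E| = 2^{\rank_\lambda}$. For $S$: the variable $S$ is only ever assigned a solution $S'$ that passed the test of Line \ref{line:FWD_ServiceIsFull}, i.e.\ with stored cost $c(S')<\gamma\cdot 2^{\rank_\lambda}$ (and $S=\emptyset$, cost $0$, if the loop breaks in its first iteration); so in all cases $c(S)\le\gamma\cdot 2^{\rank_\lambda}$. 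Since $c(S)$ is the cost in the instance where the elements of $E_0$ are free, the true cost of transmitting $E_0\cup S$ is $\sum_{e\in E_0\cup S}c(e)\le \sum_{e\in E_0}c(e)+c(S)\le(\gamma+1)\cdot 2^{\rank_\lambda}$. For $S_{q_{\text{last}}}=\OP(\{q_{\text{last}}\})$: by construction (Lines in $\UponRequest$) $c(S_{q_{\text{last}}})=\gamma\cdot I_{q_{\text{last}}}$ and the initial level of $q_{\text{last}}$ is $\lfloor\log I_{q_{\text{last}}}\rfloor$, so $I_{q_{\text{last}}}<2^{(\text{initial level of }q_{\text{last}})+1}\le 2^{\rank_\lambda+1}$ by the monotonicity observation, giving $c(S_{q_{\text{last}}})<2\gamma\cdot 2^{\rank_\lambda}$.

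Adding up, $c(\lambda)\le\sum_{e\in E_0\cup S}c(e)+\sum_{e\in S_{q_{\text{last}}}}c(e)\le(\gamma+1)\cdot 2^{\rank_\lambda}+2\gamma\cdot 2^{\rank_\lambda}=O(\gamma)\cdot 2^{\rank_\lambda}$, using $\gamma\ge 1$. The only mildly delicate point — the one I would be most careful to get right — is the bookkeeping of the zeroed ($E_0$-free) costs when accounting for $S$, so that the cheap elements are neither double-counted nor missed; every other step is a direct substitution of the definitions in Algorithm \ref{alg:FWD_Algorithm}.
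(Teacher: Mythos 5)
Your proof is correct and follows essentially the same route as the paper's: bound the three pieces $E_0$, $S$, and $S_{q_{\text{last}}}$ separately and add (the paper even uses the identical decomposition and the identical bound $c(S_{q_{\text{last}}})\le 2\gamma\cdot 2^{\rank_\lambda}$ via the initial level of $q_{\text{last}}$). You are slightly more explicit than the paper about level monotonicity, the well-definedness of $q_{\text{last}}$, and the fact that $c(S)$ is a cost in the $E_0$-zeroed instance so that cheap elements are not double-counted — all welcome clarifications, but not a different argument.
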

\begin{proof}
	The cost of the service $\lambda$ is the cost of the transmission in Line \ref{line:FWD_ServeRequests}. The cost of this transmission is at most the sum of the three following costs: $C(E_0)$, $c(S)$, and $c(S_{q_\textup{last}})$. The total cost of $E_0$, by definition of $E_0$, is at most $2^{\rank_\lambda}$. 
	
	The cost $c(S)$ is at most $\gamma \cdot 2^{\rank_\lambda}$. To see this, observe that the loop of Line \ref{line:FWD_AddingRequestsToService} either ends in the first iteration (in which case $S=\emptyset$  and the cost is zero), or continues for two or more iterations. In the second case, consider the iteration before last -- since we did not break out of the loop, we have that $c(S) \le \gamma \cdot 2^{\rank_\lambda}$.
	
	As for the cost $c(S_{q_\textup{last}})$, consider the initial level of $q_{\textup{last}}$. Levels only increase over time, and we know that upon the service $\lambda$ we had that $\rank_{q_{\textup{last}}} \le \rank_\lambda$. Thus, the initial level of $q_{\textup{last}}$ was at most $\rank_\lambda$. According to the way in which the initial level is set, we thus have that $c(S_{q_\textup{last}}) \le 2\gamma\cdot 2^{\rank_\lambda}$.
	
	Summing over the three costs completes the proof.
\end{proof}

\begin{prop}
	\label{prop:FWD_OnlyFullServicesCharged}
	Only imperfect services can be charged.
\end{prop}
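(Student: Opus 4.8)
The plan is to prove the contrapositive: if a service $\lambda$ is perfect, then $\lambda$ is not charged. Recall that "perfect" means the $\Break$ of Line \ref{line:FWD_ServiceIsFull} was never reached, so the \textbf{while} loop of Line \ref{line:FWD_AddingRequestsToService} terminated only because its loop condition failed — i.e., at the end of the loop there was no pending request $q'$ with $\rank_{q'} \le \rank_\lambda$ that had not been added to $Q_\lambda$. First I would argue that every eligible request of $\lambda$ ends up being served by $\lambda$. An eligible request is, by Definition \ref{defn:FWD_EligibleRequest}, pending at the start of $\lambda$ with level at most $\rank_\lambda$; during the service, levels of such requests are not modified until the upgrade step of Line \ref{line:FWD_ServiceSetsRequestLevel} (which occurs after the transmission). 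Hence every eligible request is a valid candidate for the loop throughout, and since the loop exits only when no such candidate remains outside $Q_\lambda$, every eligible request lies in $Q_\lambda$ and is thus served by the transmission in Line \ref{line:FWD_ServeRequests}.

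Next I would use this to show $\lambda$ cannot be charged. Suppose for contradiction that some later service $\lambda'$, triggered by the expired deadline of a pending request $q$, charges $\lambda$; that is, $\chrg_q(t_{\lambda'}) = \lambda$. By the definition of $\chrg_q$, this means $\lambda$ executed Line \ref{line:FWD_ServiceSetsRequestLevel} on $q$, setting $\rank_q \gets \rank_\lambda$. But Line \ref{line:FWD_ServiceSetsRequestLevel} is only executed for requests that are \emph{still pending after the transmission of $\lambda$} and have $\rank_{q} \le \rank_\lambda$ at that point — in particular $q$ was eligible for $\lambda$ (pending at the start of $\lambda$, with level at most $\rank_\lambda$). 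By the previous paragraph, since $\lambda$ is perfect, every eligible request of $\lambda$ was served by $\lambda$'s transmission, so $q$ is not pending after the transmission. This contradicts the requirement that $\lambda$ applied Line \ref{line:FWD_ServiceSetsRequestLevel} to $q$, and hence no such $\lambda'$ exists.

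The one subtlety to handle carefully is the precise bookkeeping of \emph{when} a request's level changes relative to when it is added to $Q_\lambda$ and when the transmission occurs. I would make explicit that within a single service the sequence of events is: set $\rank_\lambda$, run the loop (adding requests to $Q_\lambda$ without touching any levels), transmit, then upgrade levels; so a request eligible at the start remains a loop candidate (level $\le \rank_\lambda$) until it is added to $Q_\lambda$. The only potential gap is whether a request could be \emph{served but still pending} — but serving means the transmitted element set lies in $X_q$, so by definition the request is no longer pending, which closes the argument. I expect this step — pinning down that "perfect" forces $Q_\lambda$ to contain \emph{all} eligible requests, and that served requests are genuinely removed before Line \ref{line:FWD_ServiceSetsRequestLevel} — to be the only real content; the rest is definitional.
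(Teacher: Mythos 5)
Your proposal is correct and follows essentially the same approach as the paper: a perfect service serves every eligible request, so Line \ref{line:FWD_ServiceSetsRequestLevel} is never executed and $\chrg_q$ can never point to $\lambda$. You simply spell out the bookkeeping that the paper leaves implicit.
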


\begin{proof}
	Observe that a perfect service serves all eligible requests. Thus, Line \ref{line:FWD_ServiceSetsRequestLevel} is not called in such a service, which implies that the service is not charged.
\end{proof}

\begin{proof}[Proof of Lemma \ref{lem:FWD_ALGUpperBound}]
	Observe that $\alg = c(\Lambda_1) + c(\Lambda_2)$. First, observe that through Proposition \ref{prop:FWD_ServiceCostBoundedByLevel} we have that $c(\Lambda_1) \le O(\gamma) \cdot \sum_{\lambda\in \Lambda_1} 2^{\rank_\lambda}$. 
	
	It remains to show that $c(\Lambda_2)\le O(\gamma)\cdot \sum_{\lambda\in \Lambda^\circ} 2^{\rank_\lambda}$. Observe that every secondary service $\lambda$ of level $j$ charges a previous service $\lambda'\in \Lambda^\circ$ of level $(j-1)$. From Proposition \ref{prop:FWD_ServiceCostBoundedByLevel}, we have that $c(\lambda) \le O(\gamma)\cdot 2^j$, and thus $c(\lambda) \le O(\gamma)\cdot 2^{\rank{\lambda'}}$. Summing over all secondary services completes the proof, where Proposition \ref{prop:FWD_UniqueCharge} guarantees that no charged service is counted twice.
\end{proof}

\subsubsection*{\emph{Lower-bounding $\opt$}}
Fix the optimal solution for the given input, which consists of the services $\Lambda^*$ made in various points in time. Denote by $\opt$ the cost of this optimal solution. To complete the proof of Theorem \ref{thm:FWD_Competitiveness}, we require the following two lemmas which lower-bound the cost of the optimal solution. 

\begin{lem}
	\label{lem:FWD_PrimaryLowerBoundsOPT}
	$\sum_{\lambda\in \Lambda_1} 2^{\rank_\lambda} \le O(1)\cdot \opt$
\end{lem}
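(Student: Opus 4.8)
The goal is to charge the primary services to the optimal solution. The key object is the level of a primary service: a primary service $\lambda$ is triggered by a request $q$ with $\chrg_q(t_\lambda)=\nul$, so $q$ still has its \emph{initial} level at that point, meaning $2^{\rank_\lambda} = 2^{\rank_q+1} = \Theta(c(S_q)/\gamma) = \Theta(c(\OP^*(\{q\})))$ up to constant factors. Thus $2^{\rank_\lambda}$ is (up to a constant) a lower bound on the cost any feasible solution must pay to serve $q$ alone. The plan is to show that the set of triggering requests of primary services can be partitioned (or rather, that the optimal solution can be charged against them) so that each optimal service $\lambda^*\in\Lambda^*$ absorbs only $O(1)$ worth of $\sum 2^{\rank_\lambda}$.

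First I would set up, for each primary service $\lambda\in\Lambda_1$, its triggering request $q_\lambda$; since $\lambda$ is triggered when $q_\lambda$ hits its deadline $d_{q_\lambda}=t_\lambda$, and since the optimal solution is feasible, there is some optimal service $\sigma(\lambda)\in\Lambda^*$ occurring in $[r_{q_\lambda}, d_{q_\lambda}]$ whose transmitted set lies in $X_{q_\lambda}$. That service has cost $c(\sigma(\lambda)) \ge c(\OP^*(\{q_\lambda\})) = \gamma\cdot 2^{\rank_{q_\lambda}}/\Theta(1) = \Theta(2^{\rank_\lambda})$, because the optimal transmission restricted to $q_\lambda$ is itself a feasible (offline) solution for $\{q_\lambda\}$. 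So each primary service can be ``paid for'' by the optimal service assigned to it, losing only a constant factor — \emph{provided} the map $\sigma$ is close to injective, or at least that no optimal service is assigned to too many primary services of comparable level.

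The main obstacle, and the crux of the argument, is exactly this near-injectivity: a single optimal service $\lambda^*$ might be in the release–deadline window of many primary triggering requests. The resolution should come from Lemma \ref{lem:FWD_UniqueClass} together with the structure of primary services. The set of primary triggering requests all of whose windows contain $t_{\lambda^*}$ forms an intersecting set $Q'$; I need to argue that among those assigned to $\lambda^*$, there is at most one primary service per level $j$ — then $\sum_{\lambda : \sigma(\lambda)=\lambda^*} 2^{\rank_\lambda} \le \sum_{j\le j_{\max}} 2^{j+1} = O(2^{j_{\max}+1})$, and $2^{j_{\max}}$ is itself $O(c(\OP^*(\{q\})))$ for the highest-level such request $q$, hence $O(c(\lambda^*))$ by the feasibility argument above. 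The ``at most one per level'' claim for primary services of an intersecting set needs a direct argument analogous to (but simpler than) Lemma \ref{lem:FWD_UniqueClass}: if $\lambda_1,\lambda_2$ are two primary services of the same level $j$ with intersecting triggering requests and $t_{\lambda_1}<t_{\lambda_2}$, then the request triggering $\lambda_2$ was pending at $t_{\lambda_1}$ with level $\le j$ (since it still carries its initial level and initial levels of requests triggering a level-$(j+1)$... wait — rather, since $\lambda_2$ is primary, $q_{\lambda_2}$ had $\chrg=\nul$ at $t_{\lambda_2}$, so its level was initial, i.e. $j$), hence it was eligible for $\lambda_1$; after $\lambda_1$ every still-pending eligible request has level $\ge j+1$, so $q_{\lambda_2}$ must have been served by $\lambda_1$ — contradiction.

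Finally I would assemble these pieces: $\sum_{\lambda\in\Lambda_1} 2^{\rank_\lambda} = \sum_{\lambda^*\in\Lambda^*}\sum_{\lambda:\sigma(\lambda)=\lambda^*} 2^{\rank_\lambda} \le \sum_{\lambda^*\in\Lambda^*} O(1)\cdot c(\lambda^*) = O(1)\cdot\opt$. The one technical point to be careful about is the choice of $j_{\max}$ and the geometric-sum step: because all the relevant primary services have the \emph{same} intersection time $t_{\lambda^*}$ and distinct levels bounded above by $j_{\max}$, the sum telescopes geometrically to $O(2^{j_{\max}})$, and the level-$j_{\max}$ service's triggering request forces $c(\lambda^*) = \Omega(2^{j_{\max}})$. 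I expect the writeup to be short once Lemma \ref{lem:FWD_UniqueClass}'s proof technique is reused for the primary-service variant.
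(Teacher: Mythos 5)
Your proof is correct and takes a genuinely different (and arguably cleaner) route than the paper's. The paper also starts by proving the same key combinatorial fact you use — that the triggering requests of two same-level primary services have disjoint $[r_q,d_q]$ windows, via the argument that otherwise the earlier service would have either served the later service's triggering request or set its $\chrg$ pointer — but then proceeds \emph{globally}: letting $m_j$ count level-$j$ primary services, it shows via a geometric-sum rearrangement that $\sum_{\lambda\in\Lambda_1}2^{\rank_\lambda}\le 4\sum_j (m_j-\max_{j'>j}m_{j'})^+\,2^{j-1}$ and then argues that each term on the right is a separate ``additional'' cost the optimum must pay. You instead do a \emph{local} charge: explicitly map each primary service $\lambda$ to an optimal service $\sigma(\lambda)$ serving $q_\lambda$ in its window, observe that the primary services mapped to a fixed $\lambda^*$ have mutually intersecting windows (all contain $t_{\lambda^*}$) hence distinct levels, and bound the resulting geometric sum by $O(1)\cdot c(\lambda^*)$ since the top-level request forces $c(\lambda^*)=\Omega(2^{j_{\max}})$. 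Your argument is structurally parallel to the charging scheme the paper itself uses for the harder Lemma~\ref{lem:FWD_ChargeLowerBoundsOPT} (via Proposition~\ref{prop:FWD_LocalCharge}), which makes it more uniform; the paper's version avoids introducing the map $\sigma$ but hides the ``each optimal service is hit at most once per level'' fact inside a counting inequality that is harder to parse. Two minor slips in your sketch, neither affecting soundness: the triggering request of a level-$j$ service has level $j-1$, not $j$; and after $\lambda_1$ the unserved eligible requests get level $\rank_{\lambda_1}=j$, not $j+1$ — but all that matters for the contradiction is that Line~\ref{line:FWD_ServiceSetsRequestLevel} makes $\chrg_{q_{\lambda_2}}\neq\nul$, so the conclusion stands.
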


\begin{lem}
	\label{lem:FWD_ChargeLowerBoundsOPT}
	$\sum_{\lambda\in \Lambda^\circ} 2^{\rank_\lambda} \le O(\log |\E|) \cdot \opt$
\end{lem}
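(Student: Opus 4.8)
The plan is to charge each charged service $\lambda\in\Lambda^\circ$ to the portion of the optimal solution that served the request which witnesses $\lambda$'s charging. First I would fix, for each $\lambda\in\Lambda^\circ$, the request $q_\lambda$ that caused the future service $\lambda'$ to charge $\lambda$ (so $q_\lambda$ reached its deadline $d_{q_\lambda}=t_{\lambda'}$ while pending, and $\chrg_{q_\lambda}(t_{\lambda'})=\lambda$). The optimal solution must serve $q_\lambda$ by some optimal service $\mu(\lambda)\in\Lambda^*$ occurring in the window $[r_{q_\lambda},d_{q_\lambda}]$; transmitting a feasible solution at that moment means $\opt$ pays at least the cost of a feasible solution for $\{q_\lambda\}$ at time $\mu(\lambda)$, and hence $c(\mu(\lambda))\ge c(\OP^*(\{q_\lambda\}))\ge \frac{1}{\gamma}c(\OP(\{q_\lambda\})) = 2^{\rank_{q_\lambda}^{\mathrm{ini}}}\cdot\Theta(1)$ — but more usefully $c(\mu(\lambda))=\Omega(2^{\rank_\lambda})$ is what I want, using that $\lambda$ raised $q_\lambda$'s level to $\rank_\lambda$, so $q_\lambda$ is "expensive at level $\rank_\lambda$" in the sense that any feasible single-request solution for it... wait — that's not literally true, since the level was raised by $\lambda$, not because $q_\lambda$ is intrinsically expensive. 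So the real lower bound must come from $\lambda$ itself being imperfect (Proposition~\ref{prop:FWD_OnlyFullServicesCharged}): $\lambda$ broke out of its loop because $c(\OP_{E_0\gets 0}(Q_\lambda))\ge\gamma\cdot 2^{\rank_\lambda}$, i.e. the batch $Q_\lambda$ of eligible requests that $\lambda$ tried to serve genuinely costs $\Omega(2^{\rank_\lambda})$ even after zeroing cheap elements, so $c(\OP^*(Q_\lambda))=\Omega(2^{\rank_\lambda})$.

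So the charging scheme I would actually use is: map $\lambda\mapsto$ (the set of optimal services that collectively serve $Q_\lambda$). Each $q\in Q_\lambda$ is pending throughout $[r_q,d_q]$, and since $\lambda$ added requests to $Q_\lambda$ by increasing deadline and $q_\lambda$ (the charging witness, of the same level, with latest deadline among the batch by the loop order) was still eligible, all of $Q_\lambda\cup\{q_\lambda\}$ share the intersection time $t_\lambda$ — in fact $Q_\lambda$ is an intersecting set of requests with intersection time $t_\lambda$. Now let $\mu$ be the earliest optimal service after $t_\lambda$ that serves some request of $Q_\lambda$ — actually I need the optimal services to cover all of $Q_\lambda$, so the relevant cost is $\sum_{\mu\in\Lambda^*:\,\mu\text{ serves some }q\in Q_\lambda} c(\mu)\ge c(\OP^*(Q_\lambda))\ge\Omega(2^{\rank_\lambda})$ — because the union of elements transmitted by those optimal services forms a feasible solution for the whole batch $Q_\lambda$. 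The crux is then a congestion argument: I must bound how many charged services $\lambda$ can simultaneously "blame" a single optimal service $\mu\in\Lambda^*$. Here is where Lemma~\ref{lem:FWD_UniqueClass} enters: if $\mu$ serves request $q$, then $q\in[r_q,d_q]\ni t_\mu$, so all batches $Q_\lambda$ that contain some request served by $\mu$, restricted to the requests they share with $\mu$'s served set, form intersecting sets with intersection time $t_\mu$; Lemma~\ref{lem:FWD_UniqueClass} then says at most one charged service of each level $j$ serves any fixed request, so the charged services blaming $\mu$ have distinct levels. Their total $2^{\rank_\lambda}$ weight is therefore a geometric sum $\sum_j 2^j$ over the at-most-$\log|\E|$ relevant levels, giving $O(\log|\E|)$ times $2^{j_{\max}}$, and $2^{j_{\max}}=O(c(\mu))$.

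The main obstacle — and the step I'd spend the most care on — is making the "at most $\log|\E|$ distinct levels" claim precise and legitimately geometric. The level of a charged service $\lambda$ that serves a request $q$ which $\mu$ also serves satisfies: on the low end, $2^{\rank_\lambda}\le O(\gamma)\cdot c(Q_\lambda\text{-related optimal cost})\le O(\gamma)\opt$ won't directly give a per-$\mu$ bound, so instead I want $2^{\rank_\lambda}=O(c(\mu))$ on the \emph{high} end and then the low levels to telescope; the high-end bound is delicate because it requires that the optimal cost to serve even $q$ alone at time $t_\mu$ is $\Omega(2^{\rank_\lambda})$, which again is false in general (the level was inflated). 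The clean fix is to not require a per-level lower bound at all: sum over charged services by level, $\sum_{\lambda\in\Lambda^\circ}2^{\rank_\lambda}=\sum_j 2^j\cdot|\{\lambda\in\Lambda^\circ:\rank_\lambda=j\}|$, and for \emph{each fixed level $j$} show $2^j\cdot|\{\lambda\in\Lambda^\circ:\rank_\lambda=j\}|\le O(\opt)$ by a level-$j$-only charging (each level-$j$ charged service is imperfect hence $c(\OP^*(Q_\lambda))\ge 2^j$, and Lemma~\ref{lem:FWD_UniqueClass} restricted to a single level $j$ gives that distinct level-$j$ charged services serving a common request can't exist, so the batches $\{Q_\lambda:\rank_\lambda=j\}$ are "laminar-ish" / each optimal service is blamed by $O(1)$ of them per level), and then sum over the $O(\log|\E|)$ levels $j$ that can possibly occur — since a request's initial level is $\lfloor\log(c(\OP(\{q\}))/\gamma)\rfloor\ge\lfloor\log(c(e_{\min})/\gamma)\rfloor$ and service levels span a range of size $O(\log|\E|)$ relative to $\opt$, or more robustly, levels $j$ with $2^j>\opt$ contribute nothing since $c(\OP^*(Q_\lambda))\le\opt$ forces no imperfect service there, and levels $j$ with $2^j<\opt/|\E|$ are absorbed into the cheap-element set $E_0$ and cannot be imperfect either. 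That final windowing argument — levels outside a band of width $O(\log|\E|)$ around $\log\opt$ are vacuous — is the real content, and I expect it to be the fiddliest part to write cleanly.
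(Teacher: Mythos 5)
Your analysis of the key ingredients is on track: you correctly identify that charged services are imperfect (so $\OP^*_{E_0^\lambda \gets 0}(Q_\lambda) \ge 2^{\rank_\lambda}$), that Lemma~\ref{lem:FWD_UniqueClass} restricts charged services touching an optimal service's batch to at most one per level, and that a direct per-$\lambda$ charge of $2^{\rank_\lambda}$ to a single optimal service requires a bound $2^{\rank_\lambda}=O(c(\mu))$ which fails because the level may have been inflated by earlier services. But your ``clean fix''---bound each fixed level $j$ by $O(\opt)$ and sum over an alleged $O(\log|\E|)$ levels---does not close the gap, because the $O(\log|\E|)$-levels claim is false. Your absorption argument for low levels runs backwards: a level-$j$ service zeroes only elements of cost at most $2^j/|\E|$, so a \emph{small} $j$ yields a tiny threshold that absorbs essentially nothing, while a \emph{large} $j$ absorbs aggressively. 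Consequently there is no lower cutoff on $j$ of the form $\log\opt - O(\log|\E|)$: the minimum relevant level is governed by the cheapest element, and $\log\opt - j_{\min}$ can be arbitrarily large in $|\E|$. Even the high cutoff $2^j\le\opt$ is independent of $|\E|$. So the level-by-level sum may have far more than $O(\log|\E|)$ nonzero terms, each individually $\Theta(\opt)$.

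The ingredient you are missing is to keep the charging per pair $(\lambda,\lambda^*)$ and cap the charge at $\min\{2^{\rank_\lambda},\,\OP^*_{E_0^\lambda\gets 0}(Q_{\lambda\cap\lambda^*})\}$. Imperfection of $\lambda$ and subadditivity of the zeroed-cost objective ensure these charges sum over $\lambda^*\in\Lambda^*$ to at least $2^{\rank_\lambda}$, which is what you need per service. On the receiving side (Proposition~\ref{prop:FWD_LocalCharge}), fix $\lambda^*$, let $\ell=\lfloor\log c(\lambda^*)\rfloor$, and use Lemma~\ref{lem:FWD_UniqueClass} to get at most one charged $\lambda$ per level touching $Q_{\lambda^*}$. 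For $\rank_\lambda\le\ell$, the first branch of the $\min$ gives a geometric sum $O(c(\lambda^*))$. For $\ell<\rank_\lambda<\ell+\lceil\log|\E|\rceil+1$, the second branch caps the charge at $\OP^*(Q_{\lambda^*})\le c(\lambda^*)$ and there are only $O(\log|\E|)$ such levels. For $\rank_\lambda\ge\ell+\lceil\log|\E|\rceil+1$, the threshold $2^{\rank_\lambda}/|\E|\ge 2^{\ell+1}\ge c(\lambda^*)$ guarantees every element used by $\lambda^*$ lands in $E_0^\lambda$, so $\OP^*_{E_0^\lambda\gets 0}(Q_{\lambda\cap\lambda^*})=0$ and the $\min$ vanishes. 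The crucial feature is that ``high'' and ``low'' are measured relative to $c(\lambda^*)$, not to $\opt$, and the $\min$-cap lets high-level charged services dissipate harmlessly rather than forcing any upper bound on $\rank_\lambda$.
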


\begin{proof}[Proof of Lemma \ref{lem:FWD_PrimaryLowerBoundsOPT}]
	Observe that two primary services $\lambda_1,\lambda_2$ of the same level are triggered by two requests $q_1,q_2$ which are disjoint -- i.e. $[r_{q_1}, d_{q_1}] \cap [r_{q_2}, d_{q_2}] = \emptyset$. Otherwise, if $q_1$ and $q_2$ are not disjoint, then without loss of generality assume that $d_{q_1}\in [r_{q_2}, d_{q_2}]$. In this case, $\lambda_1$ would consider $q_2$, which is eligible (as $q_1,q_2$ are of the same level). This would either lead to $\lambda_1$ serving $q_2$, or $\chrg_{q_2}(t_{\lambda_2})\neq \nul$, both of which are contradictions to $\lambda_2$ being primary.
	
	Therefore, the requests triggering primary services of any specific level form a set of disjoint intervals. Now, let $m_j$ be the number of primary services of level $j$, and let $j_{\max}$ be the maximum level of a primary service. Denoting $x^+ = \max(x,0)$, we have that 
	\begin{align*}
	\sum_{\lambda\in \Lambda^1} 2^{\rank_\lambda} & = \sum_{j=-\infty}^{j_{\max}} m_j \cdot 2^j \\
	& \le   \sum_{j=-\infty}^{j_{\max}} \left(m_j - \max_{j'>j}\{m_{j'}\} \right)^+ \cdot 2^{j+1} \\
	& = 4 \cdot \sum_{j=-\infty}^{j_{\max}} \left(m_j - \max_{j'>j}\{m_{j'}\} \right)^+ \cdot 2^{j-1} 
	\end{align*}
	where the inequality is through changing the order of summation and summing a geometric series.
	
	Now, consider the optimal solution. For each primary service $\lambda$ triggered by a request $q$, we know that $\rank_{q} = \rank_\lambda -1$, and that $\chrg_{q}(t_\lambda) = \nul$. Thus, $\rank_\lambda -1$ was the initial level of $q$, set in $\UponRequest$. Thus, we have that $\OP^*(\{q\})\ge \frac{\OP(\{q\})}{\gamma} \ge 2^{\rank_\lambda-1}$. 
	
	This implies that the optimal solution must create $m_{j_{\max}}$ services of cost at least $2^{j_{\max}-1}$ each, to serve the (disjoint) requests which trigger level $j_{\max}$ primary services. In addition, the optimal solution must create at least $(m_{j_{\max}-1} - m_{j_{\max}})^+$ \emph{additional} services, of cost at least $2^{j_{\max}-2}$ each, to service requests that trigger level $(j_{\max} -1)$ primary services. Repeating this argument, for each level $j$ the optimal solution must pay an additional cost of $\left(m_j - \max_{j'>j}\{m_{j'}\} \right)^+ \cdot 2^{j-1}$. Overall, we have that 
	
	\[\opt \ge \sum_{j=-\infty}^{j_{\max}} \left(m_j - \max_{j'>j}\{m_{j'}\} \right)^+ \cdot 2^{j-1}\]

	 and thus $\sum_{\lambda\in \Lambda_1} 2^{\rank_\lambda} \le 4\cdot \opt$.
\end{proof}

It remains to prove Lemma \ref{lem:FWD_ChargeLowerBoundsOPT}, i.e. charging $2^{\rank_\lambda}$ for each service $\lambda \in \Lambda^\circ$ to the optimal solution times $O(\log |\E|)$. To do this, we split this charge of $2^{\rank_\lambda}$ between the services of the optimal solution. Proposition \ref{prop:FWD_LocalCharge} shows that this charge is valid.

For a service $\lambda^*\in \Lambda^*$ made by the optimal solution, denote the set of requests served in  $\lambda^*$ by $Q_{\lambda^*}$. Recall that for a service $\lambda \in \Lambda$ made by the algorithm, $Q_\lambda$ is the set of requests served by $\lambda$. For every $\lambda \in \Lambda$ and $\lambda^* \in \Lambda^*$, we define for ease of notation $Q_{\lambda \cap \lambda^*} \triangleq Q_\lambda \cap Q_\lambda^*$. 

For a set of requests $Q'$, we denote the cost of the optimal offline solution for $\OP$ on $Q'$ by $\OP^*(Q')$. We also use $\OP^*_{E_0 \gets 0}(Q')$ to refer to the cost of the optimal offline solution for $Q'$ where the costs of the elements $E_0 \subseteq \E$ is set to $0$. For a service $\lambda \in \Lambda$, we denote by $E_0^\lambda$ the value set to $E_0$ in Line \ref{line:FWD_BuyCheapEdges} during the service $\lambda$. The outline of the charging scheme is given in Figure \ref{fig:FWD_ChargingToOptimal}.

\begin{figure}[tb]
	\subfloat[\label{subfig:FWD_ChargingToOptimal_Charging}Charging Scheme]{\includegraphics[width=0.45\textwidth]{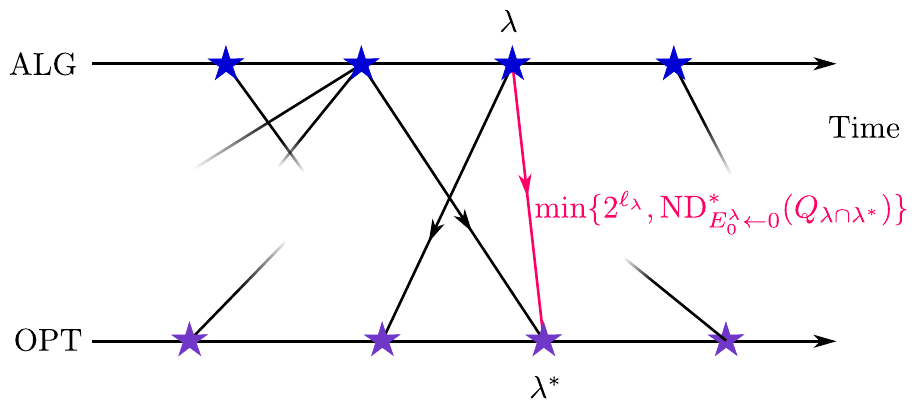}}
	\hfill
	\subfloat[\label{subfig:FWD_ChargingToOptimal_Sink}Charges to Optimal Service]{\includegraphics[width=0.45\textwidth]{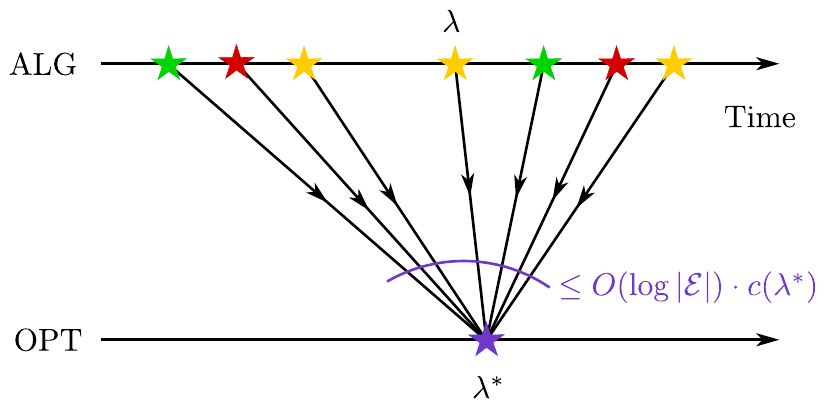}}\\

	\quad Subfigure \ref{subfig:FWD_ChargingToOptimal_Charging} shows the services of $\Lambda^\circ$ and the services of the optimal algorithm, as well as the charging of costs to the optimal solution. The amount $\min\{2^{\rank_\lambda}, \OP^*_{E_0^\lambda\gets 0}(q_{\lambda\cap\lambda^*})\}$ is charged by the service $\lambda \in \Lambda^\circ$ to the optimal service $\lambda^*$. In the proof of Lemma \ref{lem:FWD_ChargeLowerBoundsOPT}, we show that these charges are sufficient, i.e. each service $\lambda\in \Lambda^\circ$ charges at least $2^{\rank_\lambda}$.
	
	\quad Subfigure \ref{subfig:FWD_ChargingToOptimal_Sink} shows the validity of the charging, given in Proposition \ref{prop:FWD_LocalCharge}. This proposition shows that the total amount charged to an optimal service $\lambda^*$ exceedes its cost by a factor of at most $O(\log |\E|)$. This is shown by partitioning the services which charge cost to $\lambda^*$ into three types. The first type (green) is low-level services, which are shown to charge a total of at most $O(1)\cdot c(\lambda^*)$. The second type (yellow) is medium-level services. Each of these charges at most $c(\lambda^*)$, but there are at most $O(\log |\E|)$ such yellow services. The last type (red), high-level services, are shown to charge $0$ to $\lambda^*$. 
	
	\caption{\label{fig:FWD_ChargingToOptimal} Visualization of Services}
\end{figure}

\begin{prop}
	\label{prop:FWD_LocalCharge}
	There exists a constant $\beta$ such that for every optimal service $\lambda^* \in \Lambda^*$, we have that 
	\begin{equation}
		\label{eq:FWD_ChargePerOptimalService}
		\sum_{\lambda\in \Lambda^\circ}\min\{2^{\rank_\lambda},\OP^*_{E_0^\lambda\gets 0}(Q_{\lambda \cap \lambda^*})\} \le \beta\log |\E| \cdot c(\lambda^*)
	\end{equation}
\end{prop}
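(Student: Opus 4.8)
The plan is to fix an optimal service $\lambda^*$ and partition the set of services $\lambda \in \Lambda^\circ$ that charge to $\lambda^*$ (i.e. those with $Q_{\lambda \cap \lambda^*} \neq \emptyset$) according to their level $\rank_\lambda$, into three regimes relative to a reference level determined by $\lambda^*$. Set $j^* \triangleq \lceil \log c(\lambda^*) \rceil$ (roughly the logarithmic class of $c(\lambda^*)$), and also let $j_{\min}^*$ be the smallest level such that the cheap-element threshold $2^{j}/|\E|$ is at most $c(\lambda^*)$; concretely $j_{\min}^* = j^* - \lceil \log |\E|\rceil$. We classify the charging services as: \emph{low-level} if $\rank_\lambda < j_{\min}^*$; \emph{medium-level} if $j_{\min}^* \le \rank_\lambda \le j^*$; and \emph{high-level} if $\rank_\lambda > j^*$. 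The first key observation is that all the requests in $Q_{\lambda \cap \lambda^*}$, ranging over the charging services $\lambda$, form an \emph{intersecting} set: each such request is served both by $\lambda^*$ in the optimal solution (so it is pending at the time of $\lambda^*$) and by $\lambda$ in the algorithm, hence by Lemma~\ref{lem:FWD_UniqueClass} there is at most one charging service of each level. This is what makes the geometric-series bounds below work.

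For the high-level services, I would show each charges $0$. If $\rank_\lambda > j^*$, then $2^{\rank_\lambda} > c(\lambda^*) \ge \OP^*(Q_{\lambda^*}) \ge \OP^*(Q_{\lambda \cap \lambda^*}) \ge \OP^*_{E_0^\lambda \gets 0}(Q_{\lambda \cap \lambda^*})$ (monotonicity of $\OP^*$ under restricting the request set and under zeroing costs), so the $\min$ in~\eqref{eq:FWD_ChargePerOptimalService} equals $\OP^*_{E_0^\lambda \gets 0}(Q_{\lambda \cap \lambda^*})$ — but wait, that's not yet zero. The real point is subtler: I claim that for $\rank_\lambda$ large enough, $\lambda$ cannot be charged at all while also serving requests of $Q_{\lambda^*}$ — more precisely, that there is at most one such $\lambda$ per level and the levels above $j^*$ contribute a convergent-from-above tail that we must argue away. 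So instead, for the high regime I would bound the charge by $\OP^*_{E_0^\lambda \gets 0}(Q_{\lambda \cap \lambda^*}) \le c(\lambda^*)$ and use the fact that Lemma~\ref{lem:FWD_UniqueClass} forces at most one charging service per level together with a packing argument: two charging services of consecutive high levels would need disjoint "leftover" requests contradicting intersection — in fact I would argue there is at most one high-level charging service, so its contribution is at most $c(\lambda^*)$. (Reading the figure caption, the intended claim is that high-level services charge exactly $0$; the cleanest route to that is that a charged service of level $> j^*$ has its leftover eligible requests all of level $> j^*$ after the service, and these cannot have been served by the much-cheaper optimal service $\lambda^*$ — I'd reconcile this with the definitions during write-up.)

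For the medium regime, there are at most $j^* - j_{\min}^* + 1 = O(\log |\E|)$ levels, at most one charging service per level by Lemma~\ref{lem:FWD_UniqueClass}, and each contributes at most $\min\{2^{\rank_\lambda}, \OP^*_{E_0^\lambda \gets 0}(Q_{\lambda \cap \lambda^*})\} \le \OP^*_{E_0^\lambda \gets 0}(Q_{\lambda \cap \lambda^*}) \le \OP^*(Q_{\lambda \cap \lambda^*}) \le \OP^*(Q_{\lambda^*}) \le c(\lambda^*)$, giving a total of $O(\log |\E|)\cdot c(\lambda^*)$. For the low regime, each charging service contributes at most $2^{\rank_\lambda}$, and since there is at most one per level and the levels are all $< j_{\min}^*$, summing the geometric series gives at most $2 \cdot 2^{j_{\min}^*} = 2 c(\lambda^*)/|\E| \cdot 2^{\lceil \log|\E|\rceil - \lceil\log|\E|\rceil}\cdot(\dots) = O(c(\lambda^*))$ — I need to double-check the constant, but the point is the low-level geometric tail sums to $O(2^{j_{\min}^*}) = O(c(\lambda^*))$ since $2^{j_{\min}^*} \le c(\lambda^*)$. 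Adding the three regimes yields $\sum_{\lambda \in \Lambda^\circ} \min\{2^{\rank_\lambda}, \OP^*_{E_0^\lambda \gets 0}(Q_{\lambda \cap \lambda^*})\} \le O(\log|\E|)\cdot c(\lambda^*)$, which is~\eqref{eq:FWD_ChargePerOptimalService} for a suitable constant $\beta$.

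The main obstacle I anticipate is pinning down the high-level case rigorously — specifically, justifying that services of level well above $\log c(\lambda^*)$ genuinely charge nothing (rather than just "a bounded amount"), which is what keeps the number of nonzero-charging levels at $O(\log|\E|)$ rather than unbounded. This should follow from combining the eligibility/upgrade mechanics (after a charged service of level $j$, all its still-pending eligible requests are upgraded to level $j$) with the lower bound $\OP^*(\{q\}) \ge 2^{\rank_q - 1}$ on the optimal cost of serving any single request of level $\rank_q$: if $\lambda$ of level $j$ serves a request $q$ also served by $\lambda^*$, then... I would need $c(\lambda^*) \ge \OP^*(\{q\}) \ge 2^{\rank_q - 1} \ge 2^{j - 2}$ (using that the leftover/served requests reach level at least $j-1$ somewhere along the way), forcing $j \le \log c(\lambda^*) + O(1)$, i.e. $\lambda$ is not high-level at all. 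Making this last inequality precise — relating the level of the \emph{served} requests $Q_{\lambda \cap \lambda^*}$ to $\rank_\lambda$ via the charging mechanism — is the delicate step, and I'd want to state and prove it as a separate claim before assembling the three-regime bound.
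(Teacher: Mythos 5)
Your overall plan --- split the charging services into three level regimes, invoke Lemma~\ref{lem:FWD_UniqueClass} to get at most one per level, use a geometric tail for low levels and an $O(\log|\E|)$ count for the middle window --- is exactly the paper's scheme. But two things are off, and one of them is the crux of the proposition.

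First, your regime boundaries are shifted the wrong way. The paper takes $\ell = \lfloor\log c(\lambda^*)\rfloor$, lets the \emph{low} regime be $\rank_\lambda \le \ell$ (geometric sum $\le 2c(\lambda^*)$), the \emph{medium} regime be $\ell < \rank_\lambda < \ell + \lceil\log|\E|\rceil + 1$ (each contributes $\le c(\lambda^*)$), and the \emph{high} regime be $\rank_\lambda \ge \ell + \lceil\log|\E|\rceil + 1$. You placed the $\log|\E|$-wide middle window \emph{below} $\log c(\lambda^*)$. With your split, the levels between $j^*$ and $j^* + \lceil\log|\E|\rceil$ are "high'' in your terminology, and they genuinely need not charge zero --- they can each contribute up to $c(\lambda^*)$. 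You would then need to show there are only $O(\log|\E|)$ such levels with nonzero charge, but your argument for "high charges zero'' only has any hope well above $j^* + \log|\E|$.

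Second, your attempted mechanism for killing the high regime does not work. You write that for a request $q$ served by a high-level $\lambda$ you would get $\OP^*(\{q\}) \ge 2^{\rank_q - 1} \ge 2^{j-2}$, but this uses two false steps: (a) a request served by $\lambda$ is merely eligible, so $\rank_q \le \rank_\lambda$, not $\rank_q \ge \rank_\lambda - 1$; served requests can have arbitrarily low level; and (b) the inequality $\OP^*(\{q\}) \ge 2^{\rank_q}$ is only valid for the \emph{initial} level of $q$ (set in \UponRequest), not the current level after upgrades. So this route cannot rule out high-level charging services.

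The missing idea --- which is what makes the whole bound $O(\log|\E|)$ rather than unbounded --- is the cheap-elements observation from Line~\ref{line:FWD_BuyCheapEdges}. Every element used by $\lambda^*$ costs at most $c(\lambda^*) \le 2^{\ell+1}$. When $\rank_\lambda \ge \ell + \lceil\log|\E|\rceil + 1$, the threshold $2^{\rank_\lambda}/|\E| \ge 2^{\ell+1}$, so all of $\lambda^*$'s elements are swept into $E_0^\lambda$ and become free in the instance $\OP_{E_0^\lambda \gets 0}$. Since $\lambda^*$'s elements already serve all of $Q_{\lambda^*} \supseteq Q_{\lambda\cap\lambda^*}$, they form a cost-zero feasible solution, hence $\OP^*_{E_0^\lambda\gets 0}(Q_{\lambda\cap\lambda^*}) = 0$ and these services charge nothing. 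This is why the cutoff for the high regime is $\ell + \lceil\log|\E|\rceil + 1$, and why nothing at all happens above it. Once you fix the thresholds and plug in this argument, the three cases assemble exactly as you intended.
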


\begin{proof}
	Fix an optimal service $\lambda^* \in \Lambda^*$. Denote by $\Lambda'\subseteq \Lambda^\circ$ the subset of charged services made by the algorithm in which a request from $Q_{\lambda^*}$ is served (other services, for which $Q_{\lambda \cap \lambda^*}=\emptyset$, need not be considered). Observe that $Q_{\lambda^*}$ is an intersecting set, as the optimal solution served $Q_{\lambda^*}$ is a single point in time. Lemma \ref{lem:FWD_UniqueClass} implies that for every level $j$, there exists at most one $j$-level service in $\Lambda'$. Define $\ell = \lfloor \log (c(\lambda^*)) \rfloor$.  Now, consider the following cases for a service $\lambda\in \Lambda'$:
	\begin{enumerate}
		\item $\rank_\lambda \le \ell$. Each such $\lambda$ contributes at most $2^{\rank_\lambda}$ to the left-hand side of Equation \ref{eq:FWD_ChargePerOptimalService}. Summing over at most one service from each level yields a geometric sum which is at most $2^{\ell +1} \le 2\cdot c(\lambda^*)$.
		
		\item $\ell < \rank_\lambda < \ell + \lceil \log |\E|  \rceil+ 1$. For such $\lambda$, observe that $\min\{2^{\rank_\lambda}, \OP^*_{E_0^\lambda \gets 0}(Q_{\lambda \cap \lambda^*})\} \le \OP^*(Q_\lambda) \le c(\lambda^*)$. Summing over at most a single service from each level, the total contribution to the left-hand side of Equation \ref{eq:FWD_ChargePerOptimalService} from these levels is at most $\lceil \log |\E| \rceil\cdot c(\lambda^*)$.
		
		\item $\rank_\lambda \ge \ell + \lceil \log |\E| \rceil +1$. Observe that $\min\{2^{\rank_\lambda}, \OP^*_{E_0^\lambda \gets 0}(Q_{\lambda \cap \lambda^*})\} \le \OP^*_{E_0^\lambda \gets 0}(Q_{\lambda^*})$. We now claim that $\OP^*_{E_0^\lambda \gets 0}(Q_\lambda^*) =0$, which implies that the total contribution from these levels to the left-hand side of Equation \ref{eq:FWD_ChargePerOptimalService} is $0$. 
		
		Indeed, consider that every element in $\lambda^*$ costs at most $c(\lambda^*) \le 2^{\ell +1}$. Thus, since $2^{\rank_\lambda} \ge 2^{\ell+1} \cdot |\E|$, we have that $\lambda$ added all elements of $\lambda^*$ to $E_0^\lambda$ in Line \ref{line:FWD_BuyCheapEdges}. Thus, $\lambda^*$ is itself a feasible solution for $Q_{\lambda^*}$ of cost $0$, completing the proof.
	\end{enumerate}
	Summing over the contributions from each level completes the proof.
\end{proof}

\begin{proof}[Proof of Lemma \ref{lem:FWD_ChargeLowerBoundsOPT}]
	It is enough to show that for every charged service $\lambda \in \Lambda^\circ$, we have that
	\begin{equation}
		\label{eq:FWD_GlobalChargeIsLocalCharge}
		 2^{\rank_\lambda} \le \sum_{\lambda^* \in \Lambda^*} \min\{2^{\rank_\lambda},\OP^*_{E_0^\lambda\gets 0}(Q_{\lambda \cap \lambda^*})\} 
	\end{equation}
	
	Summing over all $\lambda \in \Lambda^\circ$ and using Proposition \ref{prop:FWD_LocalCharge} would immediately yield the lemma.
	
	If one of the summands on the right-hand side of Equation \ref{eq:FWD_GlobalChargeIsLocalCharge} is $2^{\rank_\lambda}$, the claim clearly holds, and the proof is complete. Otherwise, the right-hand side is exactly $\sum_{\lambda^* \in \Lambda^*}\OP^*_{E_0^\lambda\gets 0}(Q_{\lambda \cap \lambda^*}) $. Observe that $\bigcup_{\lambda^* \in \Lambda^*} Q_{\lambda \cap \lambda^*} = Q_\lambda$, and thus a feasible solution for $Q_\lambda$ is to take the union of the elements of the optimal solutions for $Q_{\lambda \cap \lambda^*}$ for every $\lambda^*$. This implies that 
	\[  \OP^*_{E_0^\lambda\gets 0}(Q_\lambda)   \le \sum_{\lambda^* \in \Lambda^*}\OP^*_{E_0^\lambda\gets 0}(Q_{\lambda \cap \lambda^*}) \]
	
	We claim that $2^{\rank_\lambda}   \le \OP^*_{E_0^\lambda\gets 0}(Q_\lambda)$, which completes the proof. Indeed, from Proposition \ref{prop:FWD_OnlyFullServicesCharged}, we know that $\lambda$ is an imperfect service. This means that during the construction of $Q_\lambda$, the loop of Line \ref{line:FWD_AddingRequestsToService} was completed in the $\Break$ command of Line \ref{line:FWD_ServiceIsFull}. Observing the value of the variable $S'$ at that line, we have that $c(S') \ge \gamma\cdot 2^{\rank_\lambda}$. Since $S'$ was obtained from a call to $\OP_{E_0^\lambda \gets 0}(Q_\lambda)$, the guarantee of the approximation algorithm for $\OP$ implies that $\OP^*_{E_0 \gets 0}(Q_\lambda) \ge 2^{\rank_\lambda}$.
\end{proof}

\begin{proof}[Proof of Theorem \ref{thm:FWD_Competitiveness}]
	The competitiveness of the algorithm results immediately from Lemmas \ref{lem:FWD_ALGUpperBound}, \ref{lem:FWD_PrimaryLowerBoundsOPT} and \ref{lem:FWD_ChargeLowerBoundsOPT}.
	
	As for the running time, it is clear that the main cost of the algorithm is calling the approximation algorithm $\OP$, and that this is done $O(|Q|)$ times (every iteration of the loop in Line \ref{line:FWD_AddingRequestsToService} adds a request to the ongoing service).
\end{proof}

\section{Applications and Extensions of the Deadline Framework}
\label{sec:APP}

In this section, we apply the framework to solving some network design problems in the deadline model, as well as describe some extensions of the framework.


\subsection{Edge-Weighted Steiner Tree and Steiner Forest}
\label{subsec:APP_SFD}
In this subsection, we consider the edge- weighted Steiner tree problem with deadlines. In this problem, we are given a (simple) graph $G=(V,E)$ of $n$ nodes, with a cost function $c:E \to \mathbb{R}^+$ on the edges. In addition, the input designates a node $\rho \in V$ as the \emph{root}. Requests arrive over time, each with an associated deadline, where each request is a terminal $u\in V$.

At any point in time, the algorithm may transmit some subset of edges $E'\subseteq E$, at a cost which is $\sum_{e\in E'} c(e)$. A pending request $q$ for a node $u\in V$ is considered served by this transmission if $u$ is in the same connected component as $\rho$ in the subgraph $G'=(V,E')$. 

A more general problem is the edge-weighted Steiner forest problem with deadlines. In this problem, we are again given a simple graph $G=(V,E)$ of $n$ nodes, and a cost function $c:E\to \mathbb{R}^+$ on the edges. Each request is now a pair of terminals $(u_1,u_2)\in V$. Again, the algorithm can transmit a subset of edges $E'$, paying $\sum_{e\in E'} c(e)$, and serving any pending request $q$ on $(u_1,u_2)$ such that $u_1,u_2$ are in the same connected component in $G'=(V,E')$. Observe that Steiner tree with deadlines is a special case of Steiner forest with deadlines where each requested pair contains the root $\rho$.

The Steiner forest with deadlines problem is a special case of the $\OP$ problem we described in Section \ref{sec:FWD}. The collection of elements in this case is the set of edges. For a request $q$ between two terminals $(u_1,u_2)$, the set $X_q$ of transmissions satisfying $q$ is the set of all transmissions $E'\subseteq E$ such that $u_1$ and $u_2$ are in the same connected component in the subgraph $(V,E')$.

We apply the framework of Section \ref{sec:FWD} to the Steiner forest with deadlines problem, thus obtaining an algorithm for both Steiner tree and Steiner forest with deadlines. The following theorem is due to Goemans and Williamson~\cite{Goemans1992}.

\begin{pthm}[\cite{Goemans1992}]
	\label{thm:APP_SFD_GoemansWilliamson}
	There exists a deterministic $2$-approximation for (offline) edge-weighted Steiner forest.
\end{pthm}

Plugging the algorithm of Theorem \ref{thm:APP_SFD_GoemansWilliamson} into the framework of Section \ref{sec:FWD}, and observing that $\log |E| \le 2\log n$, we obtain the following theorem.

\begin{thm}
	There exists an $O(\log n)$-competitive deterministic algorithm for edge-weighted Steiner forest with deadlines which runs in polynomial time.
\end{thm}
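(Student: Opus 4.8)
The plan is to invoke Theorem~\ref{thm:FWD_Competitiveness} with the encapsulated approximation being the Goemans--Williamson algorithm of Theorem~\ref{thm:APP_SFD_GoemansWilliamson}, so the only work is to check that the hypotheses of the framework are met and to convert the $\log|\E|$ factor into a $\log n$ factor. First I would recall, as spelled out in Subsection~\ref{subsec:APP_SFD}, that edge-weighted Steiner forest with deadlines is literally an instance of the abstract problem $\OP$ with deadlines from Section~\ref{sec:FWD}: the element set $\E$ is the edge set $E$, with the given costs $c:E\to\mathbb{R}^+$, and for a request $q$ on terminals $(u_1,u_2)$ the satisfying family $X_q$ is the set of all $E'\subseteq E$ for which $u_1,u_2$ lie in a common connected component of $(V,E')$. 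This family is clearly upwards-closed, so the model of Section~\ref{sec:FWD} applies verbatim, and Steiner tree with deadlines is subsumed as the special case in which every requested pair contains $\rho$.

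Next I would check that the framework's calls to the approximation oracle are all legitimate offline Steiner forest instances that Theorem~\ref{thm:APP_SFD_GoemansWilliamson} can handle. The framework issues two kinds of calls: $\OP(\{q\})$ and, more generally, $\OP_{E_0\gets 0}(Q_\lambda)$, i.e.\ Steiner forest on a subset of requests where the edges of some set $E_0$ have their cost reset to $0$. Both are ordinary (offline) edge-weighted Steiner forest instances on the same graph $G$; the only point needing a remark is that some edge costs may be $0$. The Goemans--Williamson primal--dual $2$-approximation works for arbitrary nonnegative edge costs (zero-cost edges can be treated directly, or equivalently contracted before running the algorithm and re-inserted afterwards at no cost), so it furnishes a deterministic polynomial-time $2$-approximation for every instance the framework produces; hence we may take $\gamma=2$.

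Applying Theorem~\ref{thm:FWD_Competitiveness} with $\gamma=2$ then yields a deterministic, polynomial-time algorithm for edge-weighted Steiner forest with deadlines whose competitive ratio is $O(\gamma\log|\E|)=O(\log|E|)$. Finally, since $G$ is simple on $n$ nodes we have $|E|\le\binom{n}{2}<n^2$, so $\log|E|\le 2\log n$ and the ratio is $O(\log n)$. This establishes the theorem; as a special case it also gives an $O(\log n)$-competitive deterministic algorithm for edge-weighted Steiner tree with deadlines.

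There is essentially no hard step here — the heavy lifting was done in Theorem~\ref{thm:FWD_Competitiveness}. The only mild subtlety, and the one I would state explicitly rather than leave implicit, is the handling of the $\OP_{E_0\gets 0}$ calls: one must be sure the chosen offline approximation tolerates zero (not merely strictly positive) edge costs, which Goemans--Williamson does. Everything else is the observation that the abstraction of Section~\ref{sec:FWD} is a syntactic match and the trivial bound $\log|E|=O(\log n)$.
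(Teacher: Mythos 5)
Your proposal matches the paper's proof exactly: it instantiates the deadline framework (Theorem~\ref{thm:FWD_Competitiveness}) with the Goemans--Williamson $2$-approximation of Theorem~\ref{thm:APP_SFD_GoemansWilliamson}, checks that Steiner forest with deadlines fits the abstract model $\OP$ of Section~\ref{sec:FWD}, and converts $\log|\E|$ to $O(\log n)$ via $|E|\le n^2$. The only addition is your (correct and worthwhile) explicit remark that Goemans--Williamson tolerates the zero-cost edges arising from the $\OP_{E_0\gets 0}$ calls, a point the paper leaves implicit.
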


\subsubsection*{Strong Edge-Weighted Steiner Forest}

In the original Steiner forest problem (without deadlines), requesting pairs could be used to ensure connectivity between more than two nodes in the graph. Indeed, one could guarantee connectivity between $k$ nodes by releasing $k-1$ pair requests.

In the Steiner forest with deadlines problem, this is no longer the case. Since the transmissions serving the $k-1$ pair requests can occur in different times, there is no guarantee that there exists a point in time in which \emph{all} $k$ nodes are connected.

This motivates the \emph{strong} Steiner forest problem with deadlines, in which requests consist of \emph{subsets} of nodes which must be connected at the same time. The corresponding offline problem is still regular Steiner forest (since subset requests can be reduced to pair requests in the offline setting). Thus, we can apply the framework to the approximation algorithm of Goemans and Williamson~\cite{Goemans1992} as for the standard Steiner forest with deadlines, and obtain the following theorem.

\begin{thm}
	There exists an $O(\log n)$-competitive deterministic algorithm for strong edge-weighted Steiner forest with deadlines which runs in polynomial time.
\end{thm}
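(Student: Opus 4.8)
The plan is to derive this theorem as a direct instantiation of the deadline framework of Section \ref{sec:FWD}, exactly as in the case of ordinary edge-weighted Steiner forest with deadlines. First I would verify that the strong edge-weighted Steiner forest problem with deadlines is a special case of the abstract problem $\OP$ with deadlines: the element set $\E$ is the edge set $E$ of the input graph, with the given edge costs, and for a request $q$ consisting of a subset $W_q \subseteq V$ of terminals, the satisfying family $X_q$ is $\{E' \subseteq E : \text{all vertices of } W_q \text{ lie in a single connected component of } (V,E')\}$. This family is upwards-closed, since adding edges can only merge connected components, so it is a legitimate instance of $\OP$.

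Next I would identify the associated offline problem. Given a finite set of subset requests $Q$, the offline task is to find a minimum-cost $E' \subseteq E$ with $E' \in X_q$ for every $q \in Q$. Since timing is irrelevant offline, a subset request $W_q = \{u_1, \dots, u_m\}$ can be replaced by the $m-1$ pair requests $(u_1,u_2),(u_1,u_3),\dots,(u_1,u_m)$ without changing the collection of feasible solutions; hence the associated offline problem is precisely ordinary (offline) edge-weighted Steiner forest. By Theorem \ref{thm:APP_SFD_GoemansWilliamson}, this admits a deterministic polynomial-time $2$-approximation.

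Finally I would invoke Theorem \ref{thm:FWD_Competitiveness} with $\gamma = 2$, obtaining a deterministic polynomial-time algorithm that is $O(\gamma \log |\E|) = O(\log |E|)$-competitive for strong edge-weighted Steiner forest with deadlines, and then use $|E| \le \binom{n}{2} \le n^2$, so that $\log |E| \le 2\log n$, to conclude the claimed $O(\log n)$ bound. The only genuine content beyond bookkeeping is the reduction of subset connectivity to pairwise connectivity in the \emph{offline} problem; I do not anticipate any real obstacle there. The subtlety the authors emphasize — that subset and pair requests behave differently in the online deadline setting — is precisely what makes checking the offline equivalence the right step, but that equivalence itself is immediate once timing is removed.
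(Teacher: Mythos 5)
Your proposal matches the paper's argument exactly: cast strong Steiner forest with deadlines as an instance of $\OP$ with $\E = E$ and an upwards-closed $X_q$, observe that the associated \emph{offline} problem collapses to ordinary edge-weighted Steiner forest because subset requests reduce to pair requests once timing is removed, plug the Goemans--Williamson $2$-approximation into Theorem \ref{thm:FWD_Competitiveness}, and bound $\log|E| \le 2\log n$. No gaps; this is the same proof the paper gives.
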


\subsection{Multicut}

In this subsection, we consider the multicut problem with deadlines. In this problem, we are again given a (simple) graph $G=(V,E)$ of $n$ nodes, with a cost function $c:E\to \mathbb{R}^+$ on the edges. Requests arrive over time, each with an associated deadline, where each request is a pair of terminals $\{u_1,u_2\}\in V$.

At any point in time, the algorithm may choose to momentarily disrupt a subset of edges $E'\subseteq E$, at a cost of $\sum_{e\in E'}c(e)$. A pending request $q$, which consists of the pair or terminals $\{u_1, u_2\}$, is served by this disruption if $u_1$ and $u_2$ are in two distinct connected components in the graph $G'=(V,E\backslash E')$. 

This problem is a special case of the $\OP$ problem we described in Section \ref{sec:FWD}. The collection of elements in this case is again the set of edges. For any request $q$ for a pair of terminals $\{u_1,u_2\}$, the set of satisfying transmissions $X_q$ is the collection of subsets of edges of the form $E'$ such that $u_1$ and $u_2$ are in two distinct connected components in the subgraph $(V, E\backslash E')$.

The following result is due to Garg \textit{et al.}~\cite{Garg1993}.

\begin{pthm}[\cite{Garg1993}]
	\label{thm:APP_MCD_GargEtAl}
	There exists a deterministic, polynomial-time, $O(\log n)$-approximation for multicut.
\end{pthm}

Plugging the approximation algorithm of Theorem \ref{thm:APP_MCD_GargEtAl} into the framework of Section \ref{sec:FWD}, and observing that $\log |E| \le 2\log n$, yields the following theorem.
\begin{thm}
	\label{thm:MCD_Competitiveness}
	There exists a deterministic $O(\log^2 n)$-competitive algorithm for multicut with deadlines which runs in polynomial time.
\end{thm}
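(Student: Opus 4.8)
The plan is to obtain this theorem as a direct instantiation of the deadline framework of Section~\ref{sec:FWD}, with no new machinery. The discussion preceding the statement has already verified that multicut with deadlines is a special case of the abstract problem $\OP$: the element set is $\E = E$, and for a request on a terminal pair $\{u_1,u_2\}$ the satisfying family $X_q$ consists of all edge subsets $E'$ whose deletion places $u_1$ and $u_2$ in distinct connected components of $(V, E\setminus E')$. This family is upwards-closed, since deleting yet more edges cannot reconnect an already-separated pair; hence Theorem~\ref{thm:FWD_Competitiveness} applies verbatim.

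First I would invoke the offline approximation guarantee: by Theorem~\ref{thm:APP_MCD_GargEtAl} (Garg \emph{et al.}~\cite{Garg1993}), offline multicut admits a deterministic, polynomial-time $\gamma$-approximation with $\gamma = O(\log n)$. Plugging this $\gamma$ into Theorem~\ref{thm:FWD_Competitiveness} immediately yields a deterministic $O(\gamma \log |\E|)$-competitive algorithm for multicut with deadlines, which runs in polynomial time --- the framework performs $O(|Q|)$ calls to the encapsulated approximation algorithm, each of which is polynomial.

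Next I would bound $\log|\E|$. Since $G$ is a simple graph on $n$ nodes, $|\E| = |E| \le \binom{n}{2} < n^2$, so $\log|\E| < 2\log n$. Substituting, the competitive ratio is $O(\gamma \log|\E|) = O(\log n \cdot \log n) = O(\log^2 n)$, which is exactly the claimed bound, and the running time is polynomial as just noted.

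There is no substantial obstacle here; the only content is checking that the abstract interface of Section~\ref{sec:FWD} is met and then performing this substitution. The one point worth a sentence of care is that the framework also issues modified calls $\OP_{E_0 \gets 0}$, in which the costs of the ``cheap'' edges $E_0$ are set to $0$; these are still ordinary offline multicut instances, merely with a different nonnegative cost function, so the Garg \emph{et al.} $O(\log n)$-approximation applies to them as well, and the stated guarantee of Theorem~\ref{thm:FWD_Competitiveness} goes through unchanged.
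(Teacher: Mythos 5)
Your proposal is correct and follows essentially the same route as the paper: identify multicut with deadlines as an instance of $\OP$, cite the Garg \emph{et al.}\ $O(\log n)$-approximation (Theorem~\ref{thm:APP_MCD_GargEtAl}), and plug it into Theorem~\ref{thm:FWD_Competitiveness} with the observation $\log|\E|\le 2\log n$. Your added remarks that $X_q$ is upwards-closed and that the $\OP_{E_0\gets 0}$ calls are still ordinary multicut instances are sound, though the paper leaves them implicit.
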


\subsubsection*{Strong Multicut}

As was the case in Steiner forest, using pair requests in the original offline multicut problem could ensure disconnection between subsets of nodes, which is not the case for the deadline problem. This again motivates a strong version of multicut with deadlines, in which each request is a collection of nodes to be simultaneously disconnected from one another through disrupting some edges. 

As in the Steiner forest problem, the fact that these subset requests can be reduced in the offline case to pair requests allows us to use the approximation algorithm of Theorem \ref{thm:APP_MCD_GargEtAl} in the framework of Section \ref{sec:FWD}, yielding the following theorem.

\begin{thm}
	There exists an $O(\log^2 n)$-competitive deterministic algorithm for strong multicut with deadlines which runs in polynomial time.
\end{thm}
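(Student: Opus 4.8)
The plan is to observe that strong multicut with deadlines is an instance of the abstract problem $\OP$ with deadlines from Section \ref{sec:FWD}, whose associated offline problem is just ordinary multicut, so that Theorem \ref{thm:FWD_Competitiveness} applies essentially verbatim, with the approximation ratio supplied by Theorem \ref{thm:APP_MCD_GargEtAl}.

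First I would verify the abstract structure. For a request $q$ consisting of a subset $S\subseteq V$ of terminals, let $X_q$ be the collection of edge sets $E'\subseteq E$ such that every two distinct nodes of $S$ lie in distinct connected components of $(V,E\setminus E')$. If $E_1\subseteq E_2$ and $E_1\in X_q$, then deleting the larger set $E_2$ can only refine the partition into connected components, so $E_2\in X_q$; hence $X_q$ is upwards-closed, and strong multicut with deadlines is a legitimate instance of $\OP$ with deadlines, with the element set $\E$ equal to the edge set $E$.

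The key observation is that the offline problem associated with strong multicut --- given a family of subset requests $S_1,\dots,S_m$, find a minimum-cost edge set whose removal disconnects every pair of nodes inside each $S_i$ --- is literally an instance of offline (ordinary) multicut on the terminal-pair set $P=\bigcup_{i}\{\{u,v\}:u,v\in S_i,\ u\ne v\}$. An edge set is feasible for the strong instance if and only if it is feasible for this ordinary multicut instance, so the two optima coincide, and the $O(\log n)$-approximation of Theorem \ref{thm:APP_MCD_GargEtAl} is therefore also a deterministic polynomial-time $O(\log n)$-approximation for offline strong multicut. The same applies to the modified calls $\OP_{E_0\gets 0}$ used inside the deadline framework, since the algorithm of Garg \textit{et al.} accepts arbitrary nonnegative edge costs; and every oracle call the framework makes is of exactly this type, on a subset of the released requests, so this single approximation suffices.

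Finally, I would plug $\gamma=O(\log n)$ into Theorem \ref{thm:FWD_Competitiveness}. Since $\E=E$ we have $\log|\E|=\log|E|\le 2\log n$, so the framework produces a deterministic, polynomial-time algorithm of competitive ratio $O(\gamma\log|\E|)=O(\log^2 n)$, which is the claim. I do not expect a genuine obstacle here: the only substantive point is the (offline-only) reduction of subset requests to pair requests, exactly as in the strong edge-weighted Steiner forest case, and the deadline framework of Section \ref{sec:FWD} takes care of everything else.
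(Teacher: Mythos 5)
Your proposal is correct and matches the paper's argument: the paper likewise verifies that strong multicut fits the abstract $\OP$ structure, notes that subset requests reduce (offline) to pair requests so Garg et al.'s $O(\log n)$-approximation applies, and then invokes Theorem \ref{thm:FWD_Competitiveness} with $\log|\E|\le 2\log n$. Your extra check that $\OP_{E_0\gets 0}$ calls remain well-posed (Garg et al. handles arbitrary nonnegative edge costs) is a detail the paper leaves implicit but is consistent with its intent.
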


\subsection{Node-Weighted Steiner Forest}

The Steiner forest (and Steiner tree) problems have also been considered in the setting in which vertices, rather than edges, are bought. In this subsection, we apply the framework in this setting.

Formally, in the node-weighted Steiner forest with deadlines problem, we are given a graph $G=(V,E)$ such that $|V|=n$, and a cost function $c:V\to \mathbb{R}^+$ over the vertices. Each request $q$ is of two terminals $u_1,u_2 \in V$, and comes with an associated deadline. At any point in time, the algorithm may transmit a subset of vertices $V'\subseteq V$, at a cost of $\sum_{v\in V'} c(v)$. This transmission serves a pending request $q$ if $u_1$ and $u_2$ are in the same connected component in the subgraph induced by $V'$ (and in particular $u_1,u_2 \in V'$).

The node-weighted Steiner forest is a special case of the $\OP$ problem we described in Section \ref{sec:FWD}. The collection of elements in this case is the set of nodes. For a request $q$ for a pair of terminals $(u_1,u_2)$, the set of satisfying transmissions $X_q$ is the collection of node subsets $V'\subseteq V$ such that $u_1$ and $u_2$ are connected in the subgraph induced by $V'$.

We apply the framework of Section \ref{sec:FWD} to the node-weighted Steiner forest with deadlines problem, thus obtaining an algorithm for the node-weighted versions of both Steiner tree and Steiner forest with deadlines.

The following theorem is due, independently, to Bateni \emph{et al.}~\cite{DBLP:journals/siamcomp/BateniHL18} and Chekuri \emph{et al.}~\cite{DBLP:conf/approx/ChekuriEV12}.

\begin{pthm}[\cite{DBLP:journals/siamcomp/BateniHL18,DBLP:conf/approx/ChekuriEV12}]
	\label{thm:APP_NWSFD_BateniEtAl}
	There exists a polynomial-time, deterministic $O(\log n)$-approximation algorithm for node-weighted Steiner forest.
\end{pthm}

Applying the framework of Section \ref{sec:FWD} yields the following theorem.

\begin{thm}
	There exists an $O(\log^2 n)$-competitive deterministic algorithm for node-weighted Steiner forest with deadlines which runs in polynomial time.
\end{thm}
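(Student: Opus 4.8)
The plan is to simply instantiate the deadline framework of Section~\ref{sec:FWD} (Theorem~\ref{thm:FWD_Competitiveness}) with the offline approximation of Theorem~\ref{thm:APP_NWSFD_BateniEtAl}. First I would verify that node-weighted Steiner forest with deadlines is genuinely an instance of the abstract problem $\OP$: the element set is $\E = V$, and for a request $q$ on the pair $(u_1,u_2)$ the satisfying family $X_q$ is the set of node subsets $V' \subseteq V$ such that $u_1,u_2$ lie in a common connected component of the subgraph induced by $V'$. This family is upwards-closed — adding more vertices to $V'$ cannot disconnect $u_1$ from $u_2$ — so the abstract structure required by Section~\ref{sec:FWD} is met, and transmitting $V'$ serves exactly the pending requests $q$ with $V' \in X_q$.

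Next I would observe that $|\E| = |V| = n$, hence $\log|\E| = \log n$. Theorem~\ref{thm:APP_NWSFD_BateniEtAl} supplies a deterministic, polynomial-time $\gamma$-approximation for offline node-weighted Steiner forest with $\gamma = O(\log n)$; note this applies to arbitrary nonnegative vertex costs, so in particular it handles the zeroed-cost instances $\OP_{E_0 \gets 0}$ invoked by the framework in Line~\ref{line:FWD_BuyCheapEdges}. Plugging $\gamma = O(\log n)$ into Theorem~\ref{thm:FWD_Competitiveness} yields an $O(\gamma \log|\E|) = O(\log n \cdot \log n) = O(\log^2 n)$-competitive deterministic algorithm for node-weighted Steiner forest with deadlines. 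The polynomial running time is inherited from Theorem~\ref{thm:FWD_Competitiveness}: the framework's cost is dominated by $O(|Q|)$ calls to the offline approximation, each of which runs in polynomial time.

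There is no substantial obstacle here; the only point worth a sentence of care is confirming that the encapsulated approximation behaves correctly on the modified instances the framework produces (costs of a ``cheap'' subset $E_0$ set to zero), which it does since the Bateni \emph{et al.}\ / Chekuri \emph{et al.}\ algorithm is stated for general nonnegative vertex weights. This completes the proof.
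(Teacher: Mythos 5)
Your proposal is correct and matches the paper's own argument: instantiate the deadline framework of Theorem~\ref{thm:FWD_Competitiveness} with the $O(\log n)$-approximation of Theorem~\ref{thm:APP_NWSFD_BateniEtAl}, note that the element set is $V$ so $\log|\E| = \log n$, and conclude $O(\log^2 n)$. The added remark about the approximation handling zeroed node costs is a fine sanity check but not a divergence from the paper's route.
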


\subsection{Edge-Weighted Steiner Network}

The (edge-weighted) Steiner network problem with deadlines is identical to the Steiner forest with deadlines problem in Subsection \ref{subsec:APP_SFD}, except that every pair request $q$ on two terminals $u_1,u_2 \in V$ also has an associated demand $f(q) \in \mathbb{N}$. A transmission of edges $E'$ now serves a pending request $q$ if there exist $f(q)$ edge-disjoint paths from $u_1$ to $u_2$ in the graph $(V,E')$.

The edge-weighted Steiner network is again a special case of $\OP$. As in the Steiner forest, the elements are the edges of the graph. For each request $q$ for a pair of terminals $\{u_1,u_2\}$ with demand $f(q)$, the set of satisfying transmissions $X_q$ is the collection of subsets of edges $E'\subseteq E$ such that there exist $f(q)$ edge-disjoint paths from $u_1$ to $u_2$ in $(V,E')$.

The following Theorem is due to Jain~\cite{DBLP:journals/combinatorica/Jain01}.

\begin{pthm}[\cite{DBLP:journals/combinatorica/Jain01}]
	\label{thm:APP_SND_Jain}
	There exists a polynomial-time, deterministic, $2$-approximation for offline edge-weighted Steiner network.
\end{pthm}

Plugging the offline approximation algorithm of Theorem \ref{thm:APP_SND_Jain} into the framework of Section \ref{sec:FWD}, and again observing that $\log |E| \le 2\log n$, yields the following theorem.

\begin{thm}
	\label{thm:APP_SND_Competitiveness}
	There exists an $O(\log n)$-competitive deterministic algorithm for edge-weighted Steiner network with deadlines which runs in polynomial time.
\end{thm}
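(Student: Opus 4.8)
The plan is to invoke Theorem \ref{thm:FWD_Competitiveness} directly, with the edge-weighted Steiner network problem playing the role of the abstract problem $\OP$. First I would verify that Steiner network with deadlines is indeed an instance of the abstract framework: the element set $\E$ is the edge set $E$, the cost function is the given $c: E \to \mathbb{R}^+$, and for each request $q$ on terminals $\{u_1, u_2\}$ with demand $f(q)$, the satisfying collection $X_q$ consists of all $E' \subseteq E$ admitting $f(q)$ edge-disjoint $u_1$--$u_2$ paths. The only nontrivial point to check is that each $X_q$ is upwards-closed, which is immediate: adding more edges to $E'$ can only increase the maximum number of edge-disjoint paths between $u_1$ and $u_2$ (by Menger's theorem, or just by noting that old paths remain valid). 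This confirms the problem fits the $\OP$ template.

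Next I would plug in the offline approximation. By Theorem \ref{thm:APP_SND_Jain} (Jain), there is a deterministic, polynomial-time $2$-approximation for offline edge-weighted Steiner network; this is exactly the encapsulated approximation algorithm required by the framework, with $\gamma = 2$. One subtlety worth a sentence: the framework also makes calls of the form $\OP_{E_0 \gets 0}$ where the costs of some edge subset are zeroed out. Since zeroing costs is still a valid edge-weighted Steiner network instance (costs remain nonnegative reals), Jain's algorithm applies verbatim to these modified instances as well, so the framework's requirements are fully met.

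Then Theorem \ref{thm:FWD_Competitiveness} yields an $O(\gamma \log |\E|) = O(2 \log |E|)$-competitive deterministic polynomial-time algorithm for Steiner network with deadlines. Finally I would simplify the bound: in a simple graph on $n$ nodes, $|E| \le \binom{n}{2} < n^2$, so $\log |E| \le 2 \log n = O(\log n)$, giving overall competitiveness $O(\log n)$. Running time is polynomial because the framework invokes the (polynomial-time) approximation algorithm $O(|Q|)$ times, as established in the proof of Theorem \ref{thm:FWD_Competitiveness}.

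I do not expect any real obstacle here — this is a routine application of the already-proven framework theorem. If anything, the only thing to be careful about is making sure the upwards-closedness of $X_q$ is genuinely satisfied (it is, via Menger) and that Jain's approximation is robust to the cost-zeroing operation (it is), so that Theorem \ref{thm:FWD_Competitiveness} can be applied as a black box.
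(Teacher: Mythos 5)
Your proposal is correct and follows exactly the same route as the paper: cast edge-weighted Steiner network as an instance of $\OP$ with edges as elements, plug Jain's deterministic $2$-approximation into Theorem~\ref{thm:FWD_Competitiveness}, and use $\log|E| \le 2\log n$. The extra checks you flag (upward-closedness of $X_q$ and robustness of Jain's algorithm to zeroing out edge costs) are sound and indeed implicitly relied upon, though the paper does not spell them out.
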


\subsection{Directed Steiner Tree}
In the directed Steiner tree problem with deadlines, we are given a (simple) directed graph $G=(V,E)$, costs $c:E\to \mathbb{R}^+$ to the edges and a designated root $\rho \in V$. Each request $q$ is a terminal $v\in V$. At any point in time, the algorithm may transmit a set of directed edges $E'\subseteq E$. A pending request $q$ for a terminal $v$ is served by this transmission if there exists a (directed) path from $\rho$ to $v$ in the subgraph $G'=(V,E')$.

This problem is also a special case of $\OP$ in the same way as the undirected Steiner tree. That is, the elements are the edges of the tree, and a set of edges $E'\subseteq E$ is in $X_q$, for a request $q$ of a terminal $v$, if there exists a directed path from $\rho$ to $v$ in the graph $(V,E')$. 

The following theorem is due to Grandoni \textit{et al.}~\cite{Grandoni:2019:OAA:3313276.3316349}.
\begin{pthm}[\cite{Grandoni:2019:OAA:3313276.3316349}]
	\label{thm:APP_DSD_Grandoni}
	There exists a randomized $O(\frac{\log^2 n }{\log \log n})$-approximation for directed Steiner tree, which runs in quasi-polynomial time (specifically, $O(n^{\log^5 n})$ time).
\end{pthm}

As a result of plugging the algorithm of Theorem \ref{thm:APP_DSD_Grandoni} into the framework of Section \ref{sec:FWD}, and again observing that $\log |E| \le 2\log n$, yields the following theorem.

\begin{thm}
	\label{thm:APP_DSD_Competitiveness}
	There exists a randomized $O(\frac{\log ^3 n}{\log \log n})$-competitive algorithm for directed Steiner tree with deadlines, which runs in quasi-polynomial time.
\end{thm}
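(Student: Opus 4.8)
The plan is to obtain this theorem as a direct instantiation of Theorem \ref{thm:FWD_Competitiveness}, using the quasi-polynomial-time offline approximation of Theorem \ref{thm:APP_DSD_Grandoni} as the encapsulated algorithm. First I would recall the observation made just above the statement: directed Steiner tree with deadlines is an instance of $\OP$ with deadlines, where the elements are the directed edges and, for a request $q$ on a terminal $v$, the upward-closed family $X_q$ is the collection of edge sets containing a directed $\rho \to v$ path. Hence the framework of Section \ref{sec:FWD} applies verbatim, its only problem-specific ingredient being an approximation algorithm for the offline directed Steiner tree problem.

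Next I would invoke Theorem \ref{thm:APP_DSD_Grandoni}, which supplies a randomized $\gamma$-approximation with $\gamma = O(\log^2 n/\log\log n)$ running in quasi-polynomial time. Since that algorithm is LP-based (solve-and-round), it meets the interface required in Remark \ref{rem:FWD_LasVegas} — deterministic approximation guarantee together with an expected running-time guarantee — after the standard repeat-until-the-rounding-is-good transformation described there. Plugging this into the randomized branch of Theorem \ref{thm:FWD_Competitiveness} produces a randomized algorithm for directed Steiner tree with deadlines that is $O(\gamma \log|E|)$-competitive. As $G$ is simple we have $|E| \le n^2$, so $\log|E| \le 2\log n$, and the competitive ratio becomes
\[ O\!\left(\frac{\log^2 n}{\log\log n}\cdot \log n\right) = O\!\left(\frac{\log^3 n}{\log\log n}\right). \]
For the running time, recall from the proof of Theorem \ref{thm:FWD_Competitiveness} that the online algorithm makes $O(|Q|)$ calls to the encapsulated approximation; since each such call here runs in quasi-polynomial time, Remark \ref{rem:FWD_QuasiPolynomialTime} gives that the overall online algorithm runs in quasi-polynomial time.

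I do not expect a genuine obstacle here, since all the work is already carried by Theorem \ref{thm:FWD_Competitiveness} and by the offline result of Grandoni \textit{et al.} The only points deserving care are (i) checking that the randomized offline guarantee is in the "deterministic approximation / expected running time" form demanded by the framework, which holds by Remark \ref{rem:FWD_LasVegas}, and (ii) confirming that the quasi-polynomial running time of the offline oracle is inherited by the online algorithm, which is exactly Remark \ref{rem:FWD_QuasiPolynomialTime}.
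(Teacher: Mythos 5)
Your proposal is correct and follows exactly the paper's own route: instantiate Theorem~\ref{thm:FWD_Competitiveness} with the Grandoni \textit{et al.} approximation of Theorem~\ref{thm:APP_DSD_Grandoni} and observe $\log|\E|\le 2\log n$. The extra care you take with Remarks~\ref{rem:FWD_LasVegas} and~\ref{rem:FWD_QuasiPolynomialTime} is exactly what the paper implicitly relies on.
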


\subsection{Facility Location}

In the facility location with deadlines problem, we are given a graph $G=(V,E)$, such that $|V|=n$. We are also given a facility opening cost $f:V\to \mathbb{R}^+$, and weights $w:E\to \mathbb{R}^+$ to the edges.  Requests arrive over time on the nodes of the graph, each with an associated deadline.

At any point in time, the algorithm may choose a node $v\in V$, open a facility at that node, and choose some subset of pending requests $Q'$ to connect to that facility. This action serves the pending requests of $Q'$. Immediately after performing this atomic action, the facility disappears. The total cost of this transmission is $f(v)$ (the opening cost of the facility) plus $\sum_{q\in Q'} \delta(v,q)$, where $\delta$ is the shortest-path metric on nodes induced by the edge weights $w$.

The set of elements in this case is the set of nodes $V$ (where buying a node means opening a facility at that node). Observe that facility location does \textbf{not} conform neatly to the $\OP$ structure of the problems addressed in our framework -- indeed, opening facilities does not immediately serve requests, and paying an additional connection cost is required. One could force the problem into the framework by adding the connections (i.e. shortest paths from a request to facility) as elements -- however, as each request requires a different connection, this would result in $\Theta(n|Q|)$ elements, where $Q$ is the set of requests. The resulting loss over the approximation algorithm in this case would be $\Theta(\log n +\log |Q|)$. 

Nevertheless, we show that the framework can be applied without any modification to the facility location problem, with only the facilities as elements, yielding the desired guarantee ($O(\log n)$ loss). In this subsection, we modify the necessary parts in the analysis of the framework in order to fit the facility location problem.

First, we consider a constant-approximation algorithm for the offline facility location problem. There are many such algorithms; the following is due to Jain and Vazirani~\cite{Jain:2001:AAM:375827.375845}. 

\begin{pthm}[\cite{Jain:2001:AAM:375827.375845}]
	\label{thm:APP_FLD_JainVazirani}
	There exists a polynomial-time, deterministic  $\gamma_{\FL}$-approximation for offline facility location, where $\gamma_{\FL}=3$.
\end{pthm}

In this subsection, we prove that plugging the approximation algorithm of Theorem \ref{thm:APP_FLD_JainVazirani} into the framework of Section \ref{sec:FWD} yields the following theorem. 

\begin{thm}
	\label{thm:APP_FLD_Competitiveness}
	There exists an $O(\log n)$-competitive deterministic algorithm for facility location with deadlines, which runs in polynomial time.
\end{thm}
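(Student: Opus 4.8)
The plan is to adapt the analysis of Theorem~\ref{thm:FWD_Competitiveness} to facility location, where the only real obstacle is that transmitting a set of elements (here, opening facilities) does not by itself serve requests — a connection cost $\delta(v,q)$ must also be paid for each served request. So I would re-run the entire argument of Section~\ref{sec:FWD}, but reinterpret the ``cost of a service'' to include both the facility-opening cost and the connection cost of the requests it serves, and verify that every lemma still goes through with $\gamma=\gamma_{\FL}=3$. Since $|\E|=|V|=n$, this will give $O(\gamma_{\FL}\log n)=O(\log n)$ competitiveness.

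Concretely, the steps I would carry out, in order, are as follows. \textbf{(1) Reinterpret the offline calls.} When the framework calls $\OP_{E_0\gets 0}(Q_\lambda)$, it should solve offline facility location on request set $Q_\lambda$ with facility costs; setting $E_0$ (cheap facilities, those of opening cost $\le 2^{\rank_\lambda}/n$) to zero cost. The returned solution $S$ is a set of opened facilities, and its cost $c(S)$ is opening cost (with the $E_0$ facilities free) \emph{plus} the connection cost of $Q_\lambda$ to the nearest opened facility. The transmission at Line~\ref{line:FWD_ServeRequests} opens $E_0\cup S\cup S_{q_{\text{last}}}$ and connects each request of $Q_\lambda$ to its nearest opened facility — this genuinely serves $Q_\lambda$, because for facility location ``serving'' just means paying the connection cost, and this is exactly what a feasible offline solution does. \textbf{(2) Re-prove the upper bound (Lemma~\ref{lem:FWD_ALGUpperBound} analogue).} Proposition~\ref{prop:FWD_ServiceCostBoundedByLevel} carries over verbatim: $c(E_0)\le 2^{\rank_\lambda}$ still holds since there are at most $n$ cheap facilities each of cost $\le 2^{\rank_\lambda}/n$; $c(S)\le\gamma_{\FL}\cdot 2^{\rank_\lambda}$ because we did not break in the penultimate iteration; and $c(S_{q_{\text{last}}})\le 2\gamma_{\FL}\cdot 2^{\rank_\lambda}$ from the initial-level rule. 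The point is that these bounds only used properties of the $\OP$-call interface, not the combinatorial structure, so nothing changes. Propositions~\ref{prop:FWD_UniqueCharge}, \ref{prop:FWD_OnlyFullServicesCharged} and Lemma~\ref{lem:FWD_UniqueClass} are purely about the scheduling logic (levels, eligibility, charging) and are untouched. \textbf{(3) Re-prove the lower bounds.} Lemma~\ref{lem:FWD_PrimaryLowerBoundsOPT} needs only that $\OP^*(\{q\})\ge \OP(\{q\})/\gamma_{\FL}\ge 2^{\rank_q}$, i.e. that a singleton request is expensive for $\opt$; this holds because serving $q$ in the offline facility-location sense still requires opening some facility and paying a connection, which the $\gamma_{\FL}$-approximation estimates. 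Lemma~\ref{lem:FWD_ChargeLowerBoundsOPT} is the delicate one: I would fix an optimal service $\lambda^*$ (which opens one facility $v^*$ and connects a set $Q_{\lambda^*}$ to it, at cost $c(\lambda^*)=f(v^*)+\sum_{q\in Q_{\lambda^*}}\delta(v^*,q)$) and repeat the three-case charging argument of Proposition~\ref{prop:FWD_LocalCharge} with $\ell=\lfloor\log c(\lambda^*)\rfloor$. The low-level and medium-level cases go through unchanged. \textbf{(4) The high-level case} is where the one genuinely new argument is needed: for $\rank_\lambda\ge \ell+\lceil\log n\rceil+1$ I must show $\OP^*_{E_0^\lambda\gets 0}(Q_{\lambda^*})=0$. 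In the original proof this held because $\lambda^*$'s elements were all cheaper than $2^{\rank_\lambda}/|\E|$, hence all landed in $E_0^\lambda$, making $\lambda^*$ a zero-cost feasible solution. For facility location the facility $v^*$ costs $f(v^*)\le c(\lambda^*)\le 2^{\ell+1}\le 2^{\rank_\lambda}/n$, so indeed $v^*\in E_0^\lambda$ and opening it is free — \emph{but the connection cost $\sum_{q}\delta(v^*,q)$ is not zero}, so $\OP^*_{E_0^\lambda\gets 0}(Q_{\lambda^*})$ need not vanish.

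\textbf{The hard part is therefore this last point}, and I expect the fix to require modifying the statement of Lemma~\ref{lem:FWD_ChargeLowerBoundsOPT} slightly or changing what is charged. The natural remedy is to charge, instead of $\min\{2^{\rank_\lambda},\OP^*_{E_0^\lambda\gets0}(Q_{\lambda\cap\lambda^*})\}$, the quantity $\min\{2^{\rank_\lambda}, f_{E_0^\lambda}(v^*)+\sum_{q\in Q_{\lambda\cap\lambda^*}}\delta(v^*,q)\}$ where $f_{E_0^\lambda}(v^*)=0$ if $v^*\in E_0^\lambda$ — i.e. to use as the per-optimal-service witness the \emph{restriction of $\lambda^*$ itself} to the requests in $Q_{\lambda\cap\lambda^*}$, rather than the optimal offline solution for that subset. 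With this choice, in the high-level case the facility is free and the witness cost is exactly the connection cost $\sum_{q\in Q_{\lambda\cap\lambda^*}}\delta(v^*,q)$, summed over all high-level $\lambda$; since each request of $Q_{\lambda^*}$ lies in $Q_{\lambda\cap\lambda^*}$ for at most one service of each level (Lemma~\ref{lem:FWD_UniqueClass}) and the per-level witnesses form a geometric-type control — actually here one must argue the high-level contributions telescope against $\sum_q\delta(v^*,q)$ using that levels differ by factors of two — the total stays $O(\log n)\cdot c(\lambda^*)$. I would also need to recheck the global step (Equation~\ref{eq:FWD_GlobalChargeIsLocalCharge}): that $2^{\rank_\lambda}\le \OP^*_{E_0^\lambda\gets0}(Q_\lambda)$ still follows from imperfection of $\lambda$ and the approximation guarantee, and that $\OP^*_{E_0^\lambda\gets0}(Q_\lambda)\le\sum_{\lambda^*}(\text{witness cost for }\lambda^*)$ — the latter because the union over $\lambda^*$ of the facilities used by $\opt$, restricted to those serving $Q_{\lambda\cap\lambda^*}$, is a feasible facility-location solution for $Q_\lambda$ whose total cost is at most $\sum_{\lambda^*}$ of the restricted costs (using that in $\opt$ each request of $Q_\lambda$ is served by exactly one service, so connection costs do not double-count). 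Once these bookkeeping points are verified, the three lemmas combine exactly as in the proof of Theorem~\ref{thm:FWD_Competitiveness} to give $\alg\le O(\gamma_{\FL}\log n)\cdot\opt = O(\log n)\cdot\opt$, and the running time is polynomial since the only expensive operations are the $O(|Q|)$ calls to the offline $\gamma_{\FL}$-approximation of Theorem~\ref{thm:APP_FLD_JainVazirani}.
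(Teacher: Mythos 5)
Your plan is essentially the paper's plan: leave the algorithm unchanged, keep the upper-bound and primary-service lemmas as is, and fix only the high-level case of Proposition~\ref{prop:FWD_LocalCharge}, observing that putting the optimal facility into $E_0^\lambda$ kills its opening cost but not the connection cost. You correctly isolate this as the crux. However, the justification you supply for bounding the resulting sum is wrong in a way that matters, and it would not compile into a proof.

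You want to bound $\sum_{\lambda \text{ high-level}} \min\{2^{\rank_\lambda}, \text{(witness cost)}\}$. For each such $\lambda$, using $v^*\in E_0^\lambda$, the witness cost for $Q_{\lambda\cap\lambda^*}$ is at most $\sum_{q\in Q_{\lambda\cap\lambda^*}}\delta(v^*,q)$. At this point the paper simply invokes the fact that the sets $Q_\lambda$ (over all services $\lambda$ of the algorithm) are pairwise disjoint, because each request is served by exactly one service; hence the high-level contributions sum to at most $\sum_{q\in Q_{\lambda^*}}\delta(v^*,q)\le c(\lambda^*)$, an $O(1)$ factor for this case. You instead invoke Lemma~\ref{lem:FWD_UniqueClass} (``at most one charged service per level'') together with the remark that ``levels differ by factors of two'' and conclude an $O(\log n)\cdot c(\lambda^*)$ bound. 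Neither of these steps is right: there is no geometric decay in the per-level witness costs (each can be as large as $c(\lambda^*)$), and the number of levels strictly above $\ell+\lceil\log|\E|\rceil$ is not bounded by $O(\log n)$ in general, so the per-level count alone cannot give you any bound depending only on $n$. The argument you need is the disjointness of the partition $\{Q_\lambda\}_\lambda$, which is simpler than Lemma~\ref{lem:FWD_UniqueClass} and independent of levels; this is the single missing ingredient.

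A smaller point of divergence: you propose to change what is charged to the ``restriction of $\lambda^*$'' and then note you would have to recheck Equation~\ref{eq:FWD_GlobalChargeIsLocalCharge} with the new charge. The paper avoids this extra bookkeeping entirely by keeping the charge $\min\{2^{\rank_\lambda},\OP^*_{E_0^\lambda\gets0}(Q_{\lambda\cap\lambda^*})\}$ exactly as before and merely \emph{upper-bounding} it in the high-level case by the restricted connection cost of $\lambda^*$; the global step is then untouched. Your route can be made to work too, but it is strictly more to verify, and as described you have not verified it.
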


\begin{rem}
	\label{rem:APP_FLD_SolutionNature}
	While the framework for facility location is the same as for $\OP$, an important remark must be made about the nature of facility location solutions.
	
	In the original framework for $\OP$, we hold solutions in variables, where a solution $S$ is a subset of the universe of elements $\E$. In facility location, a solution $S$ to $\FL(Q)$ (the offline facility location problem on the set of requests $Q$) is of different form -- $S$ contains a subset $F\subseteq \E = V$ of facilities to open, \emph{plus} a mapping $\phi:Q\to F$ from the input requests to the facilities of $F$, which determines the connection cost of the solution.
	
	The cost of the solution $S=(F,\phi)$, referred to as $c(S)$ in the framework, is now the opening cost $\sum_{v\in F} f(v)$ plus the connection cost $\sum_{q\in Q} \delta(q,\phi(q))$. As for transmissions in Line \ref{line:FWD_ServeRequests}, transmitting $E_0 \cup S \cup S_{q_{\text{last}}}$ refers to transmitting the facilities of $E_0$, $S$ and $S_q$, and connecting requests according to the mappings of $S$ and $S_q$.
\end{rem}

\subsubsection*{Analysis}

Consider that theorem \ref{thm:APP_FLD_Competitiveness} would result immediately if we could reprove Lemmas \ref{lem:FWD_ALGUpperBound}, \ref{lem:FWD_PrimaryLowerBoundsOPT} and \ref{lem:FWD_ChargeLowerBoundsOPT} for facility location with deadlines. The proofs of Lemmas \ref{lem:FWD_ALGUpperBound} and \ref{lem:FWD_PrimaryLowerBoundsOPT} go through in an identical way to the original framework. As for Lemma \ref{lem:FWD_ChargeLowerBoundsOPT}, the only change required is in the proof of Proposition \ref{prop:FWD_LocalCharge}. We now go over the necessary changes. 

\begin{proof}[Proof of Proposition \ref{prop:FWD_LocalCharge} for facility location]
	We use the notation defined in the original proof of Proposition \ref{prop:FWD_LocalCharge}. 
	
	Observe that the proof of the proposition goes through until the case analysis of each service $\lambda \in \Lambda'$. The two first cases (namely, that $\rank_\lambda \le \ell$ or $\ell < \rank_\lambda < \ell + \lceil \log |\E|  \rceil+ 1$) go through entirely.
	
	The difference is in the third case, in which $\rank_\lambda \ge \ell + \lceil \log |\E| \rceil +1$. As was the argument in the original proof, it holds that all facilities that were opened in $\lambda^*$ are also open in $\lambda$. Now, consider 
	that there exists a solution for $Q_{\lambda \cap \lambda^*}$ which connects each request to its facility in $\lambda^*$. Therefore, we have that $\OP_{E_0^\lambda \gets 0}(Q_{\lambda \cap \lambda^*})$ is at most the connection cost of the requests of $Q_{\lambda \cap \lambda^*}$ in $\lambda^*$. Summing over all services $\lambda$ of this class yields that the total contribution to the left-hand side of Equation \ref{eq:FWD_ChargePerOptimalService} is at most the connection cost incurred by the optimal solution in $\lambda^*$, which is at most $c(\lambda^*)$. 
	
	Combining this third case with the previous two cases completes the proof.
\end{proof}

\subsection{Exponential-Time Algorithms}
In online algorithms, one is often interested in the information-theoretic bounds on competitiveness, without limitations on running time. The framework of Section \ref{sec:FWD} supports such constructions -- plugging in the algorithm which solves the offline problem optimally yields the following theorem.

\begin{thm}
	There exists an $O(\log |\E|)$-competitive algorithm for $\OP$ with deadlines (with no guarantees on running time). In particular, there exists an $O(\log n)$competitive algorithm for all problems in this paper, where $n$ is the number of nodes in the input graph.
\end{thm}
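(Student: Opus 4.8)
The plan is to obtain this statement as an immediate corollary of Theorem \ref{thm:FWD_Competitiveness}, by encapsulating the trivial exact offline solver. First I would observe that every offline network design problem $\OP$ of the abstract form of Section \ref{sec:FWD} admits a brute-force $1$-approximation: on a set of requests $Q$, enumerate all $2^{|\E|}$ subsets of $\E$, discard those $E'$ that fail some request (i.e.\ $E'\notin X_q$ for some $q\in Q$), and return a feasible subset of minimum total cost $\sum_{e\in E'}c(e)$. This is an \emph{exact} algorithm, so $\gamma=1$; it runs in time exponential in $|\E|$, which is harmless since the statement imposes no restriction on running time. The same remark applies to the modified cost instances $\OP_{E_0\gets 0}$ used by the framework.

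Next I would invoke Theorem \ref{thm:FWD_Competitiveness} with this solver plugged in, setting $\gamma=1$ to get $O(\log|\E|)$-competitiveness. The key point — and the only step requiring a sentence of care — is that the competitiveness analysis of the framework, namely Lemmas \ref{lem:FWD_ALGUpperBound}, \ref{lem:FWD_PrimaryLowerBoundsOPT} and \ref{lem:FWD_ChargeLowerBoundsOPT}, never uses the fact that the encapsulated algorithm is polynomial-time; polynomiality of the encapsulated algorithm is invoked only in the concluding running-time paragraph of the proof of Theorem \ref{thm:FWD_Competitiveness}. Hence dropping that hypothesis only removes the running-time conclusion, leaving the $O(\gamma\log|\E|)=O(\log|\E|)$ guarantee intact (the resulting online algorithm itself runs in exponential time, which is again acceptable).

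Finally, for the ``in particular'' claim I would instantiate $\OP$ as each concrete problem of the paper and bound $|\E|$ by a polynomial in $n$: for edge-weighted Steiner forest, strong Steiner forest, Steiner network, multicut, strong multicut and directed Steiner tree the ground set is the edge set of a simple $n$-vertex graph, so $|\E|\le\binom{n}{2}<n^2$ and $\log|\E|<2\log n$; for node-weighted Steiner forest and for facility location the ground set is the vertex set, so $|\E|=n$ and $\log|\E|=\log n$ (this is also precisely the bound that makes the facility-location variant of Proposition \ref{prop:FWD_LocalCharge} go through). In every case $O(\log|\E|)=O(\log n)$. I do not anticipate any genuine difficulty beyond the above observation about the framework surviving the removal of the efficiency hypothesis.
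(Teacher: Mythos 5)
Your proposal is correct and takes exactly the same route as the paper: plug an exact (hence $\gamma=1$) exponential-time offline solver into the framework of Section~\ref{sec:FWD} and invoke Theorem~\ref{thm:FWD_Competitiveness}, observing that the competitiveness analysis never used polynomiality. The paper's own justification is just a one-line remark to this effect, so your spelled-out version (including the observation that $|\E|\le n^2$ for edge-based problems and $|\E|=n$ for node-based ones) matches and slightly elaborates the intended argument.
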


\section{Delay Framework}
\label{sec:FWY}
We now consider the $\OP$ problem with delay. This problem is identical to the problem with deadlines, except that instead of a deadline, each request $q$ is associated with a continuous, monotone-nondecreasing delay function $d_q(t)$, which is defined for every $t$, and tends to infinity as $t$ tends to infinity (ensuring that every request must be served eventually).

The framework we present for problems with delay requires an approximation algorithm for the prize-collecting variant of the offline problem. In the prize-collecting $\OP$ problem, denoted $\PCOP$, the input is again a set of requests $Q$, and an additional penalty function $\pi:Q \to \mathbb{R}^+$. A solution is a subset of elements $E$ which serves some subset $Q'\subseteq Q$ of the requests. The cost of the solution is $\sum_{e\in E} c(e) + \sum_{q\in Q\backslash Q'} \pi(q)$ -- that is, the total cost of the elements bought plus the penalties for unserved requests.

\begin{thm}
	\label{thm:FWY_Competitiveness}
	If there exists a $\gamma$ deterministic (randomized) approximation algorithm for $\PCOP$ which runs in polynomial time, then there exists a $O(\gamma \log |\E|)$-competitive deterministic (randomized) algorithm for $\OP$ with delay, which runs in polynomial time.
\end{thm}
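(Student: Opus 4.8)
The plan is to design a \textbf{delay framework} mirroring the deadline framework of Section~\ref{sec:FWD}, replacing its two deadline-specific mechanisms --- deciding \emph{when} to start a service and \emph{which} requests it should serve --- with delay-aware analogues, and calling the $\PCOP$ approximation in place of the $\OP$ approximation. Each pending request $q$ still carries a level $\rank_q$, initialized to the logarithmic class of the approximation's estimate of the cost of serving $q$ alone and raised to a service's level whenever an eligible request survives that service; the service pointer $\chrg_q$ and the classification of services as charged, imperfect/perfect, and primary/secondary all carry over. The new ingredient is an \emph{investment counter} on each pending request, recording how much of that request's accrued delay has been paid out of some service's budget. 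For each level $j$ the framework tracks the total \emph{unpaid} delay residing on pending level-$j$ requests and starts a level-$(j+1)$ service once this reaches $2^{j}$ --- the delay analogue of a level-$j$ request reaching its deadline.

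A level-$(j+1)$ service $\lambda$, with budget $\Theta(\gamma\cdot 2^{j+1})$, selects requests as follows: it searches for the largest time horizon $\tau$ for which the prize-collecting instance on the eligible requests of $\lambda$ --- where the penalty of $q$ is the delay $q$ would accrue between now and $\tau$ that is not already covered by its investment counter --- admits an (approximate) solution of cost at most the budget. The encapsulated $\PCOP$ approximation on this instance returns a set of elements to transmit, serving exactly the requests not charged a penalty (these become $Q_\lambda$); for every eligible request left unserved, $\lambda$ raises its investment counter to cover that request's delay up to $\tau$. Monotonicity of the penalties in $\tau$ makes the search well-defined, and at the chosen horizon the transmitted elements cost at most the budget, so $c(\lambda)\le O(\gamma)\cdot 2^{\rank_\lambda}$ as in Proposition~\ref{prop:FWD_ServiceCostBoundedByLevel}. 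Since delay accrual at level $j$ is exactly what triggers level-$(j+1)$ services, and counters are only raised by services within their budgets, all delay the algorithm incurs is absorbed into the service costs up to a constant; one then recovers $\alg \le O(\gamma)\cdot\bigl(\sum_{\lambda\in\Lambda_1}2^{\rank_\lambda} + \sum_{\lambda\in\Lambda^\circ}2^{\rank_\lambda}\bigr)$, the analogue of Lemma~\ref{lem:FWD_ALGUpperBound}.

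Both lower bounds on $\opt$ follow the same blueprint. For primary services one shows, as in Lemma~\ref{lem:FWD_PrimaryLowerBoundsOPT}, that the primary level-$j$ services decompose into pairwise-independent sub-instances --- the delay analogue of disjoint deadline intervals --- each forcing the optimal solution to pay $\Omega(2^{j-1})$, either by serving the responsible requests or by paying their delay, so that the same geometric bookkeeping gives $\sum_{\lambda\in\Lambda_1}2^{\rank_\lambda}\le O(1)\cdot\opt$. For charged services one again distributes a charge of $2^{\rank_\lambda}$ from each $\lambda\in\Lambda^\circ$ across the optimal solution's services and against the delay it pays; the crucial local fact is that an imperfect (hence charged) $\lambda$ is exactly one whose horizon could not be pushed past $\tau$ within budget, so the $\PCOP^*$ cost of $Q_\lambda$ under the induced penalties is $\Omega(2^{\rank_\lambda})$, meaning the optimal solution either spends $\Omega(2^{\rank_\lambda})$ on $Q_\lambda$ or pays $\Omega(2^{\rank_\lambda})$ of the corresponding future delay. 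Re-running the three-case per-optimal-service argument of Proposition~\ref{prop:FWD_LocalCharge} (cheap elements are still zeroed at high levels; middle levels contribute $O(\log|\E|)$ copies of $c(\lambda^*)$) yields $\sum_{\lambda\in\Lambda^\circ}2^{\rank_\lambda}\le O(\log|\E|)\cdot\opt$, and combining the three bounds proves the theorem; Remarks~\ref{rem:FWD_QuasiPolynomialTime} and~\ref{rem:FWD_LasVegas} carry over verbatim.

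I expect the main obstacle to be the delay analogue of the structural Lemma~\ref{lem:FWD_UniqueClass} --- at most one charged service per level among services touching a common \emph{intersecting} request set --- since in the delay model requests have no deadlines and hence no ready-made notion of intersection. The substitute must instead reason along the timeline of delay accrual and investment counters: once a level-$j$ service finishes it has pushed past its horizon $\tau$ the first time at which any surviving eligible request would again accrue uncovered delay, and one must argue this prevents a later level-$j$ charged service from serving a request whose delay-covered window overlaps that of a request served by the first. Getting the right ``delay-intersection'' relation, together with the horizon bookkeeping around it, so that both the uniqueness-per-level claim and the per-optimal-service charge go through simultaneously, is the delicate part; the remainder is a faithful transcription of the deadline analysis, with ``serving cost'' systematically widened to ``serving cost plus delay'' and $\OP$ replaced by $\PCOP$.
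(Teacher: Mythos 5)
Your overall blueprint matches the paper's: investment counters, a per-level ``criticality'' trigger, a \textsc{ForwardTime}-style search for the farthest horizon achievable within budget via a $\PCOP$ call, level upgrades for unserved eligible requests, and the same three-lemma decomposition (ALG upper bound, primary lower bound, charged lower bound). You also correctly flag the uniqueness-per-level lemma as the hard new piece. However, one step as you state it would not go through: you claim that, modulo that lemma, ``re-running the three-case per-optimal-service argument of Proposition~\ref{prop:FWD_LocalCharge}'' suffices. It does not. In the delay setting the paper's Lemma~\ref{lem:FWY_UniqueClass} can only guarantee at-most-one-charged-service-per-level among services whose forwarding time $\tau_\lambda$ exceeds the maximum release time $r_{Q'}$ of the relevant request set; it says nothing about charged services whose horizon falls \emph{before} $t^*=\max_{q\in Q_{\lambda^*}} r_q$. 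Those ``early-horizon'' services can be unboundedly many at a given level, so the geometric/logarithmic counting of the deadline argument collapses for them.

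The paper closes this by adding a fourth, structurally different case in Proposition~\ref{prop:FWY_LocalCharge}: all $\lambda$ with $\tau_\lambda\le t^*$ are bundled together, and their total contribution is bounded by $c(\lambda^*)$ via a telescoping/induction argument on investment counters (each time $\lambda_i$ pays a penalty for $q$, $h_q$ rises by that amount, and $\pi_{t\to\tau}(q)$ is always $\le d_q(t^*)-h_q$, so the increments sum to at most $d_q(t^*)$, which $\opt$ pays as delay). Only the remaining services, with $\tau_\lambda>t^*$, are fed into the uniqueness lemma and then into the familiar three-subcase count. So the obstacle you identified is not merely ``finding the right delay-intersection relation'' -- the right relation (forward times past $r_{Q'}$) is genuinely weaker than the deadline one and leaves a class of services uncovered, requiring an orthogonal counter-based argument, not just a transcription. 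Your proposal acknowledges the difficulty but does not supply either the restricted form of the uniqueness lemma or the extra case; without them, Lemma~\ref{lem:FWY_ChargeLowerBoundsOPT} does not follow.
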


Note that Remarks \ref{rem:FWD_QuasiPolynomialTime} and \ref{rem:FWD_LasVegas} apply here as well.

\subsection{The Framework}

We now describe the framework for $\OP$ with delay. 

\paragraph*{Calls to the prize-collecting approximation algorithm.} The framework makes calls to the approximation algorithm $\PCOP$ for the prize-collecting problem. Such a call is denoted by $\PCOP(Q,\pi)$, where $Q$ is the set of requests and $\pi:Q\to \mathbb{R}^+$ is the penalty function. Some calls are made with the subscript $E_0 \gets 0$, for some subset of elements $E_0$. This notation means calling $\PCOP$ on the modified input in which the cost of the elements $E_0$ is set to $0$. The framework also makes calls to $\OP$, an approximation algorithm for the original (not prize-collecting) variant of $\OP$. This approximation algorithm is obtained through calling $\PCOP$ with penalties of $\infty$ for each request.

\paragraph*{Investment counter.} The algorithm maintains for each request $q$ an \emph{investment counter} $h_q$. Raising this counter corresponds to paying for delay (both past and future) incurred by the request $q$. When referring to the value of the counter at a point in time $t$, we write $h_q(t)$. 

\begin{defn}[Residual delay]
	We define the \emph{residual delay} of a pending request $q$ at time $t$ to be $\rho_q(t) = \max(0, d_q(t) - h_q(t))$. Intuitively, this is the amount of delay incurred by $q$ which no service has covered until time $t$. For a set of requests $Q$ pending at time $t$, we also define $\rho_Q(t) = \sum_{q\in Q} \rho_q(t)$.
\end{defn}

\begin{defn}[Penalty function $\pi_{t \to t'}$]
	At a time $t$, and for every future time $t'>t$, we define the penalty function $\pi_{t\to t'}$ on pending requests at time $t$ in the following way. For a request $q$ pending at time $t$, we have that $\pi_{t \to t'}(q)=\max(0, d_q(t')-h_q(t))$. Intuitively, the penalty for a request, as  evaluated at time $t$, is the future residual delay of the request if the algorithm does not raise its investment counter until time $t'$.
\end{defn}

As in the deadline framework, the delay framework assigns a level $\rank_q$ to each pending request $q$. 

\begin{defn}[Critical level]
	At any point during the algorithm, we say that a level $j$ becomes \emph{critical} if the total residual delay of requests of level at most $j$ reaches $2^j$.
\end{defn}

\paragraph*{Algorithm's description.} The framework is given in Algorithm \ref{alg:FWY_Algorithm}. The algorithm consists of waiting until any level $j$ becomes critical, and then calling $\UponCritical(j)$. Whenever a new request $q$ is released, the function $\UponRequest(q)$ is called. 

The algorithm maintains the level of each pending request $q$, denoted $\rank_q$. This level is initially the logarithmic class of the cost of the cheapest solution (i.e. set of elements) serving $q$ (in fact, the algorithm estimates this by calling the approximation algorithm $\OP$ and dividing by its approximation ratio). Over time, the level of a request may increase. 

When a level $j$ becomes critical, this triggers a service $\lambda$ of level $\rank_{\lambda} = j+1$. Intuitively, the service $\lambda$ is responsible for all pending requests of level at most $\rank_\lambda$ -- these are called the eligible requests for $\lambda$. The service first starts by raising the investment counters of eligible requests until they all have zero residual delay.

After doing so, the service observes the first point in the future in which such an eligible request has positive residual delay. The goal of the service is to push this point in time (called the forwarding time) as far into the future as possible, while spending at most $O(\gamma\cdot 2^{\rank_\lambda})$ cost. 

There are two methods of accomplishing this: the first is to raise the investment counters of the requests, and the second is serving the requests. The best course of action is to combine both methods in a smart manner -- deciding which eligible requests are to be served, and raising the investment counter for the remainder of the eligible requests. 

To achieve this, the service finds a solution to a prize-collecting instance which captures the problem of pushing back the forwarding time to some future time $t'$. In this instance, the requests are the eligible requests for $\lambda$, and the penalty for a request $q$ is the amount by which its investment counter $h_q$ must be raised so that $q$'s future residual delay would be $0$ at time $t'$. The forwarding time, as well as the corresponding prize-collecting solution, are returned by the call to the function \ForwardTime. 

If the solution returned by $\ForwardTime$ does not serve any requests (i.e. it only raises investment counters), the service modifies it to serve some arbitrary eligible request. While this does not affect the approximation ratio of the algorithm, it bounds the number of services by the number of requests, which bounds the running time of the algorithm.

Now, the algorithm increases the investment counter of eligible requests which are not served by the solution (paying for their future delay until the forwarding time). The algorithm also upgrades the level of those requests, in a similar way to the deadline algorithm.

Finally, the service transmits its solution, serving the remainder of the eligible requests.

\DontPrintSemicolon
\begin{algorithm}
	\renewcommand{\algorithmcfname}{Algorithm}
	\caption{\label{alg:FWY_Algorithm}Network Design with Delay Framework}
	\algorithmfootnote{\footnotemark[\value{footnote}] For the sake of the algorithm and its analysis, no requests outside $Q'_\lambda$ are considered served by this transmission.}
	\BlankLine
	
%
	
	\BlankLine
	
	\EFn{\UponRequest{$q$}}{
		
		Set $ S_q \gets \OP(\{q\})$
		
		Set $ I_q \gets \frac{c(S_q)}{\gamma} $. 
		
		Set $\rank_q \gets \left\lfloor\log \left(I_q\right)\right\rfloor$ \tcp*[h]{the level of the request}
		
	}
	
	\BlankLine
	
	\EFn(\tcp*[h]{Upon a level $j$ becoming critical at time $t$}){\UponCritical{$j$}}{
		
		Start a new service $\lambda$, which we now describe.
		
		Set $\rank_\lambda \gets j + 1$. \label{line:FWY_SetServiceLevel}
		
		\BlankLine
		
		\ForEach(\tcp*[h]{Clean residual delay of eligible requests}){request $q$ such that $\rank_q \le \rank_\lambda$}{Set $h_q \gets h_q + \rho_q(t)$}	 \label{line:FWY_CleanEligibleDelay}	
		
		\BlankLine
		
		Set $E_0 = \left\{ e\in \E \middle| c(e) \le \frac{2^{\rank_\lambda}}{|\E|} \right\}$.	\tcp*[h]{Buy all cheap elements} \label{line:FWY_BuyCheapEdges}
		
		\BlankLine
		
		\tcp*[h]{Forward time}
		
		Let $Q_\lambda$ be all pending requests of level at most $\rank_\lambda$.
		
		Set $(\tau,S) \gets \ForwardTime(E_0, Q_\lambda,\rank_\lambda)$.
		
		Let $Q'_\lambda \subseteq Q_\lambda$ be the subset of requests served in $S$.
		
		\BlankLine
		
		\tcp*[h]{make sure that the service serves at least one pending request}
		
		\lIf{$Q'_\lambda = \emptyset$}{\label{line:FWY_ForceService}for an arbitrary $q\in Q_\lambda$, set $Q'_\lambda\gets \{q\}$ and $S \gets S_q$.}
		
		\BlankLine
		
		\tcp*[h]{pay for future delay of requests unserved by the transmission, and upgrade requests}
		
		\ForEach(){$q\in Q_\lambda\backslash Q'_\lambda$} {
			
			Raise $h_q$ by $\pi_{t\to \tau}(q)$.\label{line:FWY_MakeInvestment}
			
			Set $\rank_{q} \gets \rank_\lambda$. \label{line:FWY_ServiceSetsRequestLevel}
			
		}
		
		\BlankLine
	
		Transmit the solution $E_0 \cup S$, serving the requests $Q'_\lambda$.\footnotemark \label{line:FWY_PerformService}
		
	}

	\BlankLine

\end{algorithm}
	
\begin{algorithm}
	\renewcommand{\algorithmcfname}{Procedure}
	\caption{Time Forwarding Procedure}
	
	\tcc*[h]{This function, called at time $t$, returns a future time $t''$ and a solution $S\subseteq \E$ to transmit which is a ``good'' solution to minimize the future delay of $Q_{\lambda}$ until time $t''$. See Proposition \ref{prop:FWY_TimeForwardingGuarantee} for the formal guarantee of this function.}
	
	\Fn{\ForwardTime{$E_0$, $Q_\lambda$, $j$}}{
	
		Set $t'\gets t$, $Q'_\lambda\gets \emptyset$ and $S \gets \emptyset$. \label{line:FWY_InitSolution}
		
		\While{$Q_\lambda \backslash Q'_\lambda \neq \emptyset$}{
			Let $t'' > t$ be the time in which $\sum_{q\in Q_\lambda \backslash Q'_\lambda}(\pi_{t\to t''}(q)-\pi_{t\to t'}(q))$ reaches $\gamma\cdot 2^{j}$. 
			
			Set $S' \gets \PCOP_{E_0\gets 0}(Q_\lambda, \pi_{t\to t''})$.
			
			\lIf{$c(S') \ge \gamma\cdot 2^{j}$}{\Break}
			
			Set $Q'_\lambda\subseteq Q_\lambda$ to be the set of requests served in $S'$.
			
			Set $t' \gets t''$ and $S \gets S'$.
		}
		
		\Return{$(t'',S)$}
	
	}
\end{algorithm}

\subsection{Analysis}

As in the deadline case, we first consider some definitions and properties of the algorithm before delving into the proof of Theorem \ref{thm:FWY_Competitiveness}.

\subsubsection*{\emph{Definitions and Algorithm's Properties}}

Let $\lambda$ be a service which occurs at some time $t$, making a call to $\ForwardTime(E_0, Q_\lambda, j)$. This call returns the time $\tau$ and a solution $S$ for $\PCOP_{E_0 \gets 0}(Q_\lambda, \pi_{t_\lambda\to \tau})$, where $\pi_\tau$ is as defined in $\lambda$. We prove the following property. 

\begin{prop}
	\label{prop:FWY_TimeForwardingGuarantee}
	The time $\tau$ and solution $S$ returned by $\ForwardTime$ have the following properties:
	\begin{enumerate}
		\item  The cost of $S$ as a solution to $\PCOP_{E_0 \gets 0}(Q_\lambda, \pi_{t \to \tau})$ is at most $2\gamma\cdot 2^j$.
		
		\item Either $S$ serves all requests in $Q_\lambda$ \textbf{or} $\PCOP^*_{E_0 \gets 0}(Q_\lambda, \pi_{t \to \tau}) \ge 2^j$.
	\end{enumerate}
\end{prop}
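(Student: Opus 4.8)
The plan is to analyze the while-loop inside \ForwardTime{} and track the state of the variables $(t', S, Q'_\lambda)$ across iterations, exactly mirroring the structure of Proposition \ref{prop:FWD_ServiceCostBoundedByLevel} and the proof of Lemma \ref{lem:FWD_ChargeLowerBoundsOPT} from the deadline framework. First I would observe that the loop terminates: in each iteration $t''$ strictly increases (since $\pi_{t\to t''}(q)$ is continuous and eventually unbounded in $t''$, because each $d_q$ tends to infinity), and either the loop breaks at the $c(S') \ge \gamma\cdot 2^j$ test or the set $Q'_\lambda$ of served requests strictly grows; the latter can happen at most $|Q_\lambda|$ times. So the loop exits either by the \Break{} or by exhausting $Q_\lambda \setminus Q'_\lambda$.

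For part (1), I would distinguish the two exit modes. If the loop exits because $Q_\lambda \setminus Q'_\lambda = \emptyset$, then the returned $S$ is the value of $S'$ from the last iteration, for which the test $c(S') \ge \gamma \cdot 2^j$ failed, so $c(S') < \gamma\cdot 2^j$ as a solution to $\PCOP_{E_0\gets 0}(Q_\lambda, \pi_{t\to t''})$ — already within the claimed bound. If the loop exits via \Break{}, the returned $S$ is the value stored at the \emph{end of the previous} iteration, i.e. a solution to $\PCOP_{E_0\gets 0}(Q_\lambda, \pi_{t\to t'})$ with $c(S) < \gamma\cdot 2^j$ for the \emph{old} penalty function $\pi_{t\to t'}$ (or $S=\emptyset$ if we broke in the first iteration, which is the same as $t'=t$). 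Since the returned time is $\tau = t''$ and not $t'$, I must re-evaluate $S$ against the penalty function $\pi_{t\to\tau}=\pi_{t\to t''}$; the penalty of every request can only have grown, and by the choice of $t''$ the total growth $\sum_{q\in Q_\lambda\setminus Q'_\lambda}(\pi_{t\to t''}(q)-\pi_{t\to t'}(q))$ is exactly $\gamma\cdot 2^j$. Hence $c(S)$ as a solution to $\PCOP_{E_0\gets 0}(Q_\lambda,\pi_{t\to\tau})$ is at most $c(S)$ for $\pi_{t\to t'}$ plus this extra $\gamma\cdot 2^j$ of penalty, giving at most $2\gamma\cdot 2^j$. (I expect this re-evaluation — that the cost charged to $S$ under the larger penalty function increases by at most the measured $\gamma\cdot 2^j$ — to be the main subtlety, since one has to be careful that requests served by $S$ contribute no penalty in either evaluation, so only the unserved requests' penalty growth matters, and that growth is bounded by the full sum over $Q_\lambda\setminus Q'_\lambda$.)

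For part (2), I again split on the exit mode. If the loop exited with $Q_\lambda\setminus Q'_\lambda=\emptyset$, then the returned $S$ serves all of $Q_\lambda$ and we are in the first case of the disjunction. Otherwise the loop exited via \Break{}, so at that iteration $c(S') \ge \gamma\cdot 2^j$ where $S'$ came from the call $\PCOP_{E_0\gets 0}(Q_\lambda, \pi_{t\to t''})$ with $t''=\tau$; by the $\gamma$-approximation guarantee of the encapsulated algorithm, the optimum $\PCOP^*_{E_0\gets 0}(Q_\lambda, \pi_{t\to\tau})$ is at least $c(S')/\gamma \ge 2^j$, which is the second case of the disjunction. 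Finally I would note the degenerate subcase where \Break{} fires in the very first iteration: then $t'$ was still $t$, so $\tau=t''$ is the first time the residual delay would next accumulate $\gamma\cdot 2^j$, and the argument for (1) holds with $S=\emptyset$ of cost $0\le 2\gamma\cdot 2^j$, while (2) holds since the approximation-guarantee argument is unaffected. That completes both parts.
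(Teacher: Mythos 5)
Your proof of Proposition~\ref{prop:FWY_TimeForwardingGuarantee} matches the paper's argument almost exactly: for part~(1) you show $S$ costs at most $\gamma\cdot 2^j$ under $\pi_{t\to t'}$ and that switching to $\pi_{t\to\tau}$ adds at most the measured penalty growth $\gamma\cdot 2^j$ over $Q_\lambda\setminus Q'_\lambda$, and for part~(2) you split on the exit mode and invoke the approximation guarantee, just as the paper does. One small caveat: your termination aside (``either the loop breaks or $Q'_\lambda$ strictly grows, hence at most $|Q_\lambda|$ iterations'') is not actually sound, because $Q'_\lambda$ is overwritten rather than extended each iteration and so need not be monotone; this does not affect the proposition (which concerns only the returned values), and the paper anyway defers the iteration-count argument to the running-time analysis in the proof of Theorem~\ref{thm:FWY_Competitiveness}, where a more careful $O(k^2)$ bound is established.
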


\begin{proof}
	To prove the first property, consider the final values of the variables $t'$ and $t''$ in \ForwardTime, where the final value of $t''$ is the returned time $\tau$. Observe that the function maintains that $S$ has a cost of at most $\gamma \cdot 2^j$ as a solution for $\PCOP_{E_0 \gets 0}(Q_\lambda, \pi_{t \to t'}$. 
	
	Observing the lines in which the final values of $t'$ and $t''$ were set, we have one of two cases. In the first case, in which $t''=t'$, we are done. Otherwise, we have that $\sum_{q\in Q_\lambda \backslash Q'_\lambda} (\pi_{t \to t''}(q) - \pi_{t \to t'}(q)) = \gamma \cdot 2^j$. In words, the total penalty increase for the requests not served in $S$ from $\pi_{t\to t'}$ to $\pi_{t\to t''}$ is $\gamma\cdot 2^j$. Thus, the solution $S$ has a total cost of at most $2\gamma \cdot 2^j$, proving the first property.
	
	As for the second property, consider the loop of \ForwardTime. If it finishes through the loop's condition, $S$ serves all requests in $Q_\lambda$ and we are done. Otherwise, the loop is ended by the \Break command, in which case we know that the cost of $S'$ as a solution to $\PCOP_{E_0 \gets 0}(Q_\lambda, \pi_{t \to t''})$ is at least $\gamma \cdot 2^j$. But since $S'$ is a $\gamma$ approximation for this problem, we have that $\PCOP^*_{E_0 \gets 0}(Q_\lambda,\pi_{t\to t''}) \ge 2^j$, completing the proof.
\end{proof}

For every service $\lambda$, we denote by $t_\lambda$ the time in which $\lambda$ occurred. In the running of $\lambda$, consider time $\tau$ as returned by \ForwardTime. We call this time the \emph{forwarding time} of $\lambda$, and denote it by $\tau_\lambda$. We call the value set to $\rank_{\lambda}$ the \emph{level} of $\lambda$; observe that this value does not change once defined.

Similarly, for a request $q$, we call $\rank_q$ the level of $q$. Note that unlike services, the level of a request may change over time (more specifically, the level can be increased).

We redefine some of the definitions we used in the deadline case to fit the delay case.

\begin{defn}[Service Pointer]
	Let $q$ be a request. We define $\chrg_q$ to be the last service $\lambda$ such that $\lambda$ sets $\rank_q \gets \rank_\lambda$ in Line \ref{line:FWY_ServiceSetsRequestLevel}. If there is no such service, we write $\chrg_q = \nul$. Similarly, we define $\chrg_q(t)$ to be the last service $\lambda$ before time $t$ such that $\lambda$ sets $\rank_q \gets \rank_\lambda$ in Line \ref{line:FWY_ServiceSetsRequestLevel} (with $\chrg_q(t)=\nul$ if there is no such service).
\end{defn}

\begin{defn}
	\label{defn:FWY_EligibleRequest}
	Consider a service $\lambda$ and a request $q$ which is pending upon the start of $\lambda$, and has $\rank_q \le \rank_\lambda$ at that time. We say that $q$ was \emph{eligible} for $\lambda$. 
\end{defn}

In the algorithm, the set of eligible requests for a service $\lambda$ is the value of the variable $Q_\lambda$. We use this notation throughout the analysis, denoting the set of requests eligible for a service $\lambda$ by $Q_\lambda$.

\begin{defn}
	\label{defn:FWY_ServiceTypes}
	For a service $\lambda$:
	\begin{enumerate}
		\item We say that $\lambda$ is \emph{charged} if there exists some future service $\lambda'$, which is triggered by some level $j$ becoming critical, and there exists a pending request $q$ which is of level $j$ and has positive residual delay immediately before $\lambda'$, such that $\chrg_{q} (t_{\lambda'}) = \lambda$. We say that $\lambda'$ charged $\lambda$.
		
		\item We say that $\lambda$ is \emph{perfect} if the solution $S$ returned by $\ForwardTime$ serves all of $Q_\lambda$. Otherwise, we say that $\lambda$ is \emph{imperfect}.
		
		\item We say that $\lambda$ is \emph{primary} if, when triggered upon $\rank_\lambda -1$ becoming critical, every pending request $q$ of level exactly $\rank_\lambda-1$ with positive residual delay has $\chrg_q (t_\lambda) = \nul$. Otherwise, $\lambda$ is \emph{secondary}.
	\end{enumerate}
\end{defn}

Fix any input set of requests $Q$. We denote by $\Lambda$ the final set of services by the algorithm. We denote the set of primary services made by the algorithm by $\Lambda_1$, and the set of secondary services by $\Lambda_2$, such that $\Lambda = \Lambda_1 \cup \Lambda_2$. We denote the set of charged services by $\Lambda^\circ$.

The algorithm explicitly maintains the following invariant.
\begin{inv}
	\label{inv:FWY_SubcriticalInvariant}
	At any point $t$ during the algorithm, for every set of pending requests $Q'$ of level at most $j$, it holds that $\rho_{Q'}(t) \le 2^j$.
\end{inv}

The following observation is ensured by Lines \ref{line:FWY_PerformService} and \ref{line:FWY_MakeInvestment}.
\begin{obs}
	\label{obs:FWY_NoResidualDelayUntilForwardTime}
	Let $\lambda$ be a service, and let $q$ be a request eligible for $\lambda$. Then $q$ has no residual delay between $t_\lambda$ and $\tau_\lambda$.
\end{obs}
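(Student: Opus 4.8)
The plan is to track what happens to the investment counter $h_q$ of an eligible request $q$ over the interval $[t_\lambda, \tau_\lambda]$. Let $t = t_\lambda$, and consider the two possibilities for $q$ at the end of the service $\lambda$: either $q$ is served by the transmitted solution (i.e.\ $q \in Q'_\lambda$), or it is not (i.e.\ $q \in Q_\lambda \setminus Q'_\lambda$). If $q$ is served, then $q$ is no longer pending after $t_\lambda$ — so in particular it has no residual delay strictly after $t_\lambda$, and at $t_\lambda$ itself Line \ref{line:FWY_CleanEligibleDelay} already set $h_q \gets h_q + \rho_q(t_\lambda)$, zeroing its residual delay. So this case is immediate.

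The main work is the case $q \in Q_\lambda \setminus Q'_\lambda$. Here I would combine the two updates applied to $h_q$ during $\lambda$: first Line \ref{line:FWY_CleanEligibleDelay} raises $h_q$ by $\rho_q(t_\lambda) = \max(0, d_q(t_\lambda) - h_q(t_\lambda))$, so that immediately afterwards $h_q \ge d_q(t_\lambda)$; then Line \ref{line:FWY_MakeInvestment} raises $h_q$ by $\pi_{t \to \tau_\lambda}(q) = \max(0, d_q(\tau_\lambda) - h_q(t))$, where $h_q(t)$ here is the value of the counter at the start of the service. The point is that after both raises, the final value of $h_q$ is at least $\max\bigl(d_q(t_\lambda),\, d_q(\tau_\lambda)\bigr)$: the first raise guarantees $h_q \ge d_q(t_\lambda)$, and the second raise adds $d_q(\tau_\lambda) - h_q(\text{old})$ when this is positive, which (since the counter only increases) pushes the final value to at least $d_q(\tau_\lambda)$. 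By monotonicity of $d_q$, for every $s \in [t_\lambda, \tau_\lambda]$ we have $d_q(s) \le d_q(\tau_\lambda) \le h_q(\text{after } \lambda)$, hence $\rho_q(s) = \max(0, d_q(s) - h_q(s)) = 0$, since the investment counter never decreases and thus $h_q(s) \ge h_q(\text{after }\lambda)$ for all $s \ge t_\lambda$.

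The one subtlety to be careful about — and the step I'd expect to require the most attention — is reconciling the two different "snapshots" of $h_q$ used in Lines \ref{line:FWY_CleanEligibleDelay} and \ref{line:FWY_MakeInvestment}: the penalty $\pi_{t\to\tau}(q)$ is defined relative to $h_q(t)$, the counter value at the \emph{start} of the service (this is how the \ForwardTime\ procedure priced the penalties), whereas Line \ref{line:FWY_CleanEligibleDelay} has already bumped the counter before Line \ref{line:FWY_MakeInvestment} executes. One must check that adding $\pi_{t\to\tau}(q)$ on top of the already-cleaned counter still yields $h_q \ge d_q(\tau_\lambda)$ — which it does, because the cleaned counter is at least as large as $h_q(t)$, and $h_q(t) + \pi_{t\to\tau}(q) \ge d_q(\tau_\lambda)$ by definition of $\pi$. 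So the "double counting" only helps. After establishing $h_q \ge d_q(\tau_\lambda)$ at the end of $\lambda$, the conclusion for all intermediate times follows purely from monotonicity of $d_q$ and the non-decreasing nature of $h_q$, which completes the argument.
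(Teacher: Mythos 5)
Your proof is correct and follows exactly the route the paper itself indicates: the paper does not write out a proof for this observation, stating only that it "is ensured by Lines \ref{line:FWY_PerformService} and \ref{line:FWY_MakeInvestment}" -- i.e., the same case split you make (served requests leave the system; unserved requests get their counters raised to absorb delay through $\tau_\lambda$). Your careful note about whether $h_q(t)$ in the definition of $\pi_{t\to\tau}$ refers to the pre- or post-cleaning counter value is a real (if minor) ambiguity in the paper's notation, and you correctly observe the argument goes through under either reading, since any extra credit from the cleaning at Line \ref{line:FWY_CleanEligibleDelay} only increases $h_q$.
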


\begin{prop}
	\label{prop:FWY_UniqueCharge}
	Each service is charged by at most one service.
\end{prop}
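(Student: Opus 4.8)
The plan is to mirror the proof of Proposition \ref{prop:FWD_UniqueCharge} from the deadline framework, adapting it to the delay setting. Suppose for contradiction that a service $\lambda$ is charged by two distinct services $\lambda_1$ and $\lambda_2$, occurring at times $t_1 < t_2$ respectively. By Definition \ref{defn:FWY_ServiceTypes}(1), the fact that $\lambda_2$ charges $\lambda$ means there is a request $q_2$, pending immediately before $\lambda_2$, which has level exactly $\rank_\lambda$ (here using that $\lambda_2$ is triggered by level $j := \rank_{\lambda_2}-1$ becoming critical, and $\chrg_{q_2}(t_{\lambda_2})=\lambda$ forces $\rank_{q_2}=\rank_\lambda = j$, so $\rank_{\lambda_2} = \rank_\lambda+1$), has positive residual delay just before $\lambda_2$, and has $\chrg_{q_2}(t_{\lambda_2}) = \lambda$.

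The key step is the following: since $\chrg_{q_2}(t_{\lambda_2}) = \lambda$, the service $\lambda$ set $\rank_{q_2} \gets \rank_\lambda$ in Line \ref{line:FWY_ServiceSetsRequestLevel}, so in particular $q_2$ was eligible for $\lambda$ and hence pending before $\lambda$. Since $t_1 < t_\lambda$ would contradict $\lambda_1$ charging $\lambda$ (a charging service occurs after the service it charges), we have $t_\lambda < t_1 < t_2$, so $q_2$ was already pending before $\lambda_1$ occurred. Now I invoke Invariant \ref{inv:FWY_SubcriticalInvariant} together with the level-raising behavior of $\lambda_1$: the service $\lambda_1$ has level $\rank_{\lambda_1} = \rank_\lambda + 1$ (it charges $\lambda$, so is triggered by level $\rank_\lambda$ becoming critical), hence $q_2$, being of level $\rank_\lambda \le \rank_{\lambda_1}$ at time $t_1$, was eligible for $\lambda_1$. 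A service upgrades every eligible request it does not serve to its own level (Line \ref{line:FWY_ServiceSetsRequestLevel}), and serves the rest; so immediately after $\lambda_1$, the request $q_2$ either is no longer pending or has level at least $\rank_{\lambda_1} = \rank_\lambda + 1$. Since levels only increase, $q_2$ cannot have level $\rank_\lambda$ just before $\lambda_2$ — contradiction.

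The main obstacle, and the one place where the delay proof genuinely differs from the deadline one, is making precise the claim that $\lambda_1$ upgrades (or serves) $q_2$: I must confirm that $q_2$ is indeed eligible for $\lambda_1$ at time $t_1$, i.e. that $\rank_{q_2} \le \rank_{\lambda_1}$ holds at that moment. This follows because the only way $q_2$'s level could exceed $\rank_\lambda$ between $t_\lambda$ and $t_1$ is through an intervening service of level $\ge \rank_\lambda + 1$ upgrading it — but such a service would become its new $\chrg$-pointer, contradicting $\chrg_{q_2}(t_{\lambda_2}) = \lambda$. I also need the minor observation, analogous to the deadline case, that a service always occurs strictly after any service it charges, which is immediate from Definition \ref{defn:FWY_ServiceTypes}(1). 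Assembling these pieces yields the contradiction and completes the proof.

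\begin{proof}
	Assume for contradiction that $\lambda$ is charged by two distinct services $\lambda_1$ and $\lambda_2$, occurring at times $t_1$ and $t_2$ respectively, and assume without loss of generality that $t_1 < t_2$. Since a charging service occurs after the service it charges, both $t_1$ and $t_2$ are after $t_\lambda$.

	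Since $\lambda_2$ charges $\lambda$, there is a request $q_2$ pending immediately before $\lambda_2$ with $\chrg_{q_2}(t_{\lambda_2}) = \lambda$ and with positive residual delay just before $\lambda_2$; moreover $\lambda_2$ is triggered by some level $j$ becoming critical, $q_2$ is of level $j$ at that time, and $\rank_{\lambda_2} = j+1$. From $\chrg_{q_2}(t_{\lambda_2}) = \lambda$ we also get $\rank_q$ equal to $\rank_\lambda$ at time $t_{\lambda_2}$, hence $j = \rank_\lambda$ and $\rank_{\lambda_2} = \rank_\lambda + 1$.

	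The fact that $\chrg_{q_2}(t_{\lambda_2}) = \lambda$ means that $\lambda$ set $\rank_{q_2} \gets \rank_\lambda$ in Line \ref{line:FWY_ServiceSetsRequestLevel}, so $q_2$ was eligible for $\lambda$ and thus pending before $t_\lambda$; in particular $q_2$ was pending before $t_1$. Since $\lambda_1$ also charges $\lambda$, it too is triggered by level $\rank_\lambda$ becoming critical, so $\rank_{\lambda_1} = \rank_\lambda + 1$. Between $t_\lambda$ and $t_1$ the level of $q_2$ cannot exceed $\rank_\lambda$: any service upgrading it to a higher level would become its new service pointer, contradicting $\chrg_{q_2}(t_{\lambda_2}) = \lambda$. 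Hence at time $t_1$ we have $\rank_{q_2} \le \rank_\lambda \le \rank_{\lambda_1}$, so $q_2$ was eligible for $\lambda_1$. But $\lambda_1$ either serves every eligible request it does not upgrade, or sets its level to $\rank_{\lambda_1} = \rank_\lambda + 1$ in Line \ref{line:FWY_ServiceSetsRequestLevel}. Thus immediately after $\lambda_1$, the request $q_2$ is either no longer pending or has level at least $\rank_\lambda + 1$. Since levels only increase over time, this contradicts $\rank_{q_2} = \rank_\lambda$ immediately before $\lambda_2$.
\end{proof}
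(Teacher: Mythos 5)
Your proposal is correct and follows essentially the same argument as the paper's proof: identify the request $q_2$ whose service pointer is $\lambda$ at the time of $\lambda_2$, observe that $q_2$ is pending through the sandwiched service $\lambda_1$ of level $\rank_\lambda+1$, and conclude $q_2$ must exit $\lambda_1$ either served or at level $\rank_\lambda+1$, contradicting its level at $\lambda_2$. You spell out a couple of steps the paper leaves implicit (that $\rank_{\lambda_1}=\rank_\lambda+1$ because every service charging $\lambda$ is triggered by level $\rank_\lambda$ becoming critical, and that $q_2$ remained eligible for $\lambda_1$ since any intervening upgrade would have redirected its pointer), but these are elaborations, not a different route.
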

\begin{proof}
	Assume for contradiction that there exists a service $\lambda$ at time $t$ which is charged by both $\lambda_1$ and $\lambda_2$, at times $t_1$ and $t_2$ respectively, and assume without loss of generality that $t_1 <t_2$. Service $\lambda_2$ charged $\lambda$ due to the pending request $q_2$, such that $\rank_{q_2} = \rank_\lambda$ and $\chrg_{q_2}(t_{\lambda_2}) = \lambda$. $q_2$ was pending before both $\lambda$ and $\lambda_2$, and was thus pending before $\lambda_1$. But after $\lambda_1$, all pending requests are of level at least $\rank_{\lambda_1} = \rank_\lambda + 1$, in contradiction to having $\rank_{q_2} = \rank_\lambda$ immediately before $\lambda_2$.
\end{proof}

\begin{prop}
	\label{prop:FWY_ChargingOnlyAfterForwardTime}
	Suppose a service $\lambda\in \Lambda^\circ$ is charged by a service $\lambda'$. Then $t_{\lambda'} \ge \tau_\lambda$.
\end{prop}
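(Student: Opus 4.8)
The plan is to argue by contradiction from $t_{\lambda'} < \tau_\lambda$ (in fact we will obtain the strict inequality $t_{\lambda'} > \tau_\lambda$, which is stronger than what is claimed), by following the residual delay of the particular request that witnesses the charging.

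First I would unpack the hypothesis. Since $\lambda'$ charges $\lambda$, Definition \ref{defn:FWY_ServiceTypes} supplies a request $q$ that is pending with \emph{positive} residual delay immediately before $\lambda'$ and has $\chrg_q(t_{\lambda'}) = \lambda$. By the definition of the service pointer, $\lambda$ must have executed Line \ref{line:FWY_ServiceSetsRequestLevel} on $q$, which only happens for $q \in Q_\lambda \setminus Q'_\lambda$; in particular $q$ was eligible for $\lambda$, and $\lambda$ raised $h_q$ in Line \ref{line:FWY_MakeInvestment} by $\pi_{t_\lambda \to \tau_\lambda}(q)$, leaving $h_q \ge d_q(\tau_\lambda)$.

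The crux is Observation \ref{obs:FWY_NoResidualDelayUntilForwardTime}: an eligible request of $\lambda$ has zero residual delay throughout $[t_\lambda, \tau_\lambda]$ (intuitively, right after $\lambda$ the counter $h_q$ is at least $d_q(\tau_\lambda)$, and counters only increase while $d_q$ is fixed, so $\rho_q$ stays $0$ up to $\tau_\lambda$ regardless of what later services do). Since $\lambda'$ is a later service, $t_\lambda \le t_{\lambda'}$; if we also had $t_{\lambda'} \le \tau_\lambda$, then $t_{\lambda'} \in [t_\lambda, \tau_\lambda]$, so $q$ would have zero residual delay at time $t_{\lambda'}$. Because $h_q$ is unchanged by $\lambda'$ at the instant before $\lambda'$ acts, this value is precisely $q$'s residual delay immediately before $\lambda'$ — contradicting that it is positive. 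Hence $t_{\lambda'} > \tau_\lambda$, which proves the proposition.

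I do not anticipate a genuine obstacle; the only point needing a line of care is the identification of ``residual delay immediately before $\lambda'$'' with ``$\rho_q(t_{\lambda'})$'' as used in the Observation. This holds because $h_q$ is monotone nondecreasing and nothing between $\lambda$ and the action of $\lambda'$ lowers it — indeed nothing between them touches $q$ at all, since a service acting on $q$ would either serve it, contradicting that $q$ is still pending immediately before $\lambda'$, or run Line \ref{line:FWY_ServiceSetsRequestLevel}, contradicting $\chrg_q(t_{\lambda'}) = \lambda$.
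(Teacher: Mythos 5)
Your proof is correct and follows essentially the same route as the paper: identify the request $q$ that witnesses the charging, note that $\chrg_q(t_{\lambda'}) = \lambda$ forces $q$ to have been eligible for $\lambda$, and then invoke Observation \ref{obs:FWY_NoResidualDelayUntilForwardTime} to contradict the positivity of $q$'s residual delay immediately before $\lambda'$. The minor strengthening to strict inequality (contradicting from $t_{\lambda'} \le \tau_\lambda$ rather than $t_{\lambda'} < \tau_\lambda$) is fine and harmless, though the paper only claims and uses the weak inequality.
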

\begin{proof}
	Suppose for contradiction that $t_{\lambda'} < \tau_\lambda$. Denote the level of service $\lambda$ by $j$. The service $\lambda'$ must be triggered by level $j$ becoming critical. Let $Q'$ be the set of requests of level at most $j$ with positive residual delay immediately before $t_{\lambda'}$. Since $\lambda'$ charged $\lambda$, there must be a request $q\in Q'$ such that $\chrg_q (t_{\lambda'}) = \lambda$. Thus, $q$ was eligible for $\lambda$. But thus Observation \ref{obs:FWY_NoResidualDelayUntilForwardTime} contradicts $q\in Q'$.
\end{proof}

\begin{lem}
	\label{lem:FWY_UniqueClass}
	Let $Q'$ be an set of requests, and let $r_{Q'} = \max_{q\in Q'} r_q$. Let $\Lambda$ be the set of charged services for which a request from $Q'$ was eligible and such that for every $\lambda\in \Lambda$ we have $\tau_\lambda \ge r_{Q'}$. Then for every $j\in\mathbb{Z}$, there exists at most one service $\lambda\in \Lambda$ such that $\rank_{\lambda}=j$.
\end{lem}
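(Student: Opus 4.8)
The plan is to mimic the proof of Lemma \ref{lem:FWD_UniqueClass} from the deadline framework, adapting it to delay via the two key structural facts already established: Observation \ref{obs:FWY_NoResidualDelayUntilForwardTime} (an eligible request of a service has no residual delay between $t_\lambda$ and $\tau_\lambda$) and Proposition \ref{prop:FWY_ChargingOnlyAfterForwardTime} (a charging service occurs at or after the forwarding time of the service it charges). First I would assume for contradiction that there are two distinct charged services $\lambda_1,\lambda_2\in\Lambda$ with $\rank_{\lambda_1}=\rank_{\lambda_2}=j$, ordered so that $t_{\lambda_1}<t_{\lambda_2}$. Let $q_1,q_2\in Q'$ be requests eligible for $\lambda_1,\lambda_2$ respectively, and recall that by hypothesis $\tau_{\lambda_1},\tau_{\lambda_2}\ge r_{Q'}$, while $r_{q_i}\le r_{Q'}$ for each $i$ (so both were released by time $r_{Q'}$ and in particular are "present" in the relevant window).

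Since $\lambda_1$ is charged, there is a request $q'$ of level $j$ with positive residual delay immediately before some triggering service $\lambda'$, with $\chrg_{q'}(t_{\lambda'})=\lambda_1$; hence $\lambda'$ has level $j+1$, and $q'$ was eligible for $\lambda_1$. By Proposition \ref{prop:FWY_ChargingOnlyAfterForwardTime}, $t_{\lambda'}\ge\tau_{\lambda_1}$. The case split is on whether $t_{\lambda'}$ lies after or before $t_{\lambda_2}$. In the first case ($t_{\lambda'}>t_{\lambda_2}$): $q'$ is pending throughout $[t_{\lambda_1},t_{\lambda'}]$, hence pending at $t_{\lambda_2}$, and since $\lambda_1$ already set $\rank_{q'}=\rank_{\lambda_1}=j$ and $\lambda_2$ comes later, $q'$ was eligible for $\lambda_2$; but $q'$ survives past $\lambda_2$ (it's pending at $t_{\lambda'}$), so $\lambda_2$ executed Line \ref{line:FWY_ServiceSetsRequestLevel} on $q'$, contradicting $\chrg_{q'}(t_{\lambda'})=\lambda_1$.

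In the second case ($t_{\lambda'}\le t_{\lambda_2}$): here I use that $q_2\in Q'$ was released by $r_{Q'}\le\tau_{\lambda_1}\le t_{\lambda'}$, so $q_2$ is present at $t_{\lambda'}$. I claim $q_2$ is pending at $t_{\lambda'}$: if $q_2$ had been served before $t_{\lambda'}$ then — since $q_2$ is eligible for $\lambda_2$ which occurs at $t_{\lambda_2}\ge t_{\lambda'}$ — $q_2$ would have to be pending at $t_{\lambda_2}$, forcing it to have been served by some service strictly between $t_{\lambda'}$ and $t_{\lambda_2}$; but I want to contradict the level. Cleaner: $q_2$ is eligible for $\lambda_2$, so $q_2$ is pending immediately before $t_{\lambda_2}$ with $\rank_{q_2}\le j$ at that moment. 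Since levels only increase and $\lambda'$ (which is at level $j+1$) occurs at $t_{\lambda'}\le t_{\lambda_2}$, if $q_2$ were pending at $t_{\lambda'}$ it would be eligible for $\lambda'$ and hence afterward have level $\ge j+1$, contradicting $\rank_{q_2}\le j$ just before $\lambda_2$. So $q_2$ must be served before $t_{\lambda'}$; but $q_2$ is released by $t_{\lambda'}$ and still needed at $t_{\lambda_2}>t_{\lambda'}$...

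The main obstacle — and the reason this lemma is stated with the $\tau_\lambda\ge r_{Q'}$ hypothesis rather than with an intersecting-set hypothesis as in the deadline case — is that in the delay setting a request can be served and then "re-requested" only notionally; here the set $Q'$ is an arbitrary set of requests and the only coupling across $\lambda_1,\lambda_2$ is via the release times. I expect the correct resolution of the second case to run as follows: $\lambda_1$ chose which eligible requests to serve via \ForwardTime, and $q'$ (eligible for $\lambda_1$, surviving it) was not served by $\lambda_1$; by Observation \ref{obs:FWY_NoResidualDelayUntilForwardTime} $q'$ has no residual delay on $[t_{\lambda_1},\tau_{\lambda_1}]$, yet $q'$ has positive residual delay just before $\lambda'$, forcing $t_{\lambda'}\ge\tau_{\lambda_1}$ (re-deriving Prop \ref{prop:FWY_ChargingOnlyAfterForwardTime}) — and crucially $d_{q'}$ accumulates, so $q'$'s residual delay became positive only after $\tau_{\lambda_1}$. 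Combined with $r_{q_2}\le r_{Q'}\le\tau_{\lambda_1}$, this places $q_2$'s release before the moment $q'$ "wakes up". I would then argue that because $q'$ was eligible for $\lambda_1$ and $q_2\in Q'$, the forwarding time $\tau_{\lambda_1}$ and the trigger of $\lambda'$ force $q_2$ to be pending at $t_{\lambda'}$ with level $\le j$, so $q_2$ is eligible for $\lambda'$ (level $j+1$) and thereafter has level $\ge j+1$, contradicting $q_2$ being eligible for $\lambda_2$ at level $\le j$ at the later time $t_{\lambda_2}\ge t_{\lambda'}$. Assembling these pieces carefully — in particular pinning down exactly why $q_2$ cannot have been served and re-cleared between $t_{\lambda'}$ and $t_{\lambda_2}$ without its level rising past $j$ — is the delicate part; everything else is a direct transcription of the deadline argument with "deadline reached" replaced by "positive residual delay" and "earliest deadline" replaced by the \ForwardTime selection.
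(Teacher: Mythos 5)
Your approach is the same as the paper's: you invoke Proposition \ref{prop:FWY_ChargingOnlyAfterForwardTime} to place the charging service $\lambda'$ at or after $\tau_{\lambda_1} \ge r_{Q'}$, and you do the same case split on whether $\lambda'$ falls before or after $\lambda_2$. Your Case~1 is correct and complete.

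In Case~2 you have all the right pieces but talk yourself into a non-existent gap. You correctly observe that if $q_2$ were pending at $t_{\lambda'}$ then, having level $\le j$ at that moment (levels only increase, and $\rank_{q_2}\le j$ just before $t_{\lambda_2}>t_{\lambda'}$), it would be eligible for $\lambda'$ (level $j+1$), and after $\lambda'$ its level would be at least $j+1$, contradicting eligibility for the level-$j$ service $\lambda_2$. That is exactly the paper's argument. What you then flag as ``the delicate part'' -- whether $q_2$ could have been served before $t_{\lambda'}$ and somehow become pending again before $t_{\lambda_2}$ -- is a non-issue in the model: a request becomes pending at its single release time $r_{q_2}$ and remains pending continuously until the unique transmission that serves it, after which it is never pending again. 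So ``released by $r_{Q'}\le t_{\lambda'}$'' together with ``pending just before $t_{\lambda_2}\ge t_{\lambda'}$'' already forces ``pending at $t_{\lambda'}$''; there is no escape via an intermediate service, and the sentence ``$q_2$ must be served before $t_{\lambda'}$'' is the wrong branch -- that would directly contradict $q_2$ being eligible for $\lambda_2$. The paper simply states this in one line (``these requests were also pending before $\lambda'$'') without comment. Close that one observation and your proof coincides with the paper's.
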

\begin{proof}
	Assume for contradiction that there exists $j\in\mathbb{Z}$ for which
	there exist two distinct services $\lambda_{1},\lambda_{2}\in\Lambda$ such that $\rank_{\lambda_{1}}=\rank_{\lambda_{2}}=j$. Assume without loss of generality that $t_{\lambda_1}<t_{\lambda_2}$.
	
	Let $\lambda'$ be the service that charged $\lambda_1$. The service $\lambda'$ must be a level $j+1$ service.
	
	Consider the two following cases:
	\begin{enumerate}
		\item  $t_{\lambda'}>t_{\lambda_2}$. Since $\lambda'$ charged $\lambda$, there must be a request $q$ such that $\rank_q = \rank_{\lambda_1}$ and $\chrg_q (t_{\lambda'}) = \lambda_1$. Since $\chrg_q (t_{\lambda'}) = \lambda_1$, we have that $q$ was eligible for $\lambda_1$. Thus, since $t_{\lambda_1} < t_{\lambda_2} < t_{\lambda'}$, $q$ was pending at $\lambda_2$. Since the levels of requests can only increase over time, it must be that $\rank_q \le \rank_{\lambda_1} = \rank_{\lambda_2}$ immediately before $t_{\lambda_2}$. But then $q$ was eligible for $\lambda_2$, and thus $\lambda_2$ would call Line \ref{line:FWY_SetServiceLevel} on $q$, in contradiction to having $\chrg_q (t_{\lambda'})= \lambda_1$.
		
		\item $t_{\lambda'} < t_{\lambda_2}$. Using Proposition \ref{prop:FWY_ChargingOnlyAfterForwardTime}, we know that $t_{\lambda'} \ge \tau_{\lambda_1}$. Since $\lambda_1 \in \Lambda$, we thus have that $t_{\lambda'} \ge t_{Q'}$. Now, consider all pending requests of $Q'$ before $\lambda_2$. Since $t_{Q'} \le t_{\lambda'} < t_{\lambda_2}$, these requests were also pending before $\lambda'$. Since after $\lambda'$ all pending requests are of level at least $\rank_{\lambda'} = j+1$, none of these requests are eligible for $\lambda_2$. This is in contradiction to $\lambda_2 \in \Lambda$.
	\end{enumerate}
	This concludes the proof.
\end{proof}

\subsubsection*{\emph{Upper-bounding $\alg$.}}

\begin{prop}
	\label{prop:FWY_DelayBoundedByCounters}
	The total delay cost of the algorithm is at most $\sum_{q\in Q} h_q$, for the final values of the counters $\{h_q\}_{q\in Q}$.
\end{prop}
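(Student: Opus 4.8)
The plan is to show that the total delay cost the algorithm actually pays is always "covered" by the investment counters, by tracking where delay cost is charged. The key observation is that the delay cost of a request $q$ accrues continuously over time, but the algorithm only ever pays delay through two mechanisms: raising $h_q$ in Line~\ref{line:FWY_CleanEligibleDelay} (cleaning residual delay at the start of a service) and raising $h_q$ in Line~\ref{line:FWY_MakeInvestment} (paying $\pi_{t\to\tau}(q)$ for future delay up to the forwarding time). In both cases the algorithm increases $h_q$ by exactly the amount of delay it is taking responsibility for. So the strategy is: fix a request $q$, let $t^{\mathrm{end}}_q$ be the time $q$ is finally served, and argue that $d_q(t^{\mathrm{end}}_q) \le h_q(t^{\mathrm{end}}_q) = h_q$ (the final value), which immediately gives $\sum_q d_q(t^{\mathrm{end}}_q) \le \sum_q h_q$.

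First I would establish the pointwise claim that at every time $t$ at which $q$ is pending, $\rho_q(t) = \max(0, d_q(t)-h_q(t))$ equals exactly the delay of $q$ that the algorithm has \emph{not} yet paid for. More precisely, I would argue by induction over the discrete sequence of events affecting $q$ (its release, and each service for which it is eligible) that the algorithm's accumulated delay-payments for $q$ never exceed $h_q(t)$, and that whenever $q$ is served by a transmission at time $t^{\mathrm{end}}_q$, all of $d_q(t^{\mathrm{end}}_q)$ has been accounted for. Between events, $h_q$ is constant while $d_q$ is nondecreasing, so $\rho_q$ only grows; the algorithm pays no delay for $q$ during such intervals, consistent with the inductive hypothesis. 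At a service $\lambda$ for which $q$ is eligible: if $q \in Q'_\lambda$ (served), then $q$ leaves the system at time $t_\lambda \le \tau_\lambda$, and Line~\ref{line:FWY_CleanEligibleDelay} has already raised $h_q$ by $\rho_q(t_\lambda)$, so $h_q(t_\lambda) \ge d_q(t_\lambda) = d_q(t^{\mathrm{end}}_q)$, giving the claim for $q$. If $q \notin Q'_\lambda$, then Line~\ref{line:FWY_CleanEligibleDelay} raises $h_q$ to cover $d_q(t_\lambda)$, and Line~\ref{line:FWY_MakeInvestment} further raises $h_q$ by $\pi_{t_\lambda\to\tau_\lambda}(q) = \max(0, d_q(\tau_\lambda) - h_q(t_\lambda))$, so after the service $h_q \ge d_q(\tau_\lambda)$, i.e. $q$ has zero residual delay through time $\tau_\lambda$ — this is exactly Observation~\ref{obs:FWY_NoResidualDelayUntilForwardTime}, which I would invoke here rather than reprove.

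The main obstacle I anticipate is bookkeeping the "double counting" carefully: I must make sure that the delay accrued by $q$ in a time interval is charged to $h_q$ by \emph{exactly one} mechanism, never paid twice (which would be fine for the inequality direction but needs a clean argument) and never skipped. The subtle point is that a request may be eligible for many successive services; I need that the $\pi_{t_\lambda\to\tau_\lambda}$ payment of one service and the $\rho_q$-cleaning of the next dovetail correctly at the seam $\tau_\lambda$ versus the next service's start time. Observation~\ref{obs:FWY_NoResidualDelayUntilForwardTime} guarantees $q$ has no residual delay on $[t_\lambda,\tau_\lambda]$, so any new residual delay of $q$ only starts accruing at or after $\tau_\lambda$, and the next service that handles $q$ (which by the critical-level mechanism can only be triggered once such residual delay exists, hence at a time $\ge \tau_\lambda$) will clean it in Line~\ref{line:FWY_CleanEligibleDelay}. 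Chaining these intervals from $r_q$ to $t^{\mathrm{end}}_q$ covers the whole lifetime of $q$, so $d_q(t^{\mathrm{end}}_q) \le h_q$. Summing over all $q\in Q$ yields that the total delay cost $\sum_{q} d_q(t^{\mathrm{end}}_q) \le \sum_{q\in Q} h_q$, which is the statement.
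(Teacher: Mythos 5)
Your proposal is correct and rests on the same key observation as the paper's proof: at the service $\lambda$ that finally serves $q$, Line~\ref{line:FWY_CleanEligibleDelay} has already raised $h_q$ to at least $d_q(t_\lambda)$, and $q$'s delay stops thereafter. The paper states this directly in three sentences, without the inductive scaffolding over intermediate services -- that scaffolding is harmless but unnecessary, since the monotonicity of $h_q$ and the cleaning at the final serving service already close the argument.
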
	
\begin{proof}
	Consider a request $q$, served in some service $\lambda$ at time $t$. Since $q$ was served in $\lambda$, we know that $\rank_q \le \rank_\lambda$ at $t$. From Line \ref{line:FWY_CleanEligibleDelay}, we know that the service $\lambda$ raised $h_q$ so that the residual delay of $q$ becomes $0$. After this line, $h_q$ is at least $d_q(t)$. Since $q$ is served in $\lambda$, its delay does not increase further. 
\end{proof}

To bound the cost of the algorithm, it is thus enough to bound the total cost of transmissions plus the sum of the final values of $h_q$ over requests $q\in Q$. 

We define the cost of a service $\lambda$, denoted by $c(\lambda)$, as the sum of the cost of the transmission made in that service and the total amount by which $\sum_{q\in Q} h_q$ is raised in that service. From Proposition \ref{prop:FWY_DelayBoundedByCounters}, we know that $\sum_{\lambda\in \Lambda} c(\lambda)$ is an upper bound to the cost of the algorithm. We denote this sum by $\widehat{\alg}$.

\begin{lem}
	\label{lem:FWY_ALGUpperBound}
	$\widehat{\alg} \le O(\gamma) \cdot \left(\sum_{\lambda\in \Lambda_1} 2^{\rank_\lambda} + \sum_{\lambda\in \Lambda^\circ} 2^{\rank_\lambda}\right)$
\end{lem}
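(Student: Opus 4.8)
The plan is to mirror the structure of the deadline-case proof of Lemma \ref{lem:FWD_ALGUpperBound}, adapting it to the fact that here the "cost" of a service includes both the transmission cost and the increase to $\sum_q h_q$. First I would establish the analogue of Proposition \ref{prop:FWD_ServiceCostBoundedByLevel}: namely, that $c(\lambda) \le O(\gamma)\cdot 2^{\rank_\lambda}$ for every service $\lambda$. The transmission cost is the cost of $E_0 \cup S$; the contribution of $E_0$ is at most $2^{\rank_\lambda}$ by its definition in Line \ref{line:FWY_BuyCheapEdges}, and the cost of $S$ as a solution to $\PCOP_{E_0\gets 0}(Q_\lambda,\pi_{t\to\tau})$ is at most $2\gamma\cdot 2^{\rank_\lambda}$ by Proposition \ref{prop:FWY_TimeForwardingGuarantee}(1); since the penalties charged in that PC-solution are exactly the amounts by which the counters $h_q$ are raised for unserved eligible requests in Line \ref{line:FWY_MakeInvestment}, this single bound simultaneously covers both the transmission of $S$ and the counter increases in Line \ref{line:FWY_MakeInvestment}. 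The remaining counter increase is that of Line \ref{line:FWY_CleanEligibleDelay}, which raises $\sum_q h_q$ by $\rho_{Q_\lambda}(t_\lambda)$; by Invariant \ref{inv:FWY_SubcriticalInvariant} (or the critical-level triggering condition), this residual delay over requests of level at most $\rank_\lambda - 1 = j$ is at most $2^j \le 2^{\rank_\lambda}$. Finally, if the $Q'_\lambda = \emptyset$ branch of Line \ref{line:FWY_ForceService} is taken, the replacement solution $S_q$ has cost at most $2\gamma\cdot 2^{\rank_\lambda}$ by the same argument as in the deadline case (the initial level of $q$ is at most $\rank_\lambda$, and levels only increase). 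Summing these contributions gives $c(\lambda)\le O(\gamma)\cdot 2^{\rank_\lambda}$.

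Given this per-service bound, the lemma follows exactly as in the deadline case. We have $\widehat{\alg} = c(\Lambda_1) + c(\Lambda_2)$. For primary services, $c(\Lambda_1) = \sum_{\lambda\in\Lambda_1} c(\lambda) \le O(\gamma)\cdot \sum_{\lambda\in\Lambda_1} 2^{\rank_\lambda}$ directly. For secondary services, every $\lambda\in\Lambda_2$ of level $j$ charges (in the sense of Definition \ref{defn:FWY_ServiceTypes}) some earlier service $\lambda'\in\Lambda^\circ$ of level $j-1$, so $c(\lambda)\le O(\gamma)\cdot 2^j = O(\gamma)\cdot 2\cdot 2^{\rank_{\lambda'}}$. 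Proposition \ref{prop:FWY_UniqueCharge} guarantees each charged service is charged at most once, so summing over secondary services yields $c(\Lambda_2) \le O(\gamma)\cdot\sum_{\lambda'\in\Lambda^\circ} 2^{\rank_{\lambda'}}$. Adding the two bounds gives the claimed inequality.

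The main obstacle — really the only point requiring care beyond routine bookkeeping — is verifying that the counter increases of Line \ref{line:FWY_MakeInvestment} are genuinely accounted for by the penalty term of the PC-solution $S$, i.e. that $\ForwardTime$ returns $S$ together with the time $\tau$ such that the penalties $\pi_{t\to\tau}(q)$ for $q\in Q_\lambda\setminus Q'_\lambda$ are precisely what Line \ref{line:FWY_MakeInvestment} adds to $h_q$, and that Proposition \ref{prop:FWY_TimeForwardingGuarantee}(1) bounds the cost of $S$ \emph{inclusive of those penalties}. One must also be slightly careful that the $Q'_\lambda=\emptyset$ correction in Line \ref{line:FWY_ForceService} does not blow up the cost: in that branch the original $S$ had cost at most $2\gamma\cdot2^{\rank_\lambda}$, and it is replaced by $S_q$, whose cost is separately bounded as above, so the total remains $O(\gamma)\cdot 2^{\rank_\lambda}$. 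Once these identifications are pinned down, every step is a direct transcription of the deadline argument.
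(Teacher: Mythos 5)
Your proposal is correct and follows essentially the same approach as the paper: you first establish the per-service bound $c(\lambda)\le O(\gamma)\cdot 2^{\rank_\lambda}$ (the paper's Proposition \ref{prop:FWY_ServiceCostBoundedByLevel}) by summing the $E_0$ cost, the counter increase of Line \ref{line:FWY_CleanEligibleDelay} via Invariant \ref{inv:FWY_SubcriticalInvariant}, the combined cost of $S$ plus the investments of Line \ref{line:FWY_MakeInvestment} via Proposition \ref{prop:FWY_TimeForwardingGuarantee}(1), and the force-service cost of Line \ref{line:FWY_ForceService}; you then sum over $\Lambda_1$ directly and over $\Lambda_2$ by charging to level-$(j-1)$ services in $\Lambda^\circ$ with uniqueness from Proposition \ref{prop:FWY_UniqueCharge}. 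One small slip: the cleanup in Line \ref{line:FWY_CleanEligibleDelay} is over all requests of level at most $\rank_\lambda$ (not at most $\rank_\lambda-1$), so the invariant should be invoked at level $\rank_\lambda$ to give the bound $2^{\rank_\lambda}$ directly; this does not change the conclusion.
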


\begin{prop}
	\label{prop:FWY_ServiceCostBoundedByLevel}
	The total cost of a service $\lambda$ is at most $O(\gamma)\cdot 2^{\rank_\lambda}$.
\end{prop}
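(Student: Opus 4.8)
The plan is to bound $c(\lambda)$ --- the cost of the transmission made in Line \ref{line:FWY_PerformService} plus the total increase to $\sum_{q\in Q}h_q$ produced by the service in Lines \ref{line:FWY_CleanEligibleDelay} and \ref{line:FWY_MakeInvestment} --- by splitting it into four pieces, three of which are immediate. Write $t=t_\lambda$ and $\tau=\tau_\lambda$. First, by the definition of $E_0$ in Line \ref{line:FWY_BuyCheapEdges}, we have $\sum_{e\in E_0}c(e)\le |\E|\cdot\frac{2^{\rank_\lambda}}{|\E|}=2^{\rank_\lambda}$. Second, the cleaning of Line \ref{line:FWY_CleanEligibleDelay} raises $\sum_q h_q$ by exactly $\rho_{Q_\lambda}(t)$, where $Q_\lambda$ is the set of pending requests of level at most $\rank_\lambda$; applying Invariant \ref{inv:FWY_SubcriticalInvariant} to $Q_\lambda$ at time $t$ gives $\rho_{Q_\lambda}(t)\le 2^{\rank_\lambda}$.

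For the two remaining pieces --- the element cost of the transmitted solution $S$ and the investment made in Line \ref{line:FWY_MakeInvestment} --- I would invoke Proposition \ref{prop:FWY_TimeForwardingGuarantee}(1): the cost of $S$ as a solution to $\PCOP_{E_0\gets 0}(Q_\lambda,\pi_{t\to\tau})$ is at most $2\gamma\cdot 2^{\rank_\lambda}$. This $\PCOP$-cost decomposes as the cost of the elements of $S$ (with $E_0$ priced at $0$) plus $\sum_{q\in Q_\lambda\setminus Q'_\lambda}\pi_{t\to\tau}(q)$, where $Q'_\lambda$ is the set of requests served by $S$; hence each of these two summands is itself at most $2\gamma\cdot 2^{\rank_\lambda}$. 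The first yields the transmission bound $\sum_{e\in E_0\cup S}c(e)\le \sum_{e\in E_0}c(e)+2\gamma\cdot 2^{\rank_\lambda}$, since elements of $S$ lying in $E_0$ are already paid for in the first term. The second is precisely the amount added to $\sum_q h_q$ in Line \ref{line:FWY_MakeInvestment}, in the case that $Q'_\lambda\neq\emptyset$ so that the forcing step of Line \ref{line:FWY_ForceService} is not taken; thus it too is at most $2\gamma\cdot 2^{\rank_\lambda}$.

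It remains to handle the case where Line \ref{line:FWY_ForceService} replaces $S$ by $S_q=\OP(\{q\})$ for an arbitrary eligible $q$. This occurs only when the solution returned by $\ForwardTime$ serves no request, in which case the decomposition above gives $\sum_{q'\in Q_\lambda}\pi_{t\to\tau}(q')\le 2\gamma\cdot 2^{\rank_\lambda}$, so the investment in Line \ref{line:FWY_MakeInvestment}, now ranging over $Q_\lambda\setminus\{q\}\subseteq Q_\lambda$, is still at most $2\gamma\cdot 2^{\rank_\lambda}$. For the transmission I would bound $c(S_q)$ exactly as in the deadline framework: since $q$ is eligible its current level, and hence its initial level $\lfloor\log(c(S_q)/\gamma)\rfloor$, is at most $\rank_\lambda$, giving $c(S_q)<\gamma\cdot 2^{\rank_\lambda+1}$, and thus $\sum_{e\in E_0\cup S_q}c(e)\le \sum_{e\in E_0}c(e)+\gamma\cdot 2^{\rank_\lambda+1}$.

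Putting the four pieces together, in every case $c(\lambda)\le 2^{\rank_\lambda}$ (cleaning) $+\,2\gamma\cdot 2^{\rank_\lambda}$ (Line \ref{line:FWY_MakeInvestment}) $+\,(1+2\gamma)\cdot 2^{\rank_\lambda}$ (transmission) $=O(\gamma)\cdot 2^{\rank_\lambda}$. I expect the only subtle point to be the applicability of Invariant \ref{inv:FWY_SubcriticalInvariant} to $Q_\lambda$ at the triggering time $t$: although level $\rank_\lambda-1$ has just reached criticality, level $\rank_\lambda$ must still satisfy $\rho_{Q_\lambda}(t)\le 2^{\rank_\lambda}$, because the algorithm maintains the invariant at all times and would otherwise have triggered a level-$\rank_\lambda$ service no later. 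Everything else is a direct appeal to Proposition \ref{prop:FWY_TimeForwardingGuarantee}, the definition of $E_0$, and the monotonicity of request levels.
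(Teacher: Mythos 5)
Your proof is correct and follows essentially the same decomposition as the paper's: cleaning cost bounded via the invariant, $E_0$ cost bounded by its definition, the $S$-transmission-plus-investment cost bounded via Proposition \ref{prop:FWY_TimeForwardingGuarantee}(1), and the possible $S_q$ replacement bounded via the eligibility of $q$ exactly as in the deadline framework. Your treatment of the forcing step is marginally more careful than the paper's (you note that when Line \ref{line:FWY_ForceService} fires, $S$ is overwritten, so the investment runs over $Q_\lambda\setminus\{q\}$ and the penalty term of the original $S$ is over all of $Q_\lambda$; the paper simply sums all four pieces, which over-counts but still gives a valid upper bound), and your final constant is a bit looser because you bound the element and penalty parts of the $\PCOP$ cost separately rather than jointly, but neither difference is material to the $O(\gamma)\cdot 2^{\rank_\lambda}$ bound.
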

\begin{proof}
	The cost incurred in $\lambda$ is at most the sum of the following costs:
	
	\begin{enumerate}
		\item The cost of raising the investment counters at Line \ref{line:FWY_CleanEligibleDelay}, which is at most $2^{\rank_\lambda}$ (using Invariant \ref{inv:FWY_SubcriticalInvariant}).
		
		\item The cost of transmitting the elements $E_0$ in Line \ref{line:FWY_PerformService}, which is  at most $2^{\rank_\lambda}$. 
		
		\item The added cost of transmitting $S$ in Line \ref{line:FWY_PerformService} (given that the transmission already contains $E_0$), and the cost of raising investment counters of requests by $\pi{t \to \tau}$ in Line \ref{line:FWY_MakeInvestment}. Observe that this cost is in fact the cost of $S$ as a solution for $\PCOP_{E_0 \gets 0}(Q_\lambda, \pi_{t\to \tau})$. Since $S$ was obtained from a call to $\ForwardTime(E_0, Q_\lambda, \rank_{\lambda})$, and using Proposition \ref{prop:FWY_TimeForwardingGuarantee}, we have that this cost is at most $2\gamma\cdot 2^{\rank_\lambda}$.
		
%

		\item The cost of the possible transmission in Line \ref{line:FWY_ForceService}. The transmission is of $S_q$, for a request $q$ which is eligible for $\lambda$. Thus, we know that the cost of the transmission is at most $2\gamma\cdot 2^{\rank_\lambda}$. 
	\end{enumerate}
	
	Overall, the costs sum to $O(\gamma)\cdot 2^j$, as required.
\end{proof}

In a perfect service, all eligible requests are served. Thus, Line \ref{line:FWY_ServiceSetsRequestLevel} is never called in a perfect service. The next observation follows.

\begin{obs}
	\label{obs:FWY_OnlyFullServicesCharged}
	Only imperfect services can be charged.
\end{obs}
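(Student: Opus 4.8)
The plan is to prove the contrapositive: a \emph{perfect} service is never charged. The starting point is to unwind the definition of ``charged''. If a service $\lambda$ is charged by a later service $\lambda'$, then by Definition \ref{defn:FWY_ServiceTypes} there is a pending request $q$ with $\chrg_q(t_{\lambda'}) = \lambda$; by the definition of the service pointer, this forces $\lambda$ itself to have executed Line \ref{line:FWY_ServiceSetsRequestLevel} on $q$. Hence it suffices to show that a perfect service never reaches Line \ref{line:FWY_ServiceSetsRequestLevel}.

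Next I would locate Line \ref{line:FWY_ServiceSetsRequestLevel} inside the \textbf{ForEach} loop ranging over $Q_\lambda \setminus Q'_\lambda$, so the claim reduces to showing $Q'_\lambda = Q_\lambda$ at the end of a perfect service. By the definition of ``perfect'', the solution $S$ returned by $\ForwardTime$ serves all of $Q_\lambda$, so the assignment ``Let $Q'_\lambda \subseteq Q_\lambda$ be the subset of requests served in $S$'' already sets $Q'_\lambda = Q_\lambda$. The only thing that could subsequently enlarge $Q_\lambda \setminus Q'_\lambda$ is the force-service step at Line \ref{line:FWY_ForceService}, so I would check it is skipped: it is guarded by $Q'_\lambda = \emptyset$, and since $\lambda$ was triggered by some level $j$ becoming critical, the total residual delay of requests of level at most $j$ equals $2^j > 0$, so there is a pending request of level $\le j \le \rank_\lambda$, giving $Q_\lambda \neq \emptyset$ and hence $Q'_\lambda = Q_\lambda \neq \emptyset$. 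Thus the guard fails and the force step does nothing.

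Putting these together, for a perfect service $Q_\lambda \setminus Q'_\lambda = \emptyset$, so the \textbf{ForEach} loop body — and in particular Line \ref{line:FWY_ServiceSetsRequestLevel} — is never executed; consequently no request $q$ ever has $\chrg_q(t) = \lambda$ for any $t$, so $\lambda$ cannot be charged, which is exactly the contrapositive of the statement. I do not expect a real obstacle here, since the argument is essentially a read-off of the pseudocode; the only subtlety worth spelling out is the force-service edge case above, which is why I single it out as its own step rather than folding it silently into the argument.
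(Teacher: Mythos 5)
Your proposal is correct and follows essentially the same approach as the paper: prove the contrapositive by noting that in a perfect service $Q'_\lambda = Q_\lambda$, so the \textbf{foreach} loop containing Line~\ref{line:FWY_ServiceSetsRequestLevel} never executes, hence no request's service pointer $\chrg_q$ ever points to $\lambda$, hence $\lambda$ is never charged. The paper states this in two sentences and does not bother with the force-service edge case; your extra observation that $Q_\lambda \neq \emptyset$ (because a level becoming critical requires positive residual delay on some pending request of that level or lower) correctly rules out the guard $Q'_\lambda = \emptyset$ firing in a perfect service, and is a reasonable detail to verify even though the paper leaves it implicit.
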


\begin{proof}[Proof of Lemma \ref{lem:FWY_ALGUpperBound}]
	Observe that $\widehat{\alg} = c(\Lambda_1) + c(\Lambda_2)$. First, observe that through Proposition \ref{prop:FWY_ServiceCostBoundedByLevel} we have that $c(\Lambda_1) \le O(\gamma) \cdot \sum_{\lambda\in \Lambda_1} 2^{\rank_\lambda}$. 
	
	It remains to show that $c(\Lambda_2)\le O(\gamma)\cdot \sum_{\lambda\in \Lambda^\circ} 2^{\rank_\lambda}$. Observe that every secondary service $\lambda$ of level $j$ charges a previous service $\lambda'\in \Lambda^\circ$ of level $j-1$, which is imperfect by Observation \ref{obs:FWY_OnlyFullServicesCharged}. From Proposition \ref{prop:FWY_ServiceCostBoundedByLevel}, we have that $c(\lambda) \le O(\gamma)\cdot 2^j$, and thus $c(\lambda) \le O(\gamma)\cdot 2^{\rank_{\lambda'}}$. Summing over all secondary services completes the proof, where Proposition \ref{prop:FWY_UniqueCharge} guarantees that no charged service is counted twice.
\end{proof}

\subsubsection*{\emph{Lower-bounding $\opt$.}}
Fix the set of services $\Lambda^*$ made in the optimal solution. To complete the proof of Theorem \ref{thm:FWY_Competitiveness}, we require the following two lemmas which lower-bound the cost of the optimal solution.

\begin{lem}
	\label{lem:FWY_PrimaryLowerBoundsOPT}
	$\sum_{\lambda\in \Lambda_1} 2^{\rank_\lambda} \le O(1)\cdot \opt$
\end{lem}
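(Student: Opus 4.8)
The plan is to mirror the structure of the deadline-case proof of Lemma \ref{lem:FWD_PrimaryLowerBoundsOPT}, replacing the notion of "disjoint request intervals" with a corresponding disjointness statement appropriate to the delay setting. First I would establish that two primary services $\lambda_1,\lambda_2$ of the same level $j$ cannot interact: concretely, if $t_{\lambda_1} < t_{\lambda_2}$, then by Proposition \ref{prop:FWY_ChargingOnlyAfterForwardTime}-style reasoning (and Observation \ref{obs:FWY_NoResidualDelayUntilForwardTime}) no request eligible for $\lambda_1$ can still have positive residual delay between $t_{\lambda_1}$ and $\tau_{\lambda_1}$; so if $\lambda_2$ is triggered by level $j-1$ going critical due to residual delay on a request that was already eligible for $\lambda_1$, that request would either have been served/upgraded by $\lambda_1$ (giving $\chrg_q(t_{\lambda_2})\neq\nul$) or have no residual delay at $t_{\lambda_2}$ — contradicting $\lambda_2$ being primary. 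This yields that the "witness windows" $[\text{(release of triggering request)}, \tau_\lambda]$ — or more carefully, the time intervals over which a level-$(j-1)$ request accumulates the $2^{j-1}$ residual delay that triggers $\lambda$ — are pairwise disjoint across primary services of a fixed level.

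Next I would set up the counting exactly as in the deadline proof. Let $m_j$ be the number of level-$j$ primary services and $j_{\max}$ the maximum such level. The same algebraic manipulation gives
\[
\sum_{\lambda\in\Lambda_1} 2^{\rank_\lambda} = \sum_{j\le j_{\max}} m_j 2^j \le 4\sum_{j\le j_{\max}} \bigl(m_j - \max_{j'>j}\{m_{j'}\}\bigr)^+ \cdot 2^{j-1}.
\]
The remaining task is to show $\opt$ must pay at least $\sum_{j}(m_j - \max_{j'>j} m_{j'})^+ 2^{j-1}$. For this I would argue: for a level-$j$ primary service $\lambda$ triggered because level $j-1$ became critical, the triggering comes from a set of pending level-$\le(j-1)$ requests whose residual delay summed to $2^{j-1}$; since these requests had their levels set initially in $\UponRequest$ (by primality, $\chrg_q=\nul$ for the relevant witnesses) and levels only rise, their combined "cheapest-solution cost plus delay they were forced to carry" forces $\opt$ to spend $\Omega(2^{j-1})$ in that disjoint time window — either by buying a solution serving one of them (costing at least roughly $2^{\rank_q}$ for a request of initial level $\rank_q$, analogously to the $\OP^*(\{q\})\ge 2^{\rank_\lambda-1}$ bound), or by absorbing $2^{j-1}$ total residual delay on requests it has not yet served. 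Stacking these disjoint lower bounds level by level, exactly as in the deadline proof, gives $\opt \ge \sum_j (m_j - \max_{j'>j}m_{j'})^+ 2^{j-1}$, hence $\sum_{\lambda\in\Lambda_1} 2^{\rank_\lambda}\le 4\opt$.

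The main obstacle, compared to the deadline case, is the last step: in the deadline model a single triggering request $q$ with $\OP^*(\{q\})\ge 2^{\rank_\lambda-1}$ cleanly forces $\opt$ to spend that much, whereas in the delay model the trigger is a \emph{set} of requests whose \emph{aggregate residual delay} crosses the threshold $2^{j-1}$. I will need to handle the dichotomy carefully: if $\opt$ served all those requests before the critical time, it paid their connection/element cost (at least the cost of the cheapest solution for at least one of them, which by the initial-level assignment is $\ge 2^{j-1}$ up to constants — this uses that by primality these are genuinely low-level requests whose initial levels were set, not upgraded); otherwise $\opt$ let residual delay $\ge 2^{j-1}$ accrue on requests it had not yet served, directly paying that delay. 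Making this case split uniform across the telescoping levels, and ensuring the disjointness of windows is strong enough that the per-level charges to $\opt$ do not overlap in time, is where the real work lies; the rest is the same geometric-series bookkeeping as before.
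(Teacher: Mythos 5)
Your plan follows the paper's outline — disjoint per-level windows, a per-window $\Omega(2^j)$ lower bound on $\opt$, then the same telescoping — but the dichotomy step has a genuine gap that the paper closes with an extra refinement you are missing. You argue that $\opt$ must either serve one of the triggering requests (at cost $\gtrsim 2^{j-1}$, since "by primality these had their initial levels set") or absorb $\gtrsim 2^{j-1}$ residual delay on them. The problem is that, in the delay framework, \emph{primary} (Definition~\ref{defn:FWY_ServiceTypes}) only constrains pending requests of level \emph{exactly} $\rank_\lambda-1$ with positive residual delay; the critical set can also contain requests of strictly lower level, which may well have been upgraded previously, and whose singleton solutions $\OP^*(\{q\})$ can be arbitrarily cheap compared to $2^{j-1}$. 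If most of the $2^{j-1}$ accumulated residual delay sits on such low-level requests, $\opt$ could serve them cheaply (and early, so it also dodges delay), and neither branch of your dichotomy gives a $2^{j-1}$-scale charge.

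The missing idea is the paper's restriction to $Q^=_\lambda$, the sub-collection of critical requests whose level is \emph{exactly} $\rank_\lambda - 1$, together with an invocation of Invariant~\ref{inv:FWY_SubcriticalInvariant}: since the requests of level $\le \rank_\lambda - 2$ carry total residual delay at most $2^{\rank_\lambda - 2}$, the subset $Q^=_\lambda$ alone carries at least $2^{\rank_\lambda - 2}$, and to every $q\in Q^=_\lambda$ primality does apply, giving $\OP^*(\{q\}) \ge 2^{\rank_\lambda - 1}$. Only then does the window dichotomy hold: $\opt$ either buys a solution serving some $q\in Q^=_\lambda$ at cost $\ge 2^{\rank_\lambda-1}$, or it incurs $\ge 2^{\rank_\lambda-2}$ of delay on $Q^=_\lambda$. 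Without this narrowing to the top level plus the budget from Invariant~\ref{inv:FWY_SubcriticalInvariant}, the per-window lower bound does not go through, and the rest of your telescoping argument has nothing to stand on.
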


\begin{lem}
	\label{lem:FWY_ChargeLowerBoundsOPT}
	$\sum_{\lambda\in \Lambda^\circ} 2^{\rank_\lambda} \le O(\log n) \cdot \opt$
\end{lem}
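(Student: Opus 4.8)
The plan is to follow the blueprint of the deadline case (Lemma \ref{lem:FWD_ChargeLowerBoundsOPT}), enhanced so that the charging also pays against the delay incurred by the optimum. Write $\opt = \opt_{\mathrm{tr}} + \opt_{\mathrm{del}}$, where $\opt_{\mathrm{tr}} = \sum_{\mu\in\Lambda^*}c(\mu)$ is the total transmission cost of the optimal solution and $\opt_{\mathrm{del}}$ is its total delay cost; for a request $q$ let $\mu(q)\in\Lambda^*$ be the first optimal transmission serving $q$, set $s_q := t_{\mu(q)}$, and note $\sum_q d_q(s_q)\le\opt_{\mathrm{del}}$. As before it suffices to lower-bound, for each charged $\lambda\in\Lambda^\circ$, a quantity that we can then split across the objects of the optimal solution.

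First I would establish $\PCOP^*_{E_0^\lambda\gets 0}(Q_\lambda,\pi_{t_\lambda\to\tau_\lambda})\ge 2^{\rank_\lambda}$ for every charged $\lambda$, which is immediate from Observation \ref{obs:FWY_OnlyFullServicesCharged} ($\lambda$ is imperfect) together with Proposition \ref{prop:FWY_TimeForwardingGuarantee}(2). Then, from the optimal solution I would build a feasible solution to $\PCOP_{E_0^\lambda\gets 0}(Q_\lambda,\pi_{t_\lambda\to\tau_\lambda})$: for $q\in Q_\lambda$ with $s_q\le\tau_\lambda$ include the elements of $\mu(q)$ (they serve $q$), and for the remaining $q\in Q_\lambda$ (those with $s_q>\tau_\lambda$) pay the penalty $\pi_{t_\lambda\to\tau_\lambda}(q)=\max(0,d_q(\tau_\lambda)-h_q(t_\lambda))$. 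The point is that this penalty is paid against delay the optimum actually pays: Line \ref{line:FWY_CleanEligibleDelay} ensures $h_q(t_\lambda)\ge d_q(t_\lambda)$, and $s_q>\tau_\lambda$ means the optimum has not served $q$ before $\tau_\lambda$, so the penalty is at most $d_q(\tau_\lambda)-d_q(t_\lambda)$, the delay the optimum accrues on $q$ during $[t_\lambda,\tau_\lambda]\subseteq[r_q,s_q]$. Combining, and capping every summand at $2^{\rank_\lambda}$ (legitimate: if one term already reaches $2^{\rank_\lambda}$ we are done, otherwise the uncapped sum still bounds $\PCOP^*$), this yields
\[
2^{\rank_\lambda}\ \le\ \sum_{\mu\in\Lambda^*}\min\{2^{\rank_\lambda},\,c_{E_0^\lambda}(\mu)\}\;+\;\sum_{q\in Q_\lambda:\,s_q>\tau_\lambda}\min\{2^{\rank_\lambda},\,\pi_{t_\lambda\to\tau_\lambda}(q)\},
\]
where a summand of the first sum can be nonzero only for a $\mu$ with $\mu(q)=\mu$ for some $q\in Q_\lambda$ with $s_q\le\tau_\lambda$.

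Summing over $\lambda\in\Lambda^\circ$, I would handle the transmission part by reproving the analogue of Proposition \ref{prop:FWD_LocalCharge}: fix $\mu\in\Lambda^*$ and consider the charged $\lambda$ for which it is relevant; the witnessing request $q$ is eligible for $\lambda$, lies in $Q_\mu$, and satisfies $\tau_\lambda\ge s_q=t_\mu\ge r_{Q_\mu}$, so Lemma \ref{lem:FWY_UniqueClass} (with $Q'=Q_\mu$) gives at most one such $\lambda$ per level. The three-case split of Proposition \ref{prop:FWD_LocalCharge} then carries over verbatim, using $c_{E_0^\lambda}(\mu)\le c(\mu)$: a geometric sum $\le 2c(\mu)$ over levels at most $\lfloor\log c(\mu)\rfloor$, at most $O(\log|\E|)$ intermediate levels contributing $\le c(\mu)$ each, and a zero contribution from levels above $\log c(\mu)+\log|\E|+O(1)$ since then $E_0^\lambda$ contains all elements of $\mu$. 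Summing over $\mu$ bounds the transmission part by $O(\log|\E|)\cdot\opt_{\mathrm{tr}}$.

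The delay part is the genuine obstacle. Fixing a request $q$, Lemma \ref{lem:FWY_UniqueClass} with $Q'=\{q\}$ again gives at most one charged $\lambda$ per level among those with $q$ eligible, and the goal is $\sum_\lambda\min\{2^{\rank_\lambda},\pi_{t_\lambda\to\tau_\lambda}(q)\}\le O(\log|\E|)\cdot d_q(s_q)$. Levels with $2^{\rank_\lambda}\le d_q(s_q)$ contribute a geometric sum $\le 2d_q(s_q)$; for the higher levels one must use the penalty bound, and the crucial extra ingredient is that, by Line \ref{line:FWY_MakeInvestment} together with the defining property of $\tau_\lambda$ (cf.\ Observation \ref{obs:FWY_NoResidualDelayUntilForwardTime}), a charged service that does not serve $q$ leaves $h_q\ge d_q(\tau_\lambda)$, so the penalties charged to $q$ by successive (time-ordered) services telescope against the quantities $d_q(\tau_\lambda)$, all of which lie below $d_q(s_q)$ since $\tau_\lambda<s_q$. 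I expect that a careful combination of this telescoping with the $2^{\rank_\lambda}$ caps gives the $O(\log|\E|)d_q(s_q)$ bound; making this quantitative — in particular controlling the span of levels at which $q$ can still be eligible for a charged service while the optimum has not served it — is where I anticipate the real work. Summing over $q$ then bounds the delay part by $O(\log|\E|)\cdot\opt_{\mathrm{del}}$, and adding the two parts, with $\log|\E| = O(\log n)$, yields the lemma.
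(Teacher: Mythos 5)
Your overall plan is viable and is a genuine reorganization of the paper's argument, but you leave exactly the part you flag as the ``genuine obstacle'' unfinished, and your diagnosis of that obstacle is off.

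The paper does not split $\opt$ into transmission and delay as you do. Instead it decomposes per optimal service $\lambda^* \in \Lambda^*$, where $c(\lambda^*)$ is \emph{defined} to include both the transmission cost of $\lambda^*$ and the delay cost of the requests $Q_{\lambda^*}$ under $\opt$, and proves the local-charge bound (Proposition~\ref{prop:FWY_LocalCharge}) directly: for each $\lambda^*$, the contributing $\lambda\in\Lambda^\circ$ are split into those with $\tau_\lambda \le t^*$ (paper's ``pink'' case, handled by telescoping of penalties, total $\le c(\lambda^*)$ with \emph{no} log factor) and those with $\tau_\lambda > t^*$ (handled like the deadline case, three sub-cases, total $O(\log|\E|)\cdot c(\lambda^*)$). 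Your reorganization --- charge purchased elements per $\mu\in\Lambda^*$ and penalties per request $q$ --- is a legitimate regrouping of the same ingredients, and your transmission side goes through: the witness request for a nonzero term gives $\tau_\lambda \ge t_\mu \ge r_{Q_\mu}$ so Lemma~\ref{lem:FWY_UniqueClass} applies with $Q'=Q_\mu$ and the three-case split carries over.

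Where you stall, however, is where the argument is actually easiest. You state the goal for the delay part as $\sum_\lambda \min\{2^{\rank_\lambda}, \pi_{t_\lambda\to\tau_\lambda}(q)\} \le O(\log|\E|)\cdot d_q(s_q)$ and then say you ``anticipate the real work'' in reconciling a geometric/level split with the telescoping. No level split is needed. Restrict to the charged $\lambda$ with $q\in Q_\lambda$ that do not serve $q$ and have $\tau_\lambda < s_q$, order them in time as $\lambda_1,\dots,\lambda_l$. After $\lambda_i$ fires, Line~\ref{line:FWY_MakeInvestment} leaves $h_q \ge d_q(\tau_{\lambda_i})$, so $h_q^{\lambda_{i+1}} \ge d_q(\tau_{\lambda_i})$ and
\[
\pi_{t_{\lambda_{i+1}}\to\tau_{\lambda_{i+1}}}(q) \;\le\; d_q(\tau_{\lambda_{i+1}}) - h_q^{\lambda_{i+1}} \;\le\; d_q(\tau_{\lambda_{i+1}}) - d_q(\tau_{\lambda_i}),
\]
and the \emph{uncapped} sum telescopes to $\sum_i \pi_{t_{\lambda_i}\to\tau_{\lambda_i}}(q) \le d_q(\tau_{\lambda_l}) \le d_q(s_q)$. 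This is an $O(1)$ bound, not $O(\log|\E|)$, and it is exactly the argument the paper runs in Case~1 of Proposition~\ref{prop:FWY_LocalCharge} (there aggregated over all $q\in Q_{\lambda^*}$). You already wrote the essential inequality ``after the service $h_q \ge d_q(\tau_\lambda)$''; you simply didn't push through the telescoping to notice it closes the gap unconditionally. So the proposal is salvageable almost verbatim, but as written it makes a conjecture out of a step the paper proves cleanly, and incorrectly anticipates needing a log factor there.
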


\begin{proof}[Proof of Lemma \ref{lem:FWY_PrimaryLowerBoundsOPT}]
	Consider a service $\lambda\in \Lambda_1$ of level $j$. $\lambda$ is triggered upon level $j-1$ becoming critical. Let $Q^\crit_\lambda$ be the set of requests with positive residual delay of level at most $j-1$ which triggered $\lambda$. Define $\sigma_\lambda$ to be the earliest release time of a request in $Q^\crit_\lambda$.
	
	Fix any level $j$. We claim that the intervals of the form $[\sigma_\lambda,t_\lambda]$ for every $j$-level service $\lambda\in \Lambda_1$ are disjoint. Assume otherwise, that some $ [\sigma_{\lambda_1},t_{\lambda_1}] $ and $ [\sigma_{\lambda_2},t_{\lambda_2}] $  intersect. Without loss of generality, assume that $t_{\lambda_1} \in [\sigma_{\lambda_2},t_{\lambda_2}]$. Then there exists a request $q \in Q^\crit_{\lambda_2}$ which was pending during $\lambda_1$, after which $\rank_q$ would be at least $j$, in contradiction to $q\in Q^\crit_{\lambda_2}$.
	
	Now, define $Q^=_\lambda\subseteq Q^\crit_\lambda$ to be the subset of requests in $Q^\crit_\lambda$ which are of level exactly $\rank_\lambda - 1$. Denote by $t_\lambda^-$ the time $t_\lambda$ immediately before the service $\lambda$. Using Invariant \ref{inv:FWY_SubcriticalInvariant}, we have that $\rho_{Q^\crit_\lambda \backslash Q^=_\lambda}(t_\lambda^-) \le 2^{\rank_\lambda-2} $. Thus, we have that $\rho_{Q^=_\lambda}(t_\lambda^-)\ge 2^{\rank_\lambda-2}$. In addition, since $\lambda \in \Lambda_1$, we have that $\chrg_q(t_\lambda) = \nul$ for every $q\in Q^=_\lambda$. Thus, $I_q$ as defined in $\UponRequest$ is at least $2^{\rank_q} = 2^{\rank_\lambda - 1}$. 
	
	Observe that according to the definition of $I_q$, and the approximation guarantee of $\OP$, we have that $I_q$ is a lower bound to the cost of any solution which serves $q$. Thus, we have that during the interval $[\sigma_\lambda,t_\lambda]$ the optimal solution has either served a request from $Q^=_\lambda$ (at a cost of at least $2^{\rank_\lambda-1}$), or paid a delay of $2^{\rank_\lambda-2}$ for the requests of $Q^=_\lambda$.
	
	Now, let $m_j$ be the number of primary services of level $j$, and let $j_{\max}$ be the maximum level of a primary service. Denoting $x^+ = \max(x,0)$, consider the optimal solution. It must pay at least  $2^{j_{\max} -2}$ in either delay or service for each of the $m_{j_{\max}}$ intervals of the form $[\sigma_\lambda,t_\lambda]$ (for $\lambda\in \Lambda_1$ of level $j_{\max}$). For each such service $\lambda$, we charge the optimal solution $2^{j_{\max}-2}$ either for its delay or for a single service in the corresponding interval in which a request from $Q^=_\lambda$ was served.
	
	Now, consider the next level $j_{\max}-1$. We know that the optimal solution must incur $2^{j_{\max} -3}$ for each of the $m_{j_{\max} -1}$ intervals of this level. However, the optimal solution might already be charged for a service of level $j_{\max}$, and might use this service to save costs, serving an interval with cost less than $2^{j_{\max} -3}$. But this can only happen $m_{j_{\max}}$ times, and can only hit a single interval of level $j_{\max} -1$
	(since those intervals are disjoint). Thus, we can charge at least $(m_{j_{\max}-1} -m_{j_{\max}})^+$ intervals an amount of $2^{j_{\max} -3}$, either for delay or for a single service of a level-$(j_{\max}-2)$ request.
	
	Repeating this argument, we get that the optimal solution pays at least $\left(m_j - \max_{j'>j} \{m_{j'} \} \right)^+ \cdot 2^{j-2}$ for each level $j$. 
	
	As for the cost of the algorithm, we have that 
	\begin{align*}
	c(\Lambda_1) &\le O(1)\cdot \sum_{j=-\infty}^{j_{\max}} m_j \cdot 2^j \\
	& \le  O(1)\cdot \sum_{j=-\infty}^{j_{\max}} \left(m_j - \max_{j'>j}\{m_{j'}\} \right)^+ \cdot 2^{j+1} \\
	& \le O(1)\cdot \opt
	\end{align*}
	where the first inequality uses Proposition \ref{prop:FWY_ServiceCostBoundedByLevel} and the second inequality is through changing the order of summation and summing a geometric series.
\end{proof}

It remains to prove lemma \ref{lem:FWY_ChargeLowerBoundsOPT} by charging for each service $\lambda \in \Lambda^\circ$ the amount $2^{\rank_\lambda}$ to the optimal solution times $O(\log |\E|)$. As in the deadline case, we split the charge of $2^{\rank_\lambda}$ between the services made by the optimal solution, and show that each charge is locally valid. 

For a service $\lambda^* \in \Lambda^*$ of the optimal solution, we denote by $Q_{\lambda^*}$ the set of requests served by $\lambda^*$. We define the cost associated with $\lambda^*$, denoted by $c(\lambda^*)$, to be the transmission cost of $\lambda^*$ plus the total delay cost of the requests $Q_{\lambda^*}$ in the optimal solution. Recall that for a service $\lambda \in \Lambda$ made by the algorithm, $Q_\lambda$ is the set of requests eligible for $\lambda$. We define $Q_{\lambda \cap \lambda^*} = Q_\lambda \cap Q_{\lambda^*}$. 

For a set of requests $Q'$, we denote the cost of the optimal offline solution for $\PCOP$ on $Q'$, with respect to a penalty function $\pi:Q'\to \mathbb{R}^+$, by $\PCOP^*(Q',\pi)$. We also use $\PCOP^*_{E_0 \gets 0}(Q', \pi)$ to refer to the cost of the optimal offline solution for $Q'$ where the costs of the elements $E_0 \subseteq \E$ is set to $0$. We also write $\PCOP^*(Q',\pi)$ where $\pi$ is defined on a \emph{superset} of $Q'$; the penalty function in this case is the restriction of $\pi$ to $Q'$.

For a service $\lambda \in \Lambda$, we denote by $E_0^\lambda$ the value set to $E_0$ in Line \ref{line:FWY_BuyCheapEdges} during the service $\lambda$. The outline of the proof of Lemma \ref{lem:FWY_ChargeLowerBoundsOPT} is shown in Figure \ref{fig:FWY_ChargingToOptimal}.

\begin{figure}[tb]
	\subfloat[\label{subfig:FWY_ChargingToOptimal_Charging}Charging Scheme]{\includegraphics[width=0.45\textwidth]{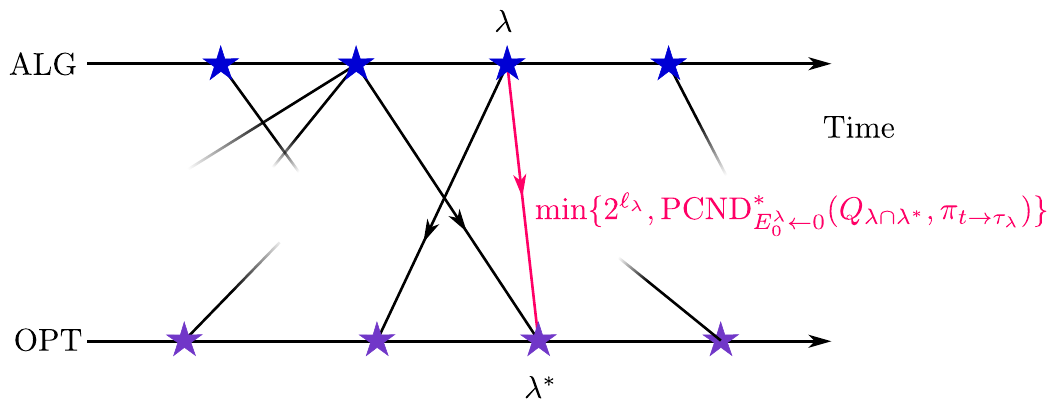}}
	\hfill
	\subfloat[\label{subfig:FWY_ChargingToOptimal_Sink}Charges to Optimal Service]{\includegraphics[width=0.45\textwidth]{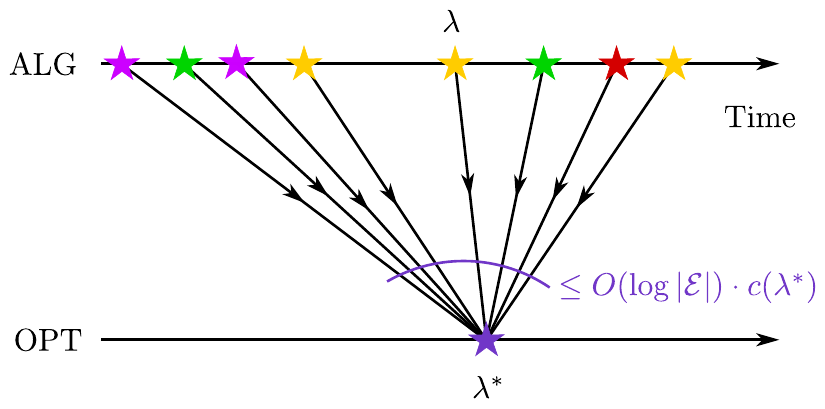}}\\

	\quad In a similar way to Subfigure \ref{subfig:FWD_ChargingToOptimal_Charging}, Subfigure \ref{subfig:FWY_ChargingToOptimal_Charging} shows the services of $\Lambda^\circ$ and the services of the optimal algorithm, as well as the charging of costs to the optimal solution. The amount $\min\{2^{\rank_\lambda}, \PCOP^*_{E_0^\lambda\gets 0}(Q_{\lambda\cap\lambda^*}, \pi_{t_\lambda \to \tau_\lambda})\}$ is charged by the service $\lambda \in \Lambda^\circ$ to the optimal service $\lambda^*$. The proof of Lemma \ref{lem:FWY_ChargeLowerBoundsOPT} shows that these charges are sufficient, i.e. each service $\lambda\in \Lambda^\circ$ charges at least $2^{\rank_\lambda}$.
	
	\quad Subfigure \ref{subfig:FWY_ChargingToOptimal_Sink} shows the validity of the charging, given in Proposition \ref{prop:FWY_LocalCharge}. As in the deadline case, this proposition shows that the total amount charged to an optimal service $\lambda^*$ exceedes its cost by a factor of at most $O(\log |\E|)$. The argument is similar to Proposition \ref{prop:FWD_LocalCharge}. However, in addition to the three types of services in the deadline case (green, yellow, red), there is an additional type of service (pink), which consists of services $\lambda$ with $\tau_\lambda \le t_{\lambda^*}$. These pink services are shown to charge a total of at most $c(\lambda^*)$.
	
	\caption{\label{fig:FWY_ChargingToOptimal} Visualization of Services}
\end{figure}


\begin{prop}
	\label{prop:FWY_LocalCharge}
	There exists a constant $\beta$ such that for every optimal service $\lambda^* \in \Lambda^*$, we have that 
	\begin{equation}
	\label{eq:FWY_ChargePerOptimalService}
	\sum_{\lambda\in \Lambda^\circ}\min\{ 2^{\rank_\lambda},\PCOP^*_{E_0^\lambda \gets 0}(Q_{\lambda \cap \lambda^*}, \pi_{t_\lambda \to \tau_\lambda}) \} \le \beta\log |\E| \cdot c(\lambda^*)
	\end{equation}
\end{prop}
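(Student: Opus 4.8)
The plan is to follow the proof of Proposition~\ref{prop:FWD_LocalCharge} almost verbatim, replacing $\OP^*$ by $\PCOP^*$ and Lemma~\ref{lem:FWD_UniqueClass} by Lemma~\ref{lem:FWY_UniqueClass}, and adding one new case to the charge analysis to account for services whose forwarding time has already passed by the time $\lambda^*$ occurs. Fix $\lambda^*\in\Lambda^*$. If $c(\lambda^*)=0$ then $\lambda^*$ serves no request and the left-hand side of Equation~\ref{eq:FWY_ChargePerOptimalService} is $0$, so assume $c(\lambda^*)>0$ and set $\ell=\lfloor\log c(\lambda^*)\rfloor$. Let $\Lambda'\subseteq\Lambda^\circ$ be the charged services for which some request of $Q_{\lambda^*}$ is eligible; any $\lambda\notin\Lambda'$ has $Q_{\lambda\cap\lambda^*}=\emptyset$ and contributes $\min\{2^{\rank_\lambda},0\}=0$, so it suffices to bound the total charge of $\Lambda'$. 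Every $q\in Q_{\lambda^*}$ is pending at $t_{\lambda^*}$ (it is served there by $\lambda^*$), so $r_q\le t_{\lambda^*}$, hence $\max_{q\in Q_{\lambda^*}}r_q\le t_{\lambda^*}$. Split $\Lambda'$ into the \emph{pink} services (those with $\tau_\lambda\le t_{\lambda^*}$) and the rest. Every non-pink $\lambda\in\Lambda'$ then has $\tau_\lambda>t_{\lambda^*}\ge\max_{q\in Q_{\lambda^*}}r_q$, so the non-pink part of $\Lambda'$ lies in the family to which Lemma~\ref{lem:FWY_UniqueClass} applies with $Q'=Q_{\lambda^*}$, and therefore contains at most one service of each level.

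For the non-pink services I would run the same three-way split as in Proposition~\ref{prop:FWD_LocalCharge}, using throughout that the set $E_{\lambda^*}$ of elements transmitted by $\lambda^*$ is a feasible offline solution serving all of $Q_{\lambda^*}\supseteq Q_{\lambda\cap\lambda^*}$, of element cost at most $c(\lambda^*)$. Concretely: (i) for $\rank_\lambda\le\ell$ each term is at most $2^{\rank_\lambda}$, and summing one per level gives a geometric series at most $2^{\ell+1}\le2c(\lambda^*)$; (ii) for $\ell<\rank_\lambda<\ell+\lceil\log|\E|\rceil+1$, using $E_{\lambda^*}$ and paying no penalties gives $\PCOP^*_{E_0^\lambda\gets0}(Q_{\lambda\cap\lambda^*},\pi_{t_\lambda\to\tau_\lambda})\le c(\lambda^*)$, and there are at most $\lceil\log|\E|\rceil$ such levels; (iii) for $\rank_\lambda\ge\ell+\lceil\log|\E|\rceil+1$, every element of $E_{\lambda^*}$ costs at most $c(\lambda^*)<2^{\ell+1}\le2^{\rank_\lambda}/|\E|$ and hence lies in $E_0^\lambda$, so $E_{\lambda^*}$ is a zero-cost feasible solution for $Q_{\lambda\cap\lambda^*}$ and the term is $0$. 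Summed, the non-pink services charge $O(\log|\E|)\cdot c(\lambda^*)$, just as in the deadline case.

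The genuinely new ingredient — and the step I expect to be the crux — is bounding the total charge of the pink services, which have no analogue in the deadline analysis. For a pink $\lambda$ I would use the trivial ``pay all penalties'' solution to bound $\min\{2^{\rank_\lambda},\PCOP^*_{E_0^\lambda\gets0}(Q_{\lambda\cap\lambda^*},\pi_{t_\lambda\to\tau_\lambda})\}\le\sum_{q\in Q_{\lambda\cap\lambda^*}}\pi_{t_\lambda\to\tau_\lambda}(q)$, and then exchange the order of summation to obtain $\sum_{q\in Q_{\lambda^*}}\sum_{\lambda\text{ pink},\,q\in Q_\lambda}\pi_{t_\lambda\to\tau_\lambda}(q)$. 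Fixing $q\in Q_{\lambda^*}$, let $\lambda_1,\dots,\lambda_k$ be the pink services for which $q$ is eligible, ordered in time. For $i<k$, $q$ is pending at $t_{\lambda_{i+1}}>t_{\lambda_i}$, so $q$ is not served by $\lambda_i$, and therefore Line~\ref{line:FWY_MakeInvestment} raises $h_q$ by $\pi_{t_{\lambda_i}\to\tau_{\lambda_i}}(q)$, bringing $h_q$ to at least $d_q(\tau_{\lambda_i})$ right after $\lambda_i$. Since $h_q$ only increases and $d_q$ is nondecreasing, $h_q(t_{\lambda_j})\ge d_q\bigl(\max_{i<j}\tau_{\lambda_i}\bigr)$, and hence $\pi_{t_{\lambda_j}\to\tau_{\lambda_j}}(q)\le\max\bigl(0,\,d_q(\tau_{\lambda_j})-d_q(\max_{i<j}\tau_{\lambda_i})\bigr)$. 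Writing $M_j=\max_{i\le j}\tau_{\lambda_i}$, this is at most $d_q(M_j)-d_q(M_{j-1})$, so the sum over $j$ telescopes to at most $d_q(M_k)\le d_q(t_{\lambda^*})$, using that all $\tau_{\lambda_i}\le t_{\lambda^*}$ by pinkness. Thus the pink services charge at most $\sum_{q\in Q_{\lambda^*}}d_q(t_{\lambda^*})$, which is the delay cost the optimal solution incurs on $Q_{\lambda^*}$ (each such $q$ is pending from $r_q$ until $t_{\lambda^*}$) and hence at most $c(\lambda^*)$. Adding the pink ($\le c(\lambda^*)$), case-(i) ($\le2c(\lambda^*)$), case-(ii) ($\le\lceil\log|\E|\rceil\,c(\lambda^*)$) and case-(iii) ($=0$) contributions yields Equation~\ref{eq:FWY_ChargePerOptimalService} with $\beta$ an absolute constant.
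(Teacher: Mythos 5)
Your proposal is correct and follows essentially the same route as the paper: split $\Lambda'$ into services whose forwarding time precedes $\lambda^*$ (handled by the penalty telescope against the optimal delay cost) and the rest (where Lemma~\ref{lem:FWY_UniqueClass} gives one service per level and the three-way split from Proposition~\ref{prop:FWD_LocalCharge} carries over). The only cosmetic difference is that you cut at $t_{\lambda^*}$ rather than at $t^*=\max_{q\in Q_{\lambda^*}}r_q$ as the paper does; since $t^*\le t_{\lambda^*}$, non-pink services still satisfy the hypothesis $\tau_\lambda\ge r_{Q'}$ of Lemma~\ref{lem:FWY_UniqueClass}, the telescoped bound $d_q(t_{\lambda^*})$ is still covered by the delay $\lambda^*$ pays, and your closed-form telescope is just a cleaner phrasing of the paper's induction.
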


\begin{proof}
	Fix any service $\lambda^* \in \Lambda^*$ of the optimal solution. Observe that a service $\lambda \in \Lambda^\circ$ such that $Q_\lambda \cap Q_{\lambda^*} = \emptyset$ does not contribute to the left-hand side of Equation \ref{eq:FWY_ChargePerOptimalService}. Hence, it remains to consider only $\lambda \in \Lambda^\circ$ such that $Q_\lambda \cap Q_{\lambda^*} \neq \emptyset$; denote the set of such services by $\Lambda'$.
	
	Define $t^* = \max_{q\in Q_{\lambda^*}} r_q$. Each $\lambda \in \Lambda'$ is in one of the following cases.
	
	\paragraph*{Case 1: $\tau_\lambda \le t^*$.} Let $\Lambda^{\le t^*} \subseteq \Lambda'$ be the subset of such services. For every request $q$ eligible for $\lambda$, define $h_q^\lambda$ to be the value of the investment counter $h_q$ upon the start of $\lambda$. We have:
	
	\begin{align*}
	\sum_{\lambda \in \Lambda^{\le t^*} }\min\{ 2^{\rank_\lambda},\PCOP^*_{E_0^\lambda \gets 0}(Q_{\lambda \cap \lambda^*},\pi_{t_\lambda \to \tau_\lambda}) \} &\le  \sum_{\lambda \in \Lambda^{\le t^*} } \PCOP^*_{E_0^\lambda \gets 0}(Q_{\lambda \cap \lambda^*},\pi_{t_\lambda \to \tau_\lambda}) \\
	& \le \sum_{\lambda \in \Lambda^{\le t^*} } \sum_{q\in Q_{\lambda \cap \lambda^*}}\pi_{t_\lambda \to  \tau_\lambda}(q) \\
	& = \sum_{\lambda \in \Lambda^{\le t^*} } \sum_{q\in Q_{\lambda \cap \lambda^*}} \max\{0,d_{q}(\tau_\lambda) - h_q^\lambda\}\\
	& = \sum_{q\in Q_{\lambda^*}}\sum_{\lambda \in \Lambda^{\le t^*} | q\in Q_\lambda} \max\{0,d_{q}(\tau_\lambda) - h_q^\lambda\}
	\end{align*}
	
	Now, fix any request $q\in Q_{\lambda^*}$. We claim that $\sum_{\lambda\in \Lambda^{\le t^*} | q\in Q_\lambda} \max\{0,d_q(\tau_\lambda) - h_q^\lambda\} \le d_q(t^*)$. To see this, consider the services in the sum by order of occurrence, denoted $\lambda_1,\cdots, \lambda_l$. We prove by induction that $\sum_{i'=0}^i \max\{0,d_q(\tau_{\lambda_{i'}}) - h_q^{\lambda_{i'}}\} \le d_q(t^*)$ for every $i\in [l]$, which proves the claim. Clearly, this holds for the base case of $i=1$, since $\max\{0,d_q(\tau_{\lambda_{1}}) - h_q^{\lambda_{1}}\} \le d_q(\tau_{\lambda_{1}}) \le d_q(t^*)$. 
	
	We prove the inductive claim for $i>1$ by assuming it holds for $i-1$. Observe that $\lambda_1, \cdots, \lambda_{i-1}$ paid the penalty for $q$ (otherwise it would not be eligible for $\lambda_i$). Thus, we have that at the end of $\lambda_{i-1}$ we have that $h_q \ge \sum_{i'=0}^{i-1} \max\{0,d_q(\tau_{\lambda_{i'}}) - h_q^{\lambda_{i'}}\} \le d_q(t^*)$. Since $h_q^{\lambda_i}$ can only be larger, and since $\max\{0,d_q(\tau_{\lambda_{i}}) - h_q^{\lambda_{i}}\} \le d_q(t^*) - h_q^{\lambda_i}$, the inductive claim holds.
	
	Overall, for this case, we have that 
	\[\sum_{\lambda \in \Lambda^{\le t^*} }\min\{ 2^{\rank_\lambda},\PCOP^*_{E_0^\lambda \gets 0}(Q_{\lambda \cap \lambda^*}) \}\le \sum_{q\in Q_{\lambda^*}} d_q(t^*) \le c(\lambda^*) \]
	where the last inequality is due to the fact that $\lambda^*$ occurs no earlier than $t^*$, and thus the optimal solution incurs the delay of $Q_{\lambda^*}$ up to $t^*$.
	
	\paragraph*{Case 2: $\tau_\lambda > t^*$.} Denote by $\Lambda^{>t^*} \subseteq \Lambda'$ the set of such services. Using Lemma \ref{lem:FWY_UniqueClass}, for every level $j$ there exists at most one $j$-level service in $\Lambda^{>t^*}$. Define $\ell = \lfloor \log (c(\lambda^*_i)) \rfloor$, and consider the following subcases for $\lambda \in \Lambda^{>t^*}$:
	\begin{enumerate}
		\item $\rank_\lambda \le \ell$. In this case, we have that $\lambda$ contributes at most $2^{\rank_\lambda}$ to the left-hand side of Equation \ref{eq:FWY_ChargePerOptimalService}. Summing over at most a single service from each level yields a geometric sum which is at most $2^{\ell+1} \le 2\cdot c(\lambda^*)$.
		
		\item $\ell < \rank_\lambda < \ell + \lceil \log |\E|  \rceil+ 1$. For such $\lambda$, observe that 
		\[\min\{2^{\rank_\lambda}, \PCOP^*_{E_0^\lambda \gets 0}(Q_{\lambda \cap \lambda^*}, \pi_{t_\lambda \to \tau_\lambda})\} \le \OP^*(Q_{\lambda^*}) \le c(\lambda^*) \]
		and thus the service $\lambda$ contributes at most $c(\lambda^*)$ to the left-hand side of Equation \ref{eq:FWY_ChargePerOptimalService}. Summing over at most one $\lambda$ from each level, their total contribution to the left-hand side of Equation \ref{eq:FWY_ChargePerOptimalService} is at most $\lceil \log |\E| \rceil \cdot c(\lambda^*)$.

		\item $\rank_\lambda \ge \ell + \lceil \log |\E| \rceil +1$. We claim that $\PCOP^*_{E_0^\lambda \gets 0}(Q_{\lambda \cap \lambda^*}) = 0$, and thus the contribution to the left-hand side of Equation \ref{eq:FWY_ChargePerOptimalService} from these services is $0$. 
		
		To prove this claim, observe that $\PCOP^*_{E_0^\lambda \gets 0}(Q_{\lambda \cap \lambda^*}, \pi_{t_\lambda \to \tau_\lambda}) \le \OP^*_{E_0^\lambda \gets 0}(Q_\lambda^*)$. Consider that every element in $\lambda^*$ costs at most $c(\lambda^*) \le 2^{\ell +1}$. Thus, since $2^{\rank_\lambda} \ge 2^{\ell+1} \cdot |\E|$, we have that $\lambda$ added all elements of $\lambda^*$ to $E_0$ in Line \ref{line:FWY_BuyCheapEdges}. Note that since $\lambda^*$ served $Q_{\lambda^*}$, we have that $\OP^*_{E_0^\lambda \gets 0}(Q_\lambda^*) =0$, as required.
	\end{enumerate}
	Summing over the contributions from each level completes the proof.
\end{proof}

\begin{proof}[Proof of Lemma \ref{lem:FWY_ChargeLowerBoundsOPT}]
	As in the deadline case, it is enough to show that for every charged service $\lambda \in \Lambda^\circ$, we have that 
	\begin{equation}
		\label{eq:FWY_GlobalChargeIsLocalCharge}
		2^{\rank_\lambda} \le \sum_{\lambda^*\in \Lambda^*} \min\{2^{\rank_\lambda}, \PCOP^*_{E_0^\lambda \gets 0}(Q_{\lambda \cap \lambda^*}, \pi_{t_\lambda \to \tau_\lambda})\}
	\end{equation}
	Summing over all $\lambda \in \Lambda^\circ$ and using Proposition \ref{prop:FWY_LocalCharge} would immediately yield the lemma.
	
	If one of the summands on the right-hand side of Equation \ref{eq:FWY_GlobalChargeIsLocalCharge} is $2^{\rank_\lambda}$, the claim clearly holds, and the proof is complete. Otherwise, the right-hand side is exactly $\sum_{\lambda^* \in \Lambda^*} \PCOP^*_{E_0^\lambda \gets 0}(Q_{\lambda \cap \lambda^*}, \pi_{t_\lambda \to \tau_\lambda})$. Now, since $\bigcup_{\lambda^* \in \Lambda^*} Q_{\lambda \cap \lambda^*} = Q_\lambda$, we can construct a feasible solution for $\PCOP_{E_0^\lambda\gets 0}(Q_\lambda, \pi_{t_\lambda \to \tau_\lambda})$ by buying the elements in $\PCOP^*_{E_0^\lambda\gets 0}(Q_{\lambda \cap \lambda^*}, \pi_{t_\lambda \to \tau_\lambda})$ for every $\lambda^* \in \Lambda^*$, and paying the penalty for unserved requests. Clearly, the cost of this solution is at most $\sum_{\lambda^* \in \Lambda^*} \PCOP^*_{E_0^\lambda \gets 0}(Q_{\lambda \cap \lambda^*}, \pi_{t_\lambda \to \tau_\lambda})$, and thus 
	
	\[ \PCOP^*_{E_0^\lambda\gets 0}(Q_\lambda, \pi_{t_\lambda \to \tau_\lambda}) \le \sum_{\lambda^* \in \Lambda^*} \PCOP^*_{E_0^\lambda \gets 0}(Q_{\lambda \cap \lambda^*}, \pi_{t_\lambda \to \tau_\lambda})  \]
	
	From Observation \ref{obs:FWY_OnlyFullServicesCharged}, we know that $\lambda$ is an imperfect service. Proposition \ref{prop:FWY_TimeForwardingGuarantee} thus implies that  $2^{\rank_\lambda}   \le \PCOP^*_{E_0^\lambda\gets 0}(Q_\lambda, \pi_{t_\lambda \to \tau_\lambda})$, which completes the proof. 
	
\end{proof}

\begin{proof}[Proof of Theorem \ref{thm:FWY_Competitiveness}]
	The competitiveness guarantee results immediately from Lemmas \ref{lem:FWY_ALGUpperBound}, \ref{lem:FWY_ChargeLowerBoundsOPT} and  \ref{lem:FWY_PrimaryLowerBoundsOPT}.
	
	As for the running time of the algorithm, it is clear that it is determined by either Line \ref{line:FWY_BuyCheapEdges}, which takes $O(|\E|)$ time in each service, or by the number of calls made to the prize-collecting approximation algorithm $\PCOP$ in the function \ForwardTime. We claim that the total number of calls made in each service is $O(k^2)$, with $k = |Q|$ the number of requests in the input. 
	
	To see this, fix any service $\lambda$ at time $t$. Observe that the number of calls made to $\PCOP$ in a service $\lambda$ is exactly the number of iterations of the loop in \ForwardTime. Denote the iterations of this loop in the service $\lambda$ by $I_1,\cdots, I_l$. For every iteration $I_i$, we denote by $t_i$ the value of the variable $t''$ set in iteration $I_i$, and denote by $S_i$ the $\PCOP$ solution computed in $I_i$. 
	
	Observe the state after iteration $i_k$ -- we know that the requests of $Q_\lambda$ gather a total delay of at least $k\gamma\cdot 2^{\rank_\lambda}$ between $t$ and $t_k$. Thus, there exists a request $q_1\in Q_\lambda$ which has delay of at least $\gamma\cdot 2^{\rank_\lambda}$. In any solution $S_i$ for $i>k$ (except possibly the final one $S_l$), we have that $q$ is served. This is since otherwise the cost of $S_i$ would exceed $\gamma\cdot 2^{\rank_\lambda}$, in contradiction to the loop not ending at the \Break command in \ForwardTime.
	
	Next, consider the iterations $i_{k+1},\cdots, i_{2k}$. Using the same argument, we know that there exists a request $q_2 \in Q_\lambda \backslash \{q_1\}$ that gathers at least $\gamma\cdot 2^{\rank_\lambda}$ delay until time $t_{2k}$. Thus, $S_i$ for $2k\le i < l$ serves $q_2$. Repeating this argument, we know that for $i\ge k^2$ the solution $S_i$ must serve all requests $Q_\lambda$, ending the loop. 
	
	Note that the number of services performed in the algorithm is at most $k$, since each service serves some pending request (as ensured by Line \ref{line:FWY_ForceService}). Thus, the total running time consists of $O(k^3)$ calls to $\PCOP$, and $O(k |\E|)$ time for Line \ref{line:FWY_BuyCheapEdges}. This completes the proof.
\end{proof}

\section{Applications and Extensions of the Delay Framework}
In this section, we apply the framework of Section \ref{sec:FWY} to various problems, as we did for the deadline case. The requirement for the delay framework is an approximation algorithm for the prize-collecting problem. For some of the problems we consider, we cite appropriate prize-collecting algorithms. For others, we use a simple construction which yields a prize-collecting approximation algorithm from an approximation algorithm for the original problem.

\paragraph{Edge-Weighted Steiner Tree and Forest.} The following result is due to Hajiaghayi and Jain~\cite{Hajiaghayi2006}. 

\begin{pthm}[\cite{Hajiaghayi2006}]
	There exists a polynomial-time, deterministic $3$-approximation for EW prize-collecting Steiner forest.
\end{pthm}

Plugging the algorithm of the previous theorem into the framework of Section \ref{thm:FWY_Competitiveness} yields the following result.

\begin{thm}
	There exists an $O(\log n)$-competitive deterministic algorithm for EW Steiner forest with delay which runs in polynomial time.
\end{thm}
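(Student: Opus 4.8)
The plan is to recognize the statement as a direct instantiation of the delay framework from Section~\ref{sec:FWY}, so the work is mostly bookkeeping rather than new analysis. First I would verify that edge-weighted Steiner forest with delay is indeed an instance of the abstract $\OP$-with-delay problem: the element set is $\E = E$ with the given edge costs $c$, and for a request $q$ on a terminal pair $(u_1,u_2)$, the upwards-closed family $X_q \subseteq 2^E$ is exactly the set of edge subsets in which $u_1,u_2$ lie in the same connected component --- this is upwards-closed because adding edges cannot disconnect a connected pair. Next I would identify the corresponding prize-collecting variant $\PCOP$: given requests $Q$ with penalties $\pi:Q\to\mathbb{R}^+$, a solution is an edge set paying $\sum_{e\in E'}c(e) + \sum_{q\text{ unserved}}\pi(q)$, which is precisely the EW prize-collecting Steiner forest problem.

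With this identification in hand, I would invoke the cited theorem of Hajiaghayi and Jain, which supplies a deterministic polynomial-time $\gamma$-approximation for EW prize-collecting Steiner forest with $\gamma = 3$. Plugging this approximation algorithm into the framework of Theorem~\ref{thm:FWY_Competitiveness} yields a deterministic, polynomial-time, $O(\gamma\log|\E|)$-competitive algorithm for EW Steiner forest with delay. Finally I would observe that $|\E| = |E| \le \binom{n}{2} < n^2$, hence $\log|\E| \le 2\log n$, so the competitive ratio is $O(3\cdot 2\log n) = O(\log n)$, and the running time is polynomial since $\gamma$ is a constant and the encapsulated prize-collecting approximation runs in polynomial time. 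This completes the proof.

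There is no real obstacle here: the only point requiring a moment's care is confirming that the penalty-based prize-collecting formulation used internally by \ForwardTime\ (where penalties encode future residual delay, and the cost of the elements in $E_0$ is set to $0$) still corresponds to a legitimate EW prize-collecting Steiner forest instance --- but zeroing out a subset of edge costs keeps the instance within the same problem class, and the Hajiaghayi--Jain guarantee holds for arbitrary nonnegative edge costs and nonnegative penalties, so the encapsulation is valid verbatim. Everything else is subsumed by Theorem~\ref{thm:FWY_Competitiveness} and Remarks~\ref{rem:FWD_QuasiPolynomialTime} and~\ref{rem:FWD_LasVegas}.
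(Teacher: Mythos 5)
Your proposal matches the paper's argument exactly: instantiate the delay framework (Theorem~\ref{thm:FWY_Competitiveness}) with the Hajiaghayi--Jain $3$-approximation for EW prize-collecting Steiner forest, and note that $\log|\E| \le 2\log n$. The extra check that zeroing edge costs preserves the problem class is a fine sanity observation, but otherwise this is the same route as the paper.
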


\paragraph{Multicut.} The result of Garg \textit{et al.}~\cite{Garg1993}, stated in Theorem \ref{thm:APP_MCD_GargEtAl}, is in fact an approximation with respect to the optimal fractional solution for the following LP relaxation (where $\P_q$ is the collection of paths connecting the two terminals of $q$).

\begin{equation}
\label{eq:MCD_PrimalLP}
\arraycolsep=5pt \def\arraystretch{2.2}
\begin{array}{lcr}
\text{minimize} & \sum_{e\in E}x_e c(e)& \\
\text{subject to} & \sum_{e\in P} x_e \ge 1 & \forall q\in Q, \in \P_q\\
& x_e \ge 0 &\forall e\in E
\end{array}
\end{equation}

The corresponding prize-collecting LP relaxation, for a penalty function $\pi$, is the following.

\begin{equation}
\label{eq:MCD_PC_PrimalLP}
\arraycolsep=5pt \def\arraystretch{2.2}
\begin{array}{lcr}
\text{minimize} & \sum_{e\in E}x_e c(e) + \sum_{q \in Q} p_q \pi(q)& \\
\text{subject to} & \sum_{e\in P} x_e + p_q \ge 1 & \forall q\in Q, \in \P_q\\
& x_e \ge 0 &\forall e\in E
\end{array}
\end{equation}

The following construction is a folklore construction of a prize-collecting approximation algorithm from an approximation algorithm for the original problem. First, we solve the prize-collecting LP in Equation \ref{eq:MCD_PC_PrimalLP} to obtain a solution $\left( \{x_e\}_{e\in E},\{p_q\}_{q\in Q} \right)$. For each request $q$ such that $p_q \ge \frac{1}{2}$ the algorithm pays the penalty. The remainder of the requests are solved by calling the approximation algorithm for the original (non-prize-collecting) problem. This construction can easily be seen to lose only a constant factor (namely, 2) over the approximation ratio of the original approximation algorithm. 

For the case of multicut, first observe that this construction is indeed implementable -- that is, the prize-collecting LP can be solved in polynomial time by using a classic separation oracle based on min-cut queries for each request. Thus, the resulting approximation guarantee for the construction is $O(\log n)$. Plugging the resulting algorithm into the framework of Section \ref{sec:FWY} yields the following result.

\begin{thm}
	There exists a deterministic $O(\log^2 n)$-competitive algorithm for multicut with delay which runs in polynomial time.
\end{thm}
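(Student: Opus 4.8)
The plan is to instantiate the delay framework of Section~\ref{sec:FWY} (Theorem~\ref{thm:FWY_Competitiveness}) with a deterministic, polynomial-time $O(\log n)$-approximation for prize-collecting multicut, and then bound $\log|\E|$ by $O(\log n)$.

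First I would verify that the folklore prize-collecting construction sketched above is genuinely realizable in polynomial time for multicut. The prize-collecting LP~\eqref{eq:MCD_PC_PrimalLP} has exponentially many constraints (one per request $q$ and path $P\in\P_q$), but it admits a polynomial-time separation oracle: given a candidate $(\{x_e\},\{p_q\})$, assign length $x_e$ to each edge and, for each request $q=\{u_1,u_2\}$, compute a minimum $u_1$--$u_2$ cut under these capacities; the constraints for $q$ are all satisfied iff this cut value is at least $1-p_q$, and a violated path inequality is recovered otherwise. Hence the LP is solvable (via the ellipsoid method, or a compact flow-based reformulation). I would then apply the filtering step: serve every request with $p_q\ge\tfrac12$ by paying its penalty, incurring total penalty cost at most $2\sum_q p_q\pi(q)\le 2\cdot\mathrm{OPT}_{LP}$; for the remaining requests the vector $\{2x_e\}$ is feasible for the non-prize-collecting multicut LP~\eqref{eq:MCD_PrimalLP} with cost at most $2\cdot\mathrm{OPT}_{LP}$, so feeding those requests to the region-growing / LP-rounding algorithm of Garg \textit{et al.}~\cite{Garg1993} (Theorem~\ref{thm:APP_MCD_GargEtAl}), which rounds against the fractional optimum, yields an integral multicut of cost $O(\log n)\cdot\mathrm{OPT}_{LP}$. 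Summing the two contributions gives a deterministic polynomial-time $O(\log n)$-approximation for prize-collecting multicut, i.e. $\gamma=O(\log n)$.

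Next I would note that multicut with delay is precisely the $\OP$-with-delay problem whose offline prize-collecting variant is prize-collecting multicut: the element set $\E$ is the edge set $E$, and for a request $q=\{u_1,u_2\}$ the satisfying family $X_q$ consists of the edge sets $E'$ for which $u_1$ and $u_2$ lie in distinct components of $(V,E\setminus E')$ — exactly as recorded earlier in this section. Therefore Theorem~\ref{thm:FWY_Competitiveness} applies with the above $\gamma$, producing a deterministic $O(\gamma\log|\E|)$-competitive polynomial-time algorithm for multicut with delay. Since $|\E|=|E|\le\binom{n}{2}<n^2$, we have $\log|\E|\le 2\log n$, so $O(\gamma\log|\E|)=O(\log n\cdot\log n)=O(\log^2 n)$, which is the claimed guarantee.

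I do not anticipate a real obstacle; the only points that need care are the polynomial-time solvability of the prize-collecting LP via the min-cut separation oracle, and the fact that the guarantee of Garg \textit{et al.}~is stated relative to the LP optimum (so that it composes cleanly with the filtering step that discards the half-paid requests). Both are standard. An alternative, if one prefers to avoid the LP route, would be to cite a direct combinatorial approximation for prize-collecting multicut, but the LP-based construction is self-contained given the results already quoted.
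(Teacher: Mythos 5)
Your proposal is correct and follows essentially the same route as the paper: both solve the prize-collecting LP \eqref{eq:MCD_PC_PrimalLP} (via a min-cut separation oracle), threshold the penalty variables at $\tfrac{1}{2}$, hand the surviving requests to the LP-rounding algorithm of Garg \emph{et al.}, and plug the resulting deterministic $O(\log n)$-approximation for prize-collecting multicut into Theorem~\ref{thm:FWY_Competitiveness} with $\log|\E|\le 2\log n$. You make explicit the point the paper leaves implicit---that the filtering loses only a constant factor \emph{because} the Garg \emph{et al.} guarantee is measured against the fractional LP optimum---but that is a refinement of exposition, not a different argument.
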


\paragraph{Node-Weighted Steiner Forest.} The following result is due to Bateni \textit{et al.}~\cite{DBLP:journals/siamcomp/BateniHL18}.

\begin{pthm}[\cite{DBLP:journals/siamcomp/BateniHL18}]
	There exists a polynomial time, deterministic $O(\log n)$-approximation for node-weighted prize-collecting Steiner forest.
\end{pthm}

Plugging the algorithm of the previous theorem into the framework of Section \ref{thm:FWY_Competitiveness} yields the following result.

\begin{thm}
	There exists an $O(\log^2 n)$-competitive deterministic algorithm for EW Steiner forest with delay which runs in polynomial time.
\end{thm}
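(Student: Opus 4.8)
The plan is to apply the delay framework of Theorem~\ref{thm:FWY_Competitiveness} verbatim, taking the encapsulated prize-collecting approximation to be the $O(\log n)$-approximation of Bateni \textit{et al.}~\cite{DBLP:journals/siamcomp/BateniHL18} for node-weighted prize-collecting Steiner forest stated just above. (The statement is meant to read ``node-weighted Steiner forest with delay'': this is the problem introduced in the preceding paragraph and the one to which the cited approximation applies.)

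First I would check that node-weighted Steiner forest with delay is an instance of the abstract problem $\OP$ with delay of Section~\ref{sec:FWY}: the universe $\E$ of elements is the vertex set $V$ with costs $c:V\to\mathbb{R}^+$, a request $q=(u_1,u_2)$ is satisfied by $V'\subseteq V$ precisely when $u_1,u_2$ lie in one connected component of the subgraph induced by $V'$, and the families $X_q$ are upward-closed. By definition, the prize-collecting variant $\PCOP$ of this problem is exactly node-weighted prize-collecting Steiner forest: given terminal-pair requests $Q$ and a penalty function $\pi:Q\to\mathbb{R}^+$, one buys a set of vertices and pays $\pi(q)$ for each request left unsatisfied. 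The cited result provides a deterministic, polynomial-time $\gamma$-approximation for this problem with $\gamma=O(\log n)$ and arbitrary nonnegative penalties, which is exactly what the framework needs, since it only ever invokes $\PCOP(Q,\pi)$ and $\PCOP_{E_0\gets 0}(Q,\pi)$ (the same algorithm run with some vertex costs zeroed out), applied to the real-valued penalties $\pi_{t\to t'}$ it produces internally.

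Then I would invoke Theorem~\ref{thm:FWY_Competitiveness} with $\gamma=O(\log n)$: it yields a deterministic, polynomial-time algorithm for node-weighted Steiner forest with delay that is $O(\gamma\log|\E|)$-competitive. Since the elements are the vertices, $|\E|=|V|=n$, so $\log|\E|=\log n$ and the competitive ratio is $O(\log n\cdot\log n)=O(\log^2 n)$. Polynomial running time follows from the running-time clause of Theorem~\ref{thm:FWY_Competitiveness} ($O(k^3)$ calls to $\PCOP$ plus $O(k|\E|)$ further work, with $k=|Q|$) together with the polynomial running time of the encapsulated approximation.

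There is no real obstacle: the argument is a direct plug-in, exactly paralleling the deadline-model application of Theorem~\ref{thm:APP_NWSFD_BateniEtAl} to node-weighted Steiner forest with deadlines. The only points worth a sentence are the identification of $\PCOP$ with node-weighted prize-collecting Steiner forest, and the observation that the cited approximation tolerates the real-valued penalties that the framework feeds it.
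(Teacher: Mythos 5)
Your proposal is correct and follows the paper's own reasoning exactly: you identify the theorem statement as carrying the typo ``EW'' where it should read ``node-weighted,'' plug the cited deterministic polynomial-time $O(\log n)$-approximation for node-weighted prize-collecting Steiner forest into Theorem~\ref{thm:FWY_Competitiveness}, and use $|\E|=|V|=n$ to conclude $O(\gamma\log|\E|)=O(\log^2 n)$. The supporting observations (upward-closure of $X_q$, applicability of the approximation to the framework's $\PCOP_{E_0\gets 0}$ calls and to its real-valued penalties, and the running-time clause) are all accurate and match what the paper leaves implicit in this black-box plug-in.
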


\paragraph{Edge-Weighted Steiner Network.} The following result is due to Hajiaghayi and Nasri~\cite{DBLP:conf/latin/HajiaghayiN10}.

\begin{pthm}[\cite{DBLP:conf/latin/HajiaghayiN10}]
	There exists a polynomial-time, deterministic $3$-approximation for EW prize-collecting Steiner network.
\end{pthm}

Plugging the algorithm of the previous theorem into the framework of Section \ref{thm:FWY_Competitiveness} yields the following result.

\begin{thm}
	There exists an $O(\log n)$-competitive deterministic algorithm for EW Steiner network with delay which runs in polynomial time.
\end{thm}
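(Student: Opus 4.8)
The plan is to derive this theorem as a direct instantiation of the delay framework of Section \ref{sec:FWY}. First I would recall, as already noted in Subsection on edge-weighted Steiner network, that edge-weighted Steiner network with delay fits the abstract $\OP$-with-delay template: the element set is $\E = E$, the cost function is the given edge-cost $c$, and for a request $q$ on terminals $u_1,u_2$ with demand $f(q)$, the upwards-closed family $X_q$ is the collection of edge sets $E'\subseteq E$ admitting $f(q)$ edge-disjoint $u_1$–$u_2$ paths. Upward-closedness is clear since adding edges can only increase the max number of edge-disjoint paths. The induced prize-collecting offline problem $\PCOP$ is then exactly the edge-weighted prize-collecting Steiner network problem: given requests with demands and per-request penalties $\pi$, buy a minimum-cost edge set and pay the penalty for each request whose demand is not met by the purchased edges.

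Next I would invoke the cited result (Hajiaghayi and Nasri~\cite{DBLP:conf/latin/HajiaghayiN10}), which supplies a deterministic, polynomial-time $3$-approximation for this prize-collecting problem, so we may take $\gamma = 3$ in the framework. Applying Theorem \ref{thm:FWY_Competitiveness} with this encapsulated approximation yields a deterministic algorithm for edge-weighted Steiner network with delay that is $O(\gamma \log |\E|)$-competitive and runs in polynomial time.

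Finally I would discharge the $|\E|$-dependence: since the input graph is simple on $n$ nodes, $|\E| = |E| \le \binom{n}{2} < n^2$, hence $\log|\E| \le 2\log n$. Combining with $\gamma = O(1)$ gives the claimed $O(\log n)$ competitive ratio, with the algorithm deterministic and polynomial-time (the running time is polynomial by the running-time bound in the proof of Theorem \ref{thm:FWY_Competitiveness}, since each call to $\PCOP$ runs in polynomial time). There is essentially no technical obstacle here — the only point requiring a moment's care is confirming that the prize-collecting offline problem called by \ForwardTime{} (with some edge costs zeroed out, per the $E_0\gets 0$ subscript) is still handled by the cited approximation, which it is, since zeroing edge costs is just a special case of the weighted prize-collecting Steiner network instance.
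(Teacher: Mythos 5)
Your proposal matches the paper's argument exactly: identify edge-weighted Steiner network with delay as an instance of the abstract $\OP$-with-delay template, plug the deterministic polynomial-time $3$-approximation for prize-collecting Steiner network of Hajiaghayi and Nasri into the framework of Theorem \ref{thm:FWY_Competitiveness}, and use $|\E|\le n^2$ to convert the $O(\gamma\log|\E|)$ guarantee into $O(\log n)$. The extra remarks you make (upward-closedness of $X_q$, and that zeroing out costs for $E_0$ is a legal prize-collecting instance) are correct and merely make explicit what the paper leaves implicit.
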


\paragraph{Directed Steiner Tree}

The recent result of Grandoni \textit{et al.}~\cite{Grandoni:2019:OAA:3313276.3316349} for directed Steiner tree is based on an approximation algorithm to a problem called Group Steiner Tree on Trees with Dependency Constraint (GSTTD), which they show is equivalent to directed Steiner forest. Their algorithm for GSTTD is an approximation with respect to the optimal solution to a rather complex LP relaxation, which involves applying Sherali-Adams strengthening to a base relaxation for GSTTD.

At the time of writing this paper, we could not find a consideration of the prize-collecting variant of directed Steiner tree. We conjecture that a construction similar to shown here for Steiner forest would also apply for directed Steiner tree, yielding a prize-collecting algorithm with only a constant loss in approximation over the original algorithm of~\cite{Grandoni:2019:OAA:3313276.3316349}. 

While proving the existence of such a component is beyond the scope of this paper, we nonetheless state the resulting guarantee for directed Steiner tree with delay assuming that the component exists.

\begin{thm}
	If there exists a $\gamma$-approximation for prize-collecting directed Steiner tree which runs in quasi-polynomial time, then there exists an $O(\gamma \log n)$-competitive algorithm for directed Steiner tree with delay which also runs in quasi-polynomial time.
\end{thm}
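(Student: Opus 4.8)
The plan is to invoke the delay framework of Section~\ref{sec:FWY} essentially verbatim, taking the encapsulated prize-collecting approximation to be the hypothesized $\gamma$-approximation for prize-collecting directed Steiner tree. First I would record that directed Steiner tree with delay is an instance of $\OP$ with delay in exactly the form already set up in the excerpt: the element set $\E$ is the edge set $E$ of the directed input graph, and for a request $q$ on a terminal $v$ the satisfying family $X_q$ is the collection of edge sets that contain a directed path from $\rho$ to $v$. This family is upwards-closed, since adding edges can only create additional paths, so the abstraction of Section~\ref{sec:FWD} applies. Correspondingly, the prize-collecting problem $\PCOP$ specializes to prize-collecting directed Steiner tree (a subset of edges together with penalties for the unserved terminals), which is precisely the object the hypothesized algorithm is assumed to approximate; in particular the cost-modified calls $\PCOP_{E_0\gets 0}$ used inside \ForwardTime{} are just ordinary instances of the same problem in which a subset of edges has zero cost, so the approximation applies to them as well.

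Second, I would apply Theorem~\ref{thm:FWY_Competitiveness} directly, which produces an $O(\gamma\log|\E|)$-competitive algorithm for $\OP$ with delay. Since the graph is simple, $|\E| = |E| \le n^2$, hence $\log|\E| \le 2\log n$ and the competitive ratio is $O(\gamma\log n)$, as claimed.

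Third, for the running-time statement I would appeal to the running-time analysis already carried out in the proof of Theorem~\ref{thm:FWY_Competitiveness}: the online algorithm's work is dominated by $O(k^3)$ calls to $\PCOP$, with $k=|Q|$, plus $O(k\,|\E|)$ bookkeeping for Line~\ref{line:FWY_BuyCheapEdges}. Each such call now takes quasi-polynomial time by hypothesis, and quasi-polynomial time is closed under polynomial composition, so the resulting online algorithm runs in quasi-polynomial time; this is exactly what Remark~\ref{rem:FWD_QuasiPolynomialTime} asserts in this setting.

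Since every ingredient is already established, there is no genuinely hard step here — the theorem is a plug-in application of the framework. The only point that warrants a moment's care is verifying that the prize-collecting variant of directed Steiner tree, in the precise cost-and-penalty form demanded by $\PCOP$ (including its zero-cost-subset variants), matches what the hypothesized quasi-polynomial-time approximation is assumed to solve; once this identification is made, the competitive ratio and running-time bounds follow immediately from Theorem~\ref{thm:FWY_Competitiveness} and the observation $\log|\E|\le 2\log n$.
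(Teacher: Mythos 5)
Your proposal is correct and follows exactly the approach the paper intends: directed Steiner tree with delay is cast as an instance of $\OP$ with delay (edges as elements, upward-closed satisfying families), the hypothesized prize-collecting approximation is plugged into Theorem~\ref{thm:FWY_Competitiveness}, the bound $\log|\E|\le 2\log n$ for a simple graph converts the generic guarantee to $O(\gamma\log n)$, and Remark~\ref{rem:FWD_QuasiPolynomialTime} together with the $O(k^3)$ call-count bound in the framework's running-time analysis gives quasi-polynomial time. The paper states this theorem precisely as such a plug-in corollary (noting only that the prize-collecting directed Steiner tree approximation is not known to exist), so there is nothing further to check.
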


\subsection{Facility Location}

The following result is due to Xu and Xu~\cite{DBLP:journals/jco/XuX09}.

\begin{pthm}
	\label{thm:APY_FLY_XuAndXu}[\cite{DBLP:journals/jco/XuX09}]
	There exists a polynomial-time, deterministic $1.8526$-approximation for prize-collecting facility location.
\end{pthm}

In this subsection we prove the following result.

\begin{thm}
	\label{thm:APY_FLY_Competitiveness}
	There exists a deterministic $O(\log n)$-competitive algorithm for facility location with delay.
\end{thm}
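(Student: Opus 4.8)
The plan is to instantiate the delay framework of Section~\ref{sec:FWY} with the deterministic $1.8526$-approximation for prize-collecting facility location of Theorem~\ref{thm:APY_FLY_XuAndXu}, taking the element set $\E$ to be exactly the node set $V$ (the candidate facilities), so that $\log|\E|=\log n$. As for facility location with deadlines, the problem does not literally fit the $\OP$ template: opening facilities does not by itself serve requests, and a connection cost is also paid. So, exactly as in Remark~\ref{rem:APP_FLD_SolutionNature}, a ``solution'' stored in a variable is a pair $(F,\phi)$ of an open-facility set and an assignment, $c(S)$ is the opening cost of $F$ plus the induced connection cost, and transmitting $E_0\cup S$ means opening those facilities and routing requests according to the assignments. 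With this reading, Algorithm~\ref{alg:FWY_Algorithm} and the Time Forwarding Procedure run verbatim, and it suffices to re-check Lemmas~\ref{lem:FWY_ALGUpperBound}, \ref{lem:FWY_PrimaryLowerBoundsOPT} and~\ref{lem:FWY_ChargeLowerBoundsOPT} in this setting.

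The first step is to observe that the upper bound on $\widehat{\alg}$ is unaffected: Propositions~\ref{prop:FWY_DelayBoundedByCounters}, \ref{prop:FWY_ServiceCostBoundedByLevel}, Lemma~\ref{lem:FWY_ALGUpperBound} and the running-time bound use only Invariant~\ref{inv:FWY_SubcriticalInvariant}, the definition of $E_0$, and the guarantees of the encapsulated prize-collecting algorithm (Proposition~\ref{prop:FWY_TimeForwardingGuarantee}), none of which mention $\OP$-specific structure. Lemma~\ref{lem:FWY_PrimaryLowerBoundsOPT} is likewise untouched, as it only uses that $I_q$ lower-bounds the cost of any solution serving $q$ (still true) together with Invariant~\ref{inv:FWY_SubcriticalInvariant}. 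Finally, the ``splitting'' step in the proof of Lemma~\ref{lem:FWY_ChargeLowerBoundsOPT}, namely $\PCOP^*_{E_0^\lambda\gets 0}(Q_\lambda,\pi_{t_\lambda\to\tau_\lambda})\le\sum_{\lambda^*}\PCOP^*_{E_0^\lambda\gets 0}(Q_{\lambda\cap\lambda^*},\pi_{t_\lambda\to\tau_\lambda})$, still holds: the optimal services partition the requests, so a solution for $Q_\lambda$ is obtained by taking the union of the facility sets used for the $Q_{\lambda\cap\lambda^*}$ and keeping each request's assignment, and for facility location the opening cost of a union is at most the sum of opening costs while connection costs and penalties are charged per request.

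Hence the only part that must be redone is Proposition~\ref{prop:FWY_LocalCharge}, and there the modification is the analogue of the facility-location change made to Proposition~\ref{prop:FWD_LocalCharge} in the deadline case. Case~1 ($\tau_\lambda\le t^*$, where $t^*=\max_{q\in Q_{\lambda^*}}r_q$) is unchanged: it only uses that ``open no facility, pay every penalty'' is a feasible prize-collecting solution (so $\PCOP^*_{E_0^\lambda\gets 0}(Q_{\lambda\cap\lambda^*},\pi)\le\sum_q\pi(q)$) and the investment-counter induction, neither of which involves connection cost. In Case~2, subcases~1 and~2 survive verbatim (at most one service per level by Lemma~\ref{lem:FWY_UniqueClass}; in subcase~2 each service contributes at most $\OP^*(Q_{\lambda^*})$, which is at most the transmission cost of $\lambda^*$ and hence at most $c(\lambda^*)$). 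Only subcase~3, $\rank_\lambda\ge\ell+\lceil\log|\E|\rceil+1$ with $\ell=\lfloor\log c(\lambda^*)\rfloor$, changes: every facility of $\lambda^*$ costs at most $c(\lambda^*)\le 2^{\ell+1}\le 2^{\rank_\lambda}/|\E|$ and therefore lies in $E_0^\lambda$ and is free for $\lambda$, so opening the facilities of $\lambda^*$ and routing each request of $Q_{\lambda\cap\lambda^*}$ to its $\lambda^*$-facility is a zero-penalty solution of $\PCOP_{E_0^\lambda\gets 0}(Q_{\lambda\cap\lambda^*},\pi_{t_\lambda\to\tau_\lambda})$; thus its optimum is at most the $\lambda^*$-connection cost of $Q_{\lambda\cap\lambda^*}$, and summing over the subcase-3 services bounds the total by the connection cost incurred by $\lambda^*$, hence by $c(\lambda^*)$. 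Combining the four cases gives the $O(\log|\E|)=O(\log n)$ bound of Proposition~\ref{prop:FWY_LocalCharge}, which with Lemmas~\ref{lem:FWY_ALGUpperBound} and~\ref{lem:FWY_PrimaryLowerBoundsOPT} yields Theorem~\ref{thm:APY_FLY_Competitiveness}.

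I expect the main obstacle to be precisely that last summation. In the deadline framework $Q_\lambda$ is the set of requests \emph{served} by $\lambda$, so the sets $Q_{\lambda\cap\lambda^*}$ are automatically pairwise disjoint and the connection costs add up to at most the connection cost of $\lambda^*$; in the delay framework $Q_\lambda$ is the larger set of requests \emph{eligible} for $\lambda$, so a single request of $Q_{\lambda^*}$ may be eligible for several high-level charged services and one must control how many times its connection cost is counted. This should be handled either by charging a request's connection cost only to the first such service and bounding the remaining services' contributions by penalties (which Case~1's telescoping already controls), or by noting that once a high-level service upgrades a request, Lemma~\ref{lem:FWY_UniqueClass} together with Observation~\ref{obs:FWY_NoResidualDelayUntilForwardTime} constrains the later eligible services enough to keep the overcount logarithmic; in any case the total stays $O(\log|\E|)\cdot c(\lambda^*)$, which is all that Proposition~\ref{prop:FWY_LocalCharge} requires, since subcase~2 already contributes up to $\lceil\log|\E|\rceil\cdot c(\lambda^*)$.
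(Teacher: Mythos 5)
There is a genuine gap, and to your credit you flag its location yourself in the last paragraph. You port the delay framework to facility location \emph{unchanged}, analogously to the deadline case, and argue that only Proposition~\ref{prop:FWY_LocalCharge} needs to be reproved. But the paper is explicit that for delay this is not enough: here a \emph{modification to the framework itself} is required (Snippet~\ref{alg:APY_FLY_Algorithm}), which replaces the \textbf{foreach} loop so that whenever $h_q+\pi_{t\to\tau}(q)$ would exceed the cost of connecting $q$ to a currently open facility, the service connects $q$ instead of raising its counter; thus an investment counter never surpasses $\min_{u\in F}\delta(u,q)$. This cap is exactly what makes the subcase-3 summation close: over the services $\lambda\in\Lambda^\gg$ in which $q$ is eligible but is assigned a penalty, the penalty increments telescope to the net rise in $h_q$, which the cap bounds by $\min_{v\in E^*}\delta(v,q)$, i.e.\ $q$'s connection cost under $\lambda^*$; the single remaining connection term contributes at most the same, giving $2\,c^{\conn}(\lambda^*)$.

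Neither of the two patches you sketch repairs the argument with the \emph{unmodified} algorithm. (a) ``Charge the connection cost once and bound the remaining services by penalties, already controlled by Case~1's telescoping'' does not work: the Case~1 telescoping crucially uses $\tau_\lambda\le t^*$, so the intermediate values $d_q(\tau_\lambda)$ are bounded by $d_q(t^*)$; in subcase~3 one has $\tau_\lambda>t^*$, and without the cap $h_q$ can be raised far beyond $\min_{v\in E^*}\delta(v,q)$, so $\sum_\lambda\pi_{t_\lambda\to\tau_\lambda}(q)$ has no bound in terms of $c(\lambda^*)$. One could instead bound each subcase-3 term by $\min\{c^{\conn}_q(\lambda^*),\pi_{t_\lambda\to\tau_\lambda}(q)\}$, but once $\pi$ exceeds the connection cost this gives $c^{\conn}_q(\lambda^*)$ per service, and the number of such services is not bounded. (b) ``Lemma~\ref{lem:FWY_UniqueClass} + Observation~\ref{obs:FWY_NoResidualDelayUntilForwardTime} keep the overcount logarithmic'' also fails: Lemma~\ref{lem:FWY_UniqueClass} says at most one charged service per level, but the levels in subcase~3 range over all $j\ge\ell+\lceil\log|\E|\rceil+1$ with no a priori upper bound, and a single request $q$ can be eligible for and pay penalties in an unbounded number of them as its level keeps being upgraded. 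So the overcount is not logarithmic. The paper avoids this not by a counting argument but by changing the algorithm so the penalties are never allowed to accumulate past the connection cost. The rest of your outline --- which parts of the analysis carry over verbatim, and the intent of the subcase-3 solution that opens the free facilities of $E^*$ --- matches the paper.
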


As previously observed in the deadline case, the facility location problem does not conform to the $\OP$ structure, and thus the framework cannot be applied to facility location in a black-box fashion and still obtain $O(\log n)$ loss. In the deadline case, we showed that the framework of Section \ref{sec:FWD} could still be directly applied to facility location; the only necessary modification was in the analysis -- namely, the proof of Lemma \ref{lem:FWD_ChargeLowerBoundsOPT}.

In facility location with delay, however, this is not the case -- a minor modification to the framework itself is required. The modification is simply to ensure that during any ongoing service, the investment counter of a pending request never surpasses the cost of connecting that request to an open facility. 

The modification consists of replacing the \textbf{foreach} loop of Line \ref{line:FWY_MakeInvestment} with the modification in Snippet \ref{alg:APY_FLY_Algorithm}.

\begin{algorithm}
	\renewcommand{\algorithmcfname}{Snippet}
	\caption{\label{alg:APY_FLY_Algorithm} Facility Location Modification}
	
	Let $F$ be the set of facilities opened in $S$.
	
	\ForEach{$q\in Q\backslash Q'_\lambda$}{
		\eIf{$h_q + \pi_{t''}(q) \ge \min_{u\in F}\delta(u,q)$\label{Line:APY_FLY_ServiceIfCondition}}{ 
			Set $h_q = \max(h_q, \min_{u\in F}\delta(u,q))$
			
			Set $Q'_\lambda \gets Q'_\lambda \cup \{q\}$
			
			Modify $S$ to also serve $q$ by connecting $q$ to $\arg \min_{u\in F} \delta(u,q)$.
		}{
			Set $h_q \gets h_q +\pi_{t''}(q)$.
			
			Set $\rank_{q} \gets \rank_\lambda$. 
		}
		
	}
	
\end{algorithm}

As was the case in facility location with deadlines, Remark \ref{rem:APP_FLD_SolutionNature} applies to the nature of solutions in the facility location with delay algorithm.

\subsubsection*{Analysis}

We show that the application of the framework in Section \ref{sec:FWD}, with the modification of Snippet \ref{alg:APY_FLY_Algorithm}, to the approximation algorithm of Theorem \ref{thm:APY_FLY_XuAndXu} proves Theorem \ref{thm:APY_FLY_Competitiveness}. As in the deadline case, we would like to reprove Lemmas \ref{lem:FWY_ALGUpperBound}, \ref{lem:FWY_PrimaryLowerBoundsOPT} and \ref{lem:FWY_ChargeLowerBoundsOPT} for facility location with delay, which would prove the theorem. 

For Lemma \ref{lem:FWY_ALGUpperBound}, consider that the cost of serving additional requests in the snippet is bounded by the investment counters of those requests -- thus, losing a factor of $2$, we ignore this additional cost. The remaining argument is identical to the original proof of Lemma \ref{lem:FWY_ALGUpperBound}.

Lemma \ref{lem:FWY_PrimaryLowerBoundsOPT} goes through without modification. It remains to prove Lemma \ref{lem:FWY_ChargeLowerBoundsOPT} for our case. As in the deadline case, the only part of the proof which needs to be modified is the local-charging proposition, which is Proposition \ref{prop:FWY_LocalCharge}.

\begin{proof}[Proof of Proposition \ref{prop:FWY_LocalCharge} for facility location]
	We use the notation defined in the original proof of Proposition \ref{prop:FWY_LocalCharge}. The proof breaks down in the third subcase of case 2 -- that is, the case of a service $\lambda$ which forwarded past time $t^*$, such that $\rank_\lambda \ge  \ell + \lceil \log |\E| \rceil +1$. Let $\Lambda^\gg$ be the collection of services in this subcase. We claim that 
	\[\sum_{\lambda\in \Lambda^\gg} \PCOP^*_{E_0^\lambda \gets 0}(Q_{\lambda \cap \lambda^*},\pi_{\lambda,\tau_\lambda}) \le 2\cdot c^\conn(\lambda^*) \label{eq:APY_FLY_SubcaseProof}   \] 
	
	where $c^\conn(\lambda^*) \le c(\lambda^*)$ is the connection cost incurred by the optimal solution in $\lambda^*$. To show this, for every $\lambda \in \Lambda^\gg$ we define the following solution $\S$ for $\PCOP_{E_0^\lambda \gets 0}(Q_{\lambda \cap \lambda^*},\pi_{\lambda,\tau_\lambda})$:
	\begin{enumerate}
		\item Open facilities at all nodes in $E_0^\lambda$, at cost $0$.
		
		\item For every request $q \in Q_{\lambda \cap \lambda^*}$:
		\begin{enumerate}
			\item If $\lambda$ is the last service in $\Lambda^\gg$ for which $q$ is eligible, connect $q$ to the closest facility in $E_0^\lambda$.
			
			\item Otherwise, pay the penalty $\pi_{\lambda, \tau_\lambda}(q)$.
		\end{enumerate}
	\end{enumerate}

	This solution has no opening cost, only connection and penalty costs. We now count the costs of those solutions by each request separately, attributing to a request $q\in Q_{\lambda^*}$ the connection and penalty cost incurred for it by the solutions. 
	
	Fix a request $q\in Q_{\lambda^*}$, and denote by $\lambda_1, \cdots, \lambda_l \in \Lambda^\gg$ the services for which $q$ was eligible, ordered by time of occurrence. For every $i \in [l]$, denote by $\S_i$ the solution corresponding to $\lambda_i$. Denote by $E^*$ the set of facilities opened in $\lambda^*$ and observe that, as in the original proof, for every $\lambda_i$ for $i\in [l]$ we have that $E^*\subseteq E_0^{\lambda_i}$. Thus, the total cost due to $q$ is:
	
	\textbf{penalty}: penalty cost $\pi_{\lambda_i,\tau_{\lambda_i}}$ is paid in $\S_i$ for $i$ such that $\lambda_i$ does not serve $q$. The services $\lambda_i$ in which the solution pays the penalty for $q$ do not serve $q$; observe that in such services $h_q$ increases by $\pi_{\lambda_i,\tau_{\lambda_i}}$. After each such $\lambda_i$, we also have that $h_q \le \min_{v\in E_0^{\lambda_i}} \delta(v,q)$ -- otherwise, the \textbf{if} condition in Line \ref{Line:APY_FLY_ServiceIfCondition} in the snippet would force $q$ to be served, in contradiction. In particular, $h_q \le \min_{v\in E^*} \delta(v,q)$ after each such $\lambda_i$. This implies that the sum of penalty costs for $q$ is at most $ \min_{v\in E^*} \delta(v,q)$, which is the connection cost of $q$ in $\lambda^*$.
	
	\textbf{connection:} There exists at most one index $i\in [l]$ such that $\S_i$ connects $q$. Using again the fact that $E^*\subseteq E_0^{\lambda_i}$, the connection cost of request $q$ in $\S_i$ is at most the connection cost of $q$ in $\lambda^*$.
	
	We completes the proof of Equation \ref{eq:APY_FLY_SubcaseProof}. Thus, we have that the contribution from services $\lambda \in \Lambda^\gg$ to the left-hand side of Equation \ref{eq:FWY_ChargePerOptimalService} is at most $2\cdot c(\lambda^*)$, completing the proof of the proposition.
\end{proof}

\subsection{Exponential-Time Algorithms}

As in the deadline case, one can use the framework of Section \ref{sec:FWY} to obtain the following information-theoretic upper bound on competitiveness.

\begin{thm}
	There exists an $O(\log |\E|)$-competitive algorithm for $\OP$ with delay (with no guarantees on running time). In particular, there exists an $O(\log n)$-competitive algorithm for all problems considered in this paper, where $n$ is the number of nodes in the input graph.
\end{thm}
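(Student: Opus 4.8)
The plan is to instantiate the delay framework of Section~\ref{sec:FWY} with the exact (brute-force) solver for the offline prize-collecting problem $\PCOP$, which has approximation ratio $\gamma = 1$. Such a solver exists with no restriction on running time: one enumerates every subset $E \subseteq \E$ together with every subset $Q' \subseteq Q$ of requests served by $E$, and returns the pair minimizing $\sum_{e\in E} c(e) + \sum_{q\in Q\setminus Q'} \pi(q)$ (for facility location one additionally enumerates the optimal assignment of served requests to opened facilities); the same enumeration handles the $E_0 \gets 0$ variants. The key observation is that the competitiveness analysis underlying Theorem~\ref{thm:FWY_Competitiveness} --- namely Lemmas~\ref{lem:FWY_ALGUpperBound}, \ref{lem:FWY_PrimaryLowerBoundsOPT} and \ref{lem:FWY_ChargeLowerBoundsOPT}, together with Propositions~\ref{prop:FWY_TimeForwardingGuarantee}, \ref{prop:FWY_ServiceCostBoundedByLevel} and \ref{prop:FWY_LocalCharge} --- never uses the running time of the encapsulated algorithm, only its approximation guarantee $\gamma$. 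Hence feeding $\gamma = 1$ into the framework yields an $O(\log|\E|)$-competitive algorithm for $\OP$ with delay; only the running-time conclusion of Theorem~\ref{thm:FWY_Competitiveness} is dropped.

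For the ``in particular'' claim, I would observe that in every problem considered in this paper the element set $\E$ consists either of the edges or of the nodes of the input graph on $n$ vertices; in both cases $|\E| \le \binom{n}{2} < n^2$, so $\log|\E| < 2\log n = O(\log n)$. Plugging the exact solver into the framework therefore gives an $O(\log n)$-competitive algorithm for each of edge-weighted Steiner forest (hence Steiner tree), strong Steiner forest, multicut (and strong multicut), node-weighted Steiner forest, edge-weighted Steiner network, and directed Steiner tree with delay. For facility location, which does not fit the $\OP$ template, I would instead invoke the modified framework incorporating Snippet~\ref{alg:APY_FLY_Algorithm} with the exact prize-collecting facility-location solver ($\gamma = 1$); the modified analysis --- the reproven Proposition~\ref{prop:FWY_LocalCharge} for facility location --- again depends only on $\gamma$, so it too yields an $O(\log n)$-competitive algorithm.

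I expect essentially no obstacle here: the statement is a direct corollary of Theorem~\ref{thm:FWY_Competitiveness} once one notes that its competitiveness proof is agnostic to running time. The only point worth a line of justification is that an exact offline solver for $\PCOP$ (including its $E_0 \gets 0$ variants) genuinely exists for each considered problem, which follows from the finite enumeration described above.
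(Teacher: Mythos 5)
Your proposal is correct and matches the paper's (unwritten) argument exactly: the paper simply asserts this as an immediate consequence of plugging the exact offline solver into the delay framework, noting as you do that the competitiveness analysis of Theorem~\ref{thm:FWY_Competitiveness} depends only on $\gamma$ and not on running time. Your additional care in noting $|\E| \le n^2$ for every considered problem and in handling facility location through the modified Snippet~\ref{alg:APY_FLY_Algorithm} analysis is exactly the right level of justification.
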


\section{Request-Based Regime}
\label{sec:RBF}
In problems with deadlines or with delay, the usual regime is that the number of requests is unbounded, and potentially much larger than the size of the underlying universe (e.g. the number of nodes in the graph). This is the regime we addressed in this paper thus far. However, for offline network design, the opposite regime is used -- i.e. that the universe is large, and the number of requests is much smaller. For such a regime, it is preferable to give guarantees in the number of requests $k$. In this section, we obtain the best of both worlds, namely a guarantee in the minimum between the number of requests and the size of the universe. The following theorem states the result of this section.

\begin{thm}
	\label{thm:RBF_Competitiveness}
	If there exists a $\gamma$ deterministic (randomized) approximation algorithm for $\OP$, then there exists an $O(\gamma\log (\min\{k,|\E|\}))$-competitive deterministic (randomized) algorithm for $\OP$ with deadlines, which runs in polynomial time.
\end{thm}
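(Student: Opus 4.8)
The plan is to reuse the deadline framework of Section \ref{sec:FWD} essentially verbatim, changing only the set of ``cheap'' elements bought at the start of each service, and then re-running the single place in the analysis where the factor $\log|\E|$ is born. Recall that in Algorithm \ref{alg:FWD_Algorithm} a service $\lambda$ of level $j=\rank_\lambda$ buys $E_0^\lambda=\{e\in\E : c(e)\le 2^{j}/|\E|\}$, and that $\log|\E|$ enters only through Proposition \ref{prop:FWD_LocalCharge}: charged services of ``medium'' level $\ell<\rank_\lambda<\ell+\lceil\log|\E|\rceil+1$ each charge up to $c(\lambda^*)$ (there are $\Theta(\log|\E|)$ of them), while ``high'' level services charge nothing because $E_0^\lambda$ already contains every element of $\lambda^*$. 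The idea is to enlarge $E_0^\lambda$ so that the ``high'' regime also begins at level $\ell+O(\log k)$, producing the $\min$.

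Concretely, I would redefine, for a service $\lambda$ of level $j=\rank_\lambda$,
\[ E_0^\lambda \;=\; \Bigl\{e\in\E : c(e)\le \tfrac{2^{j}}{|\E|}\Bigr\}\;\cup\;\bigcup_{\substack{q\text{ pending at }t_\lambda\\ I_q \le 2^{j}/k}} S_q, \]
where $S_q=\OP(\{q\})$ and $I_q=c(S_q)/\gamma$ are exactly the objects already computed in $\UponRequest$, and $k$ is the number of requests (the number released so far suffices, removing any need to know $k$ in advance). Since at most $k$ requests are ever pending and each contributed set costs $c(S_q)=\gamma I_q\le\gamma\cdot 2^{j}/k$, the second union costs at most $\gamma\cdot 2^{j}$, so $c(E_0^\lambda)=O(\gamma\cdot 2^{j})$ and Proposition \ref{prop:FWD_ServiceCostBoundedByLevel} — hence Lemma \ref{lem:FWD_ALGUpperBound} — still holds with the extra constant absorbed into the $O(\gamma)$. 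Nothing else in the algorithm changes, so Lemma \ref{lem:FWD_PrimaryLowerBoundsOPT}, Propositions \ref{prop:FWD_UniqueCharge} and \ref{prop:FWD_OnlyFullServicesCharged}, Lemma \ref{lem:FWD_UniqueClass}, and the reduction of Lemma \ref{lem:FWD_ChargeLowerBoundsOPT} to Proposition \ref{prop:FWD_LocalCharge} all go through unchanged, since they never inspect $E_0$.

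The substance is re-running the case analysis in Proposition \ref{prop:FWD_LocalCharge}. Fix an optimal service $\lambda^*$, set $\ell=\lfloor\log c(\lambda^*)\rfloor$, and take $\lambda\in\Lambda'$ serving a request of $Q_{\lambda^*}$. For every $q\in Q_{\lambda\cap\lambda^*}$, the element set of $\lambda^*$ is — by upward closure of $X_q$ — a feasible single-request solution for $q$, so $I_q\le\OP^*(\{q\})\le c(\lambda^*)<2^{\ell+1}$. Thus whenever $\rank_\lambda\ge\ell+\lceil\log k\rceil+1$ we get $I_q\le 2^{\ell+1}\le 2^{\rank_\lambda}/k$ for all $q\in Q_{\lambda\cap\lambda^*}$, hence $\bigcup_{q\in Q_{\lambda\cap\lambda^*}}S_q\subseteq E_0^\lambda$; this union is a feasible solution for $Q_{\lambda\cap\lambda^*}$ consisting entirely of free elements, so $\OP^*_{E_0^\lambda\gets 0}(Q_{\lambda\cap\lambda^*})=0$. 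Combining with the original cheap-element argument (which still kills levels $\ge\ell+\lceil\log|\E|\rceil+1$), the ``high'' regime begins at level $\ell+\lceil\log\min\{k,|\E|\}\rceil+1$; the ``low'' regime $\rank_\lambda\le\ell$ still sums geometrically to $\le 2c(\lambda^*)$; and the ``medium'' regime, by Lemma \ref{lem:FWD_UniqueClass}, has at most one service per level, contributing $\le c(\lambda^*)$ on each of $\lceil\log\min\{k,|\E|\}\rceil$ levels. Hence Proposition \ref{prop:FWD_LocalCharge} holds with $\beta\log|\E|$ replaced by $\beta\log\min\{k,|\E|\}$, which upgrades Lemma \ref{lem:FWD_ChargeLowerBoundsOPT} to $\sum_{\lambda\in\Lambda^\circ}2^{\rank_\lambda}\le O(\log\min\{k,|\E|\})\cdot\opt$, and the theorem follows exactly as Theorem \ref{thm:FWD_Competitiveness} did; polynomiality is unaffected, as computing the enlarged $E_0^\lambda$ costs only an extra $O(k\cdot|\E|)$ per service.

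The one step deserving care is the inequality $I_q\le c(\lambda^*)$ for $q\in Q_{\lambda\cap\lambda^*}$: it relies on the abstract $\OP$ property that any transmission serving a request is, by itself, a feasible solution for that request — precisely the property that fails for facility location (because of connection cost), which is why this modification, like the theorem, is stated for $\OP$ only.
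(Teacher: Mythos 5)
Your proposal is correct and achieves the same guarantee, but the route is genuinely different from the paper's. The paper first proves a separate $O(\gamma\log k)$ bound (Theorem \ref{thm:RBF_IntermediaryCompetitiveness}) by \emph{replacing} the cheap-element definition of $E_0$ with a cheap-request definition (Snippet \ref{alg:RBF_E0Redefinition}), assuming $k$ is known; it then removes that assumption by running a sequence of instances with squared guesses $\hat k, \hat k^2, \dots$, and gets the $\min$ with $|\E|$ by \emph{handing off} to the original framework of Section \ref{sec:FWD} once $\hat k$ exceeds $|\E|$. You instead run a single framework whose $E_0^\lambda$ is the \emph{union} of the paper's two definitions, so both mechanisms for killing high-level services are live at once, and the ``high'' regime in Proposition \ref{prop:FWD_LocalCharge} kicks in at $\ell+\lceil\log\min\{k,|\E|\}\rceil+1$ rather than splitting into two cases handled by two different algorithms. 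Your observation that the running count $k_\lambda$ can replace the final $k$ is the step that removes the doubling: the cost bound for $E_0^\lambda$ scales with $k_\lambda$, and in the charging argument any $q\in Q_{\lambda\cap\lambda^*}$ is pending (hence already counted) at $t_\lambda$, so $I_q\le 2^{\ell+1}\le 2^{\rank_\lambda}/k_\lambda$ whenever $\rank_\lambda\ge\ell+\lceil\log k\rceil+1$. Net effect: you get the theorem with one algorithm and no doubling or handoff, at the cost of a slightly bulkier $E_0$; the paper's modular route has the advantage of isolating the $\log k$ framework as a stand-alone intermediate result. Both approaches rely on the same key fact — that $\lambda^*$'s element set witnesses $\OP^*(\{q\})\le c(\lambda^*)$ for any $q\in Q_{\lambda^*}$, which, as you correctly flag, is exactly the abstract $\OP$ property that fails for facility location.
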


\subsection{Proof of Theorem \ref{thm:RBF_Competitiveness}}

To prove Theorem \ref{thm:RBF_Competitiveness}, we first show how to modify the framework of Section \ref{sec:FWD} to be $O(\gamma \log k)$-competitive, where $\gamma$ is the approximation ratio of the encapsulated approximation algorithm. We then describe a simple way to combine this modified framework with the original framework of Section \ref{sec:FWD} to prove Theorem \ref{thm:RBF_Competitiveness}.

\subsubsection*{Modified $O(\gamma \log k)$-Competitive Framework}

We describe the needed modification to the framework of Section \ref{sec:FWD} to achieve $(\gamma \log k)$-competitiveness. For the sake of describing the framework, we assume that the number of requests $k$ is known in advance (this assumption is later relaxed using standard doubling techniques). The single modification required is in the definition of $E_0$, as defined in $\UponDeadline$. Instead of adding all cheap elements (those that cost at most $\frac{2^{\rank_\lambda}}{|\E|}$), we instead iterate over pending requests which are cheap.

Namely, the new framework is obtained by replacing Line \ref{line:FWD_BuyCheapEdges} with Snippet \ref{alg:RBF_E0Redefinition}, which defines $E_0$ in a different way.

\begin{algorithm}
	\renewcommand{\algorithmcfname}{Snippet}
	\caption{\label{alg:RBF_E0Redefinition} Facility Location Modification}
	
	\While{there exists a pending request $q$ which is not served by $E_0$, such that $c(S_q)\le \frac{\gamma\cdot 2^{\rank_\lambda}}{k}$}{

		Set $E_0 \gets E_0 \cup S_q$

	}
	
\end{algorithm}

\subsubsection*{Analysis}
The following theorem states the competitiveness of the modified framework.

\begin{thm}
	\label{thm:RBF_IntermediaryCompetitiveness}
	The framework of Section \ref{sec:FWD}, when modified with Snippet \ref{alg:RBF_E0Redefinition}, is $O(\gamma \log k)$-competitive.
\end{thm}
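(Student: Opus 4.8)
The plan is to re-run the three-part analysis behind Theorem \ref{thm:FWD_Competitiveness} — the upper bound $\alg \le O(\gamma)\bigl(\sum_{\lambda\in\Lambda_1}2^{\rank_\lambda}+\sum_{\lambda\in\Lambda^\circ}2^{\rank_\lambda}\bigr)$ of Lemma \ref{lem:FWD_ALGUpperBound}, the primary lower bound $\sum_{\lambda\in\Lambda_1}2^{\rank_\lambda}\le O(1)\cdot\opt$ of Lemma \ref{lem:FWD_PrimaryLowerBoundsOPT}, and the charging bound of Lemma \ref{lem:FWD_ChargeLowerBoundsOPT} — with the single change that $|\E|$ is replaced by $k$ in the charging bound, i.e.\ $\sum_{\lambda\in\Lambda^\circ}2^{\rank_\lambda}\le O(\log k)\cdot\opt$. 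Throughout I assume $k$ is known in advance, as stipulated; combining the modified lemmas then gives $\alg\le O(\gamma\log k)\cdot\opt$ at once.

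The upper-bound side needs almost no work. Lemma \ref{lem:FWD_PrimaryLowerBoundsOPT} never refers to $E_0$ and is reused verbatim. For Lemma \ref{lem:FWD_ALGUpperBound} it suffices to recheck Proposition \ref{prop:FWD_ServiceCostBoundedByLevel}: the only summand affected by the modification is $c(E_0^\lambda)$, and Snippet \ref{alg:RBF_E0Redefinition} builds $E_0^\lambda$ as a union of at most $k$ solutions $S_q$, each of cost at most $\frac{\gamma\cdot 2^{\rank_\lambda}}{k}$ (each iteration adds one previously-unserved pending request, so the loop terminates within $k$ iterations), whence $c(E_0^\lambda)\le\gamma\cdot 2^{\rank_\lambda}$; the bounds on $c(S)$ and $c(S_{q_{\text{last}}})$ are untouched since the request-adding loop of Line \ref{line:FWD_AddingRequestsToService} is unchanged. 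Likewise, the reduction of Lemma \ref{lem:FWD_ChargeLowerBoundsOPT} to its local form (Equation \ref{eq:FWD_GlobalChargeIsLocalCharge}, and the derivation of $2^{\rank_\lambda}\le\OP^*_{E_0^\lambda\gets 0}(Q_\lambda)$ from $\lambda$ being imperfect) makes no assumption on how $E_0^\lambda$ is populated, so it carries over unchanged.

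The substance is re-proving the local charging bound, Proposition \ref{prop:FWD_LocalCharge}, with $\lceil\log|\E|\rceil$ replaced by $\lceil\log k\rceil$: for every optimal service $\lambda^*$, $\sum_{\lambda\in\Lambda^\circ}\min\{2^{\rank_\lambda},\OP^*_{E_0^\lambda\gets 0}(Q_{\lambda\cap\lambda^*})\}\le(\beta\log k)\cdot c(\lambda^*)$. With $\ell=\lfloor\log c(\lambda^*)\rfloor$ and Lemma \ref{lem:FWD_UniqueClass} (at most one charging service per level), the first two cases go through with the substitution: $\rank_\lambda\le\ell$ still yields a geometric sum at most $2c(\lambda^*)$, and $\ell<\rank_\lambda<\ell+\lceil\log k\rceil+1$ still contributes at most $\lceil\log k\rceil\cdot c(\lambda^*)$ in total. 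The tail case $\rank_\lambda\ge\ell+\lceil\log k\rceil+1$ is where the argument changes, and I would show $\OP^*_{E_0^\lambda\gets 0}(Q_{\lambda\cap\lambda^*})=0$ as follows. Fix $q\in Q_{\lambda\cap\lambda^*}$. Since $q$ is served by the single transmission $\lambda^*$, the element set of $\lambda^*$ is feasible for $\{q\}$, so $\OP^*(\{q\})\le c(\lambda^*)\le 2^{\ell+1}$ and hence $c(S_q)\le\gamma\cdot\OP^*(\{q\})\le\gamma\cdot 2^{\ell+1}\le\frac{\gamma\cdot 2^{\rank_\lambda}}{k}$ by the choice of the level range. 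Moreover $q\in Q_{\lambda\cap\lambda^*}\subseteq Q_\lambda$, so $q$ was pending when $\lambda$ ran; therefore, when $\lambda$ executes Snippet \ref{alg:RBF_E0Redefinition}, the loop cannot terminate while $q$ is unserved by $E_0$, so the final $E_0^\lambda$ satisfies $S_q\subseteq E_0^\lambda$ and hence, by upward-closedness of $X_q$, $E_0^\lambda\in X_q$. As this holds for every $q\in Q_{\lambda\cap\lambda^*}$, the free set $E_0^\lambda$ is a zero-cost feasible solution for $Q_{\lambda\cap\lambda^*}$, giving $\OP^*_{E_0^\lambda\gets 0}(Q_{\lambda\cap\lambda^*})=0$. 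Summing the three cases proves the modified Proposition \ref{prop:FWD_LocalCharge}, hence the modified Lemma \ref{lem:FWD_ChargeLowerBoundsOPT}, hence the theorem.

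I expect the tail case above to be the only real obstacle. In the original analysis $E_0^\lambda$ literally contained every element of cost at most $\frac{2^{\rank_\lambda}}{|\E|}$, so any offline service of bounded cost was automatically free; with the request-based definition one must instead certify that each jointly-served request is individually cheap to serve, and the key point making this possible is the observation $\OP^*(\{q\})\le c(\lambda^*)$ — a single optimal transmission serving $q$ is itself a feasible singleton solution. (For facility location a residual connection-cost term would survive this argument, exactly as in the facility-location version of Proposition \ref{prop:FWD_LocalCharge}, which is why that problem is excluded from this section.) Everything else is routine bookkeeping.
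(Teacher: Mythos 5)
Your proof plan matches the paper's approach step for step: reuse Lemmas \ref{lem:FWD_ALGUpperBound} and \ref{lem:FWD_PrimaryLowerBoundsOPT} essentially unchanged (noting $c(E_0^\lambda)\le\gamma\cdot 2^{\rank_\lambda}$ via the $k$-term union bound), keep the global-to-local reduction of Lemma \ref{lem:FWD_ChargeLowerBoundsOPT}, and re-do the three-case local charging with $\lceil\log k\rceil$ in place of $\lceil\log|\E|\rceil$, where the tail case now rests on the observation $\OP^*(\{q\})\le c(\lambda^*)$ forcing $c(S_q)\le\frac{\gamma\cdot 2^{\rank_\lambda}}{k}$ so that the snippet's loop cannot halt with $q$ unserved. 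One small inaccuracy in the tail case: from ``the loop cannot terminate while $q$ is unserved'' you conclude ``$S_q\subseteq E_0^\lambda$,'' which does not follow --- $q$ may become served by the union of other $S_{q'}$ sets added before $q$ is examined, in which case $S_q$ is never inserted; the correct (and sufficient) conclusion is that $E_0^\lambda$ serves $q$, i.e.\ $E_0^\lambda\in X_q$, directly from the loop's termination condition. This does not affect the soundness of the argument, which is otherwise the same as the paper's.
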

 
The proof of Theorem \ref{thm:RBF_IntermediaryCompetitiveness} is very similar to the proof of Theorem \ref{thm:FWD_Competitiveness}. Lemma \ref{lem:FWD_ALGUpperBound} goes through in an almost identical way -- it is enough to notice that the cost of $E_0$ as defined in Snippet \ref{alg:RBF_E0Redefinition} never exceeds $\gamma \cdot 2^{\rank_\lambda}$.

Lemma \ref{lem:FWD_PrimaryLowerBoundsOPT} also goes through in an identical manner. It remains to prove the following analogue to Lemma \ref{lem:FWD_ChargeLowerBoundsOPT}.

\begin{lem}[Analogue of Lemma \ref{lem:FWD_ChargeLowerBoundsOPT}]
	\label{lem:RBF_ChargeLowerBoundsOPT}
	$\sum_{\lambda\in \Lambda^\circ} 2^{\rank_\lambda} \le O(\log k) \cdot \opt$
\end{lem}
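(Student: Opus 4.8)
The plan is to mirror the proof of Lemma \ref{lem:FWD_ChargeLowerBoundsOPT}, replacing the cheapness threshold $\frac{2^{\rank_\lambda}}{|\E|}$ by the request-based threshold $\frac{\gamma\cdot 2^{\rank_\lambda}}{k}$ from Snippet \ref{alg:RBF_E0Redefinition}. As before, it suffices to show a local-charging statement analogous to Proposition \ref{prop:FWD_LocalCharge}: for each optimal service $\lambda^*\in\Lambda^*$,
\[
\sum_{\lambda\in\Lambda^\circ}\min\{2^{\rank_\lambda},\OP^*_{E_0^\lambda\gets 0}(Q_{\lambda\cap\lambda^*})\}\le \beta\log k\cdot c(\lambda^*),
\]
together with the (unchanged) fact that $2^{\rank_\lambda}\le\sum_{\lambda^*}\min\{2^{\rank_\lambda},\OP^*_{E_0^\lambda\gets 0}(Q_{\lambda\cap\lambda^*})\}$ for every charged service $\lambda$. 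The latter is exactly the argument in the proof of Lemma \ref{lem:FWD_ChargeLowerBoundsOPT}: $\bigcup_{\lambda^*}Q_{\lambda\cap\lambda^*}=Q_\lambda$, imperfectness of a charged $\lambda$ gives $c(S')\ge\gamma\cdot 2^{\rank_\lambda}$ in Line \ref{line:FWD_ServiceIsFull}, hence $\OP^*_{E_0^\lambda\gets 0}(Q_\lambda)\ge 2^{\rank_\lambda}$ by the approximation guarantee, and subadditivity over the $\lambda^*$'s finishes it. Nothing there depends on how $E_0$ was chosen, so it transfers verbatim.

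The work is in the local-charging bound. Fix $\lambda^*$, let $\Lambda'\subseteq\Lambda^\circ$ be the charged services serving some request of $Q_{\lambda^*}$, and recall $Q_{\lambda^*}$ is intersecting, so Lemma \ref{lem:FWD_UniqueClass} gives at most one service of each level in $\Lambda'$. Set $\ell=\lfloor\log c(\lambda^*)\rfloor$. The first two cases are unchanged: for $\rank_\lambda\le\ell$ the contributions form a geometric sum bounded by $2\cdot c(\lambda^*)$; for $\ell<\rank_\lambda<\ell+\lceil\log k\rceil+1$ each service contributes at most $\OP^*(Q_{\lambda^*})\le c(\lambda^*)$, and there are at most $\lceil\log k\rceil$ such levels. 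The only genuinely new point is the high-level case $\rank_\lambda\ge\ell+\lceil\log k\rceil+1$, where previously one argued that every element of $\lambda^*$ is cheap enough to land in $E_0^\lambda$. Now the threshold is per-request rather than per-element: the claim to establish is that every request in $Q_{\lambda\cap\lambda^*}$ is served by $E_0^\lambda$ as constructed in Snippet \ref{alg:RBF_E0Redefinition}, which forces $\OP^*_{E_0^\lambda\gets 0}(Q_{\lambda\cap\lambda^*})=0$. For this, note that for any single request $q\in Q_{\lambda^*}$ we have $c(S_q)\le\gamma\cdot\OP^*(\{q\})\le\gamma\cdot c(\lambda^*)\le\gamma\cdot 2^{\ell+1}$, since $\lambda^*$ alone serves $q$; and $\rank_\lambda\ge\ell+\lceil\log k\rceil+1$ gives $\frac{\gamma\cdot 2^{\rank_\lambda}}{k}\ge\gamma\cdot 2^{\ell+1}\ge c(S_q)$, so $q$ meets the cheapness test of the \textbf{while} loop. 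Since the loop runs until no pending eligible cheap-enough request remains unserved by $E_0^\lambda$, every such $q$ ends up served by $E_0^\lambda$. Hence $E_0^\lambda$ (at zero cost) already serves all of $Q_{\lambda\cap\lambda^*}$, so the high-level contributions vanish. Summing the three cases yields the claimed $O(\log k)\cdot c(\lambda^*)$ bound.

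The main obstacle is precisely the high-level case: one must be careful that the \textbf{while} loop's termination condition really does guarantee that \emph{all} pending eligible requests with $c(S_q)\le\frac{\gamma\cdot 2^{\rank_\lambda}}{k}$ get served by $E_0^\lambda$ — the loop only adds $S_q$ for requests not yet served by the current $E_0$, so a request could in principle be served ``for free'' by the accumulation of others' solutions, but in all cases the loop exits only when no cheap unserved pending request remains, which is exactly what is needed. A minor subtlety: the requests of $Q_{\lambda\cap\lambda^*}$ are served by $\lambda$ hence pending and eligible at the time $E_0^\lambda$ is built, so they are legitimate candidates for the loop. Once this is checked, the rest is the routine three-case summation exactly as in Proposition \ref{prop:FWD_LocalCharge}, and Lemma \ref{lem:RBF_ChargeLowerBoundsOPT} follows by summing \eqref{eq:FWD_GlobalChargeIsLocalCharge}-style over all $\lambda\in\Lambda^\circ$.
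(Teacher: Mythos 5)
Your proposal is correct and follows essentially the same route as the paper: reduce to the local-charging bound (Proposition \ref{prop:FWD_LocalCharge} analogue) via the unchanged subadditivity argument, then re-run the three-case split with $\log|\E|$ replaced by $\log k$, and for the high-level case argue $c(S_q)\le\gamma\cdot c(\lambda^*)\le\gamma\cdot 2^{\ell+1}\le\frac{\gamma\cdot 2^{\rank_\lambda}}{k}$ so the while loop of Snippet \ref{alg:RBF_E0Redefinition} forces $E_0^\lambda$ to serve every $q\in Q_{\lambda\cap\lambda^*}$. Your explicit check that the loop's termination condition (rather than which $S_q$ are added) is what matters is a point the paper's write-up leaves implicit, but the substance is identical.
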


To prove Lemma \ref{lem:RBF_ChargeLowerBoundsOPT}, we only need to prove the following analogue of Proposition \ref{prop:FWD_LocalCharge}. The proof of Lemma \ref{lem:RBF_ChargeLowerBoundsOPT} from this analogue is identical to the proof of Lemma \ref{lem:FWD_ChargeLowerBoundsOPT} from Proposition \ref{prop:FWD_LocalCharge}.

\begin{prop}[Analogue of Proposition \ref{prop:FWD_LocalCharge}]
	There exists a constant $\beta$ such that for every optimal service $\lambda^* \in \Lambda^*$, we have that 
	\begin{equation}
		\label{eq:RBF_ChargePerOptimalService}
		\sum_{\lambda\in \Lambda^\circ}\min\{2^{\rank_\lambda},\OP^*_{E_0^\lambda\gets 0}(Q_{\lambda \cap \lambda^*})\} \le \beta\log k \cdot c(\lambda^*)
	\end{equation}
\end{prop}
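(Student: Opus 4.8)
The plan is to mirror the three-case analysis in the original proof of Proposition~\ref{prop:FWD_LocalCharge}, substituting $k$ for $|\E|$ everywhere, and to show that the only place this substitution matters — the ``high-level'' third case — still goes through with the new definition of $E_0$ in Snippet~\ref{alg:RBF_E0Redefinition}. As before, fix an optimal service $\lambda^* \in \Lambda^*$, let $\Lambda' \subseteq \Lambda^\circ$ be the charged services in which some request of $Q_{\lambda^*}$ is served, recall that $Q_{\lambda^*}$ is intersecting so that Lemma~\ref{lem:FWD_UniqueClass} gives at most one service of each level in $\Lambda'$, and set $\ell = \lfloor \log c(\lambda^*) \rfloor$. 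Partition $\Lambda'$ into the three bands $\rank_\lambda \le \ell$, $\ell < \rank_\lambda < \ell + \lceil \log k \rceil + 1$, and $\rank_\lambda \ge \ell + \lceil \log k \rceil + 1$.

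For the first band, each such $\lambda$ contributes at most $2^{\rank_\lambda}$, and summing a geometric series over at most one service per level gives at most $2^{\ell+1} \le 2 c(\lambda^*)$. For the second band, each $\lambda$ contributes at most $\min\{2^{\rank_\lambda}, \OP^*_{E_0^\lambda \gets 0}(Q_{\lambda \cap \lambda^*})\} \le \OP^*(Q_{\lambda^*}) \le c(\lambda^*)$, and there are at most $\lceil \log k \rceil$ levels in this band, for a total of at most $\lceil \log k \rceil \cdot c(\lambda^*)$. These two bands are verbatim the original argument with $k$ in place of $|\E|$. The work is entirely in the third band, which I treat next.

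For the third band I must show each such $\lambda$ contributes $0$, i.e. that $\OP^*_{E_0^\lambda \gets 0}(Q_{\lambda \cap \lambda^*}) = 0$. The idea is that every request in $Q_{\lambda^*}$ is individually cheap relative to the huge budget $2^{\rank_\lambda}$: since $\lambda^*$ serves all of $Q_{\lambda^*}$, it is in particular a feasible solution for each singleton $\{q\}$ with $q \in Q_{\lambda^*}$, so $c(S_q) \le \gamma \cdot \OP^*(\{q\}) \le \gamma \cdot c(\lambda^*) \le \gamma \cdot 2^{\ell+1}$. Because $\rank_\lambda \ge \ell + \lceil \log k \rceil + 1$ we have $\frac{\gamma \cdot 2^{\rank_\lambda}}{k} \ge \gamma \cdot 2^{\ell+1} \ge c(S_q)$, so $q$ meets the threshold condition in the \textbf{while} loop of Snippet~\ref{alg:RBF_E0Redefinition}. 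Hence the loop cannot terminate while any request of $Q_{\lambda^*}$ is still unserved by $E_0^\lambda$, which means $E_0^\lambda$ serves every request of $Q_{\lambda^*}$, and in particular every request of $Q_{\lambda \cap \lambda^*}$. Since elements of $E_0^\lambda$ have cost $0$ in the modified instance, the empty solution already serves $Q_{\lambda \cap \lambda^*}$ at cost $0$, giving $\OP^*_{E_0^\lambda \gets 0}(Q_{\lambda \cap \lambda^*}) = 0$ as claimed. Summing the three bands yields the bound $\beta \log k \cdot c(\lambda^*)$ for a suitable constant $\beta$.

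The main obstacle is precisely the third-band argument: in the original proof one exploited that \emph{every element} of $\lambda^*$ is cheap (cost at most $c(\lambda^*)$) and there are only $|\E|$ elements, so buying all elements below the threshold $2^{\rank_\lambda}/|\E|$ captures $\lambda^*$ entirely; here the threshold in Snippet~\ref{alg:RBF_E0Redefinition} is phrased in terms of the cost of \emph{single-request solutions} $S_q$ rather than elements, so I must instead argue request-by-request that each $q \in Q_{\lambda^*}$ is eventually absorbed into $E_0^\lambda$, using that a single service of the optimal solution serves all of $Q_{\lambda^*}$ and hence bounds each $\OP^*(\{q\})$ by $c(\lambda^*)$. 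Once this translation is made, everything else is the original proof with the cosmetic replacement of $|\E|$ by $k$, and the proof of Lemma~\ref{lem:RBF_ChargeLowerBoundsOPT} (and thence Theorem~\ref{thm:RBF_IntermediaryCompetitiveness}) follows exactly as Lemma~\ref{lem:FWD_ChargeLowerBoundsOPT} followed from Proposition~\ref{prop:FWD_LocalCharge}.
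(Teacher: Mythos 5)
Your proposal is correct and mirrors the paper's proof almost line for line: the same three-band decomposition via Lemma~\ref{lem:FWD_UniqueClass}, the same geometric sum for $\rank_\lambda \le \ell$, the same one-per-level bound for the middle band, and in the high band the same shift from per-element cheapness to per-request cheapness (bounding $c(S_q) \le \gamma\cdot 2^{\ell+1} \le \gamma\cdot 2^{\rank_\lambda}/k$ so the Snippet's \emph{while} loop must absorb every $q\in Q_{\lambda\cap\lambda^*}$ before terminating). The one cosmetic slip --- writing that the loop absorbs \emph{every} request of $Q_{\lambda^*}$ rather than every \emph{pending} one --- is immediately corrected by your restriction to $Q_{\lambda\cap\lambda^*}\subseteq Q_\lambda$, so it does not affect the argument.
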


\begin{proof}
	The proof is very similar to the proof of Proposition \ref{prop:FWD_LocalCharge}. Fix an optimal service $\lambda^* \in \Lambda^*$. Denote by $\Lambda'\subseteq \Lambda^\circ$ the subset of charged services made by the algorithm in which a request from $Q_{\lambda^*}$ is served (other services, for which $Q_{\lambda \cap \lambda^*}=\emptyset$, need not be considered). Observe that $Q_{\lambda^*}$ is an intersecting set, as the optimal solution served $Q_{\lambda^*}$ is a single point in time. Lemma \ref{lem:FWD_UniqueClass} implies that for every level $j$, there exists at most one $j$-level service in $\Lambda'$. Define $\ell = \lfloor \log (c(\lambda^*)) \rfloor$.  Now, consider the following cases for a service $\lambda\in \Lambda'$:
	\begin{enumerate}
		\item $\rank_\lambda \le \ell$. Each such $\lambda$ contributes at most $2^{\rank_\lambda}$ to the left-hand side of Equation \ref{eq:FWD_ChargePerOptimalService}. Summing over at most one service from each level yields a geometric sum which is at most $2^{\ell +1} \le 2\cdot c(\lambda^*)$.
		
		\item $\ell < \rank_\lambda < \ell + \lceil \log k  \rceil+ 1$. For such $\lambda$, observe that $\min\{2^{\rank_\lambda}, \OP^*_{E_0^\lambda \gets 0}(Q_{\lambda \cap \lambda^*})\} \le \OP^*(Q_\lambda) \le c(\lambda^*)$. Summing over at most a single service from each level, the total contribution to the left-hand side of Equation \ref{eq:FWD_ChargePerOptimalService} from these levels is at most $\lceil \log k \rceil\cdot c(\lambda^*)$.
		
		\item $\rank_\lambda \ge \ell + \lceil \log k \rceil +1$. Observe that $\min\{2^{\rank_\lambda}, \OP^*_{E_0^\lambda \gets 0}(Q_{\lambda \cap \lambda^*})\} \le \OP^*_{E_0^\lambda \gets 0}(Q_{\lambda \cap \lambda^*})$. We now claim that $\OP^*_{E_0^\lambda \gets 0}(Q_{\lambda \cap \lambda^*}) =0$, which implies that the total contribution from these levels to the left-hand side of Equation \ref{eq:RBF_ChargePerOptimalService} is $0$. 
		
		Indeed, consider that $\OP^*(\{q\}) \le c(\lambda^*)$ for every request $q\in Q_{\lambda^*}$ (since $\lambda^*$ is itself a feasible solution). If, in addition, we have that $q\in Q_\lambda$, then $q$ was pending immediately before $\lambda$. From the approximation guarantee of $\OP$, we have that $c(S_q) \le \gamma \cdot \OP^*(\{q\}) \le \gamma \cdot c(\lambda^*) \le \gamma \cdot 2^{\ell+1}$. Thus, since $2^{\rank_\lambda} \ge 2^{\ell+1}\cdot k$, Snippet \ref{alg:RBF_E0Redefinition} guarantees that $E_0^\lambda$ serves $q$. Since this holds for every $q\in Q_{\lambda \cap \lambda^*}$, we have that $\OP^*_{E_0^\lambda \gets 0}(Q_{\lambda \cap \lambda^*})=0$.
	\end{enumerate}
	Summing over the contributions from each level completes the proof.
\end{proof}

\begin{proof}[Proof of Theorem \ref{thm:RBF_IntermediaryCompetitiveness}]
	The proof of the theorem results immediately from Lemmas \ref{lem:FWD_ALGUpperBound}, \ref{lem:FWD_PrimaryLowerBoundsOPT} and \ref{lem:RBF_ChargeLowerBoundsOPT}. The analysis of the running time remains the same.
\end{proof}

\subsubsection*{Proof of Theorem \ref{thm:RBF_Competitiveness}}

First, we describe the doubling we use to relax the assumption that $k$ is known to the algorithm. We do this by guessing a value $\hat{k}$ for the number of requests -- initially a constant -- and running the framework of Theorem \ref{thm:RBF_IntermediaryCompetitiveness} for that value. When the number of requests exceeds $\hat{k}$, we send all new requests to a new instance of the algorithm (which is run in parallel to the previous instances), in which the guessed number of requests is $\hat{k}^2$. We then set $\hat{k}\gets \hat{k}^2$. 

The cost of the $i$'th instance is at most $\gamma \log \hat{k}_i \cdot \opt$, where $\hat{k}^i$ is the value of $\hat{k}$ used by the $i$'th instance. Consider that the final instance is that in which $\hat{k} \ge k$, and that for this instance we have $\hat{k}\le k^2$ and thus $\log \hat{k} \le 2\log k$. Since $\log \hat{k}$ grows by a factor of $2$ with each iteration, we have that the total cost of the algorithm is at most $4\gamma \log k\cdot \opt$, as required.

To prove Theorem \ref{thm:RBF_Competitiveness}, we modify this by stopping the doubling process earlier: when $\hat{k}$ exceeds $|\E|$, we start a new instance of the original framework of Section \ref{sec:FWD}, and send all new requests to that instance. This is easily seen to achieve the desired competitiveness bound.

\paragraph{Extension to Delay.}

The modifications seen in this section for deadlines can also be applied to the delay framework of Section \ref{sec:FWY}, achieving an identical guarantee to Theorem \ref{thm:RBF_Competitiveness}. However, as is the case in the original delay framwork, we cannot allow a pending request which is not eligible to the current service to be served by this service -- otherwise, Proposition \ref{prop:FWY_DelayBoundedByCounters} would no longer hold, as the residual delay of an ineligible request might be nonzero. This yields the following result.

\begin{thm}

	\label{thm:RBF_DelayCompetitiveness}
	If there exists a $\gamma$ deterministic (randomized) approximation algorithm for $\PCOP$, then there exists an $O(\gamma\log (\min\{k,|\E|\}))$-competitive deterministic (randomized) algorithm for $\OP$ with delay, which runs in polynomial time.

\end{thm}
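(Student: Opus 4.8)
The plan is to transcribe the three-part argument behind Theorem~\ref{thm:RBF_Competitiveness} into the delay setting of Section~\ref{sec:FWY}. First I would modify the delay framework exactly as the deadline framework was modified: replace the definition of $E_0$ in Line~\ref{line:FWY_BuyCheapEdges} by the delay analogue of Snippet~\ref{alg:RBF_E0Redefinition}, i.e.\ greedily add $S_q$ to $E_0$ while there is a pending request $q$ not yet served by $E_0$ with $c(S_q)\le \frac{\gamma\cdot 2^{\rank_\lambda}}{k}$. The single delay-specific precaution --- flagged already in the ``Extension to Delay'' remark --- is that this loop must range only over requests $q$ that are \emph{eligible} for $\lambda$ (those with $\rank_q \le \rank_\lambda$), and the requests thereby served are folded into $Q'_\lambda$ after the call to \ForwardTime. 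Because every such request is eligible, its residual delay was zeroed in Line~\ref{line:FWY_CleanEligibleDelay}, so Proposition~\ref{prop:FWY_DelayBoundedByCounters} --- and with it the identity $\widehat{\alg}=\sum_{\lambda}c(\lambda)$ and all subsequent $\widehat{\alg}$-accounting --- is untouched. I will first assume $k=|Q|$ known.

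Second, I would re-establish $O(\gamma\log k)$-competitiveness of the modified framework. Lemma~\ref{lem:FWY_ALGUpperBound} carries over once one observes the new $E_0$ has cost at most $\gamma\cdot 2^{\rank_\lambda}$ (the loop executes at most $k$ times, each step adding a set of cost at most $\frac{\gamma\cdot 2^{\rank_\lambda}}{k}$), so Proposition~\ref{prop:FWY_ServiceCostBoundedByLevel} still holds; Lemma~\ref{lem:FWY_PrimaryLowerBoundsOPT} does not mention $E_0$ and is unchanged. The only real work is the analogue of Proposition~\ref{prop:FWY_LocalCharge} with $|\E|$ replaced by $k$, and there only the third subcase of Case~2 needs a new argument: if $\lambda\in\Lambda^\circ$ has $\rank_\lambda \ge \ell + \lceil\log k\rceil + 1$ and $q\in Q_{\lambda \cap \lambda^*}$, then $q$ is eligible for $\lambda$ and is served by $\lambda^*$, so $c(S_q)\le \gamma\cdot\OP^*(\{q\})\le \gamma\cdot c(\lambda^*)\le \gamma\cdot 2^{\ell+1}\le \frac{\gamma\cdot 2^{\rank_\lambda}}{k}$; hence the modified snippet forces $S_q\subseteq E_0^\lambda$, giving $\PCOP^*_{E_0^\lambda\gets 0}(Q_{\lambda \cap \lambda^*},\pi_{t_\lambda\to\tau_\lambda})\le \OP^*_{E_0^\lambda\gets 0}(Q_{\lambda \cap \lambda^*})=0$. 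Case~1 and the first two subcases of Case~2 are the original proof verbatim with $|\E|$ replaced by $k$. Plugging this into the (unchanged) derivation of Lemma~\ref{lem:FWY_ChargeLowerBoundsOPT} from Proposition~\ref{prop:FWY_LocalCharge} yields $\sum_{\lambda\in\Lambda^\circ}2^{\rank_\lambda}\le O(\log k)\cdot\opt$, and combining with Lemmas~\ref{lem:FWY_ALGUpperBound} and~\ref{lem:FWY_PrimaryLowerBoundsOPT} gives the $O(\gamma\log k)$ bound.

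Third, I would drop the assumption that $k$ is known and obtain the $\min\{k,|\E|\}$ guarantee by copying the doubling construction from the proof of Theorem~\ref{thm:RBF_Competitiveness}: maintain a guess $\hat k$ (initially constant), run the modified framework on it, and whenever the request count exceeds $\hat k$ start a fresh parallel instance with guess $\hat k^2$; the $i$-th instance costs $O(\gamma\log\hat k_i)\cdot\opt$ and, since $\log\hat k$ doubles each round and the final instance has $\hat k\le k^2$, the geometric sum is $O(\gamma\log k)\cdot\opt$. To replace $k$ by $\min\{k,|\E|\}$, stop doubling once $\hat k$ exceeds $|\E|$ and route all further requests to a single instance of the unmodified delay framework of Section~\ref{sec:FWY}, which is $O(\gamma\log|\E|)$-competitive by Theorem~\ref{thm:FWY_Competitiveness}. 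The polynomial running time is inherited from the constituent frameworks. The only genuinely nontrivial point is the one noted in the first paragraph: restricting the new $E_0$-construction to eligible requests so that Proposition~\ref{prop:FWY_DelayBoundedByCounters} survives; past that the whole proof is a transcription of the deadline case with $|\E|$ replaced by $k$.
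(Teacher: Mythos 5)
Your proposal is correct and takes essentially the same route as the paper, which proves this result by a one-paragraph reference back to the deadline-case modifications plus the single delay-specific caveat that the new $E_0$-construction must be restricted to eligible requests (so that Proposition~\ref{prop:FWY_DelayBoundedByCounters} survives). You identified exactly that caveat, carried the snippet, the $O(\gamma\log k)$ intermediate bound (with the only real work being the revised third subcase of the local-charging proposition), and the doubling-plus-fallback scheme over from the deadline proof; this is a faithful and correctly detailed elaboration of what the paper leaves implicit.
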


\subsection{Applications}

We can apply this framework to the network design problems which conform to the structure of $\OP$. In Section \ref{sec:APP}, we chose to quote the approximation ratios of all offline approximation algorithms in terms of $n$ instead of $k$, since we were interested in a guarantee in $n$ (the reader can verify that the original guarantees of these algorithms are indeed in terms of $k$). 

In this section, we are interested in a guarantee in $\min\{k,n\}$. We thus replace $n$ with $\min\{n,k\}$ in the approximation ratios of all offline approximation algorithms stated in Section \ref{sec:APP}. Plugging those approximation algorithms into the framework, Theorem \ref{thm:RBF_Competitiveness} yields the following results:

\begin{table}[h!]
	\begin{center}
		\caption{Framework Applications}
		\label{tab:RBF_ResultsTable}
		\begin{tabular}{l|c} 
			Edge-weighted Steiner forest with deadlines & $O(\log \min\{k,n\})$\\
			Multicut & $O(\log^2 \min\{k,n\})$ \\
			Edge-weighted Steiner network & $O(\log \min\{k,n\})$ \\
			Node-weighted Steiner forest &$O(\log^2 \min\{k,n\})$ \\		
			Directed Steiner tree & $O\left(\frac{\log^3 \min\{k,n\}}{\log \log \min\{k,n\}}\right)$ \\			
			
		\end{tabular}
	\end{center}
\end{table}

\section{Conclusions and Open Problems}

This paper presented frameworks for network design problems with deadlines or delay, which encapsulate approximation algorithms for the offline network design problem, with competitiveness which is a logarithmic factor away from the approximation ratio of the underlying approximation algorithm. The running time of these frameworks has a polynomial overhead over the running time of the encapsulated approximation algorithm. 

In particular, in the formal online model with unbounded computation, this provides $O(\log n)$ upper bounds (with $n$ the number of vertices in the graph), when the offline problem is solved exactly. For some network design problems, as seen in Appendix \ref{sec:LB}, this is relatively tight -- that is, an information-theoretic lower bound of $\Omega(\sqrt{\log n})$ exists. Whether there exists an improved framework which can bridge this gap remains open.

For the remaining network design problems, the gap is still large, as no non-constant lower bound is known. This raises the possibility of designing a framework which works for a restricted class of network design problems (which excludes node-weighted Steiner tree and directed Steiner tree), but yields constant competitiveness results for this restricted class. Either designing such a framework, or showing lower bounds, is an open problem.

An additional open problem is to design a good approximation for prize-collecting directed Steiner tree. Applying Theorem \ref{thm:FWY_Competitiveness} to such a result would yield a competitive algorithm for directed Steiner tree with delay.

\bibliographystyle{plain}
\bibliography{bibfile}
\appendix

\section{Lower Bounds}
\label{sec:LB}
Some of the more difficult network design problems considered in this paper -- namely, node-weighted Steiner tree and directed Steiner tree -- have an information-theoretic lower bound of $\Omega(\sqrt{\log n})$ on competitiveness. This lower bound stems from containing the set cover with delay problem (denoted SCD), first presented in \cite{DBLP:conf/latin/CarrascoPSV18}. 

\begin{thm}
	\label{thm:LB_LowerBound}
	Every randomized algorithm for node-weighted Steiner tree with deadlines (or delay) or directed Steiner tree with deadlines (or delay) has a competitive ratio of $\Omega(\sqrt{\log n})$.
\end{thm}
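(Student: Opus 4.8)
The plan is to exhibit a reduction from the set cover with delay problem (SCD) of \cite{DBLP:conf/latin/CarrascoPSV18}, which has a known information-theoretic lower bound of $\Omega(\sqrt{\log n})$ on the competitive ratio of any randomized algorithm, into both node-weighted Steiner tree with deadlines/delay and directed Steiner tree with deadlines/delay. The key observation is that an instance of SCD consists of a ground set $U$, a family $\mathcal{F}$ of subsets of $U$ each with a cost, and requests arriving on ground-set elements with deadlines/delay, where a request on $u$ is served by transmitting any subfamily $\mathcal{F}' \subseteq \mathcal{F}$ that covers $u$. This structure is essentially a two-level design: a request must ``reach'' the universe through exactly one chosen set.

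First, I would handle directed Steiner tree. Build a directed graph with a root $\rho$, a node $v_S$ for each set $S\in \mathcal{F}$, and a node $w_u$ for each element $u\in U$. Add an edge $\rho \to v_S$ of cost equal to the cost of $S$, and add zero-cost (or negligibly small cost) edges $v_S \to w_u$ whenever $u \in S$. A request on ground element $u$ in the SCD instance becomes a request on the terminal $w_u$ in the directed Steiner tree instance, with the same deadline/delay function. Transmitting a set of edges that connects $\rho$ to $w_u$ forces the algorithm to pick some $\rho \to v_S \to w_u$ path, i.e. to ``buy'' a covering set $S$; conversely any SCD solution maps back to a directed Steiner tree transmission of the same cost. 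Hence the two online problems have the same optimal cost and the competitive ratios are within a constant factor, so the $\Omega(\sqrt{\log n})$ lower bound transfers. One must check that $n$, the number of nodes in the constructed graph, is polynomially related to the parameter against which the SCD lower bound is stated (so that $\sqrt{\log n}$ is the right order) — this is where a small amount of care is needed, since $n = 1 + |\mathcal{F}| + |U|$ and the SCD hard instances must be chosen so that $|\mathcal{F}|,|U|$ are polynomial in the relevant size parameter.

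For node-weighted Steiner tree the construction is analogous but uses vertex weights instead of edge weights: the ``gadget'' node $v_S$ carries weight equal to the cost of $S$, the root and the element-nodes $w_u$ carry weight zero, and $v_S$ is made adjacent (in the undirected graph) both to $\rho$ and to every $w_u$ with $u\in S$. A transmission of a vertex subset connecting $w_u$ to $\rho$ must include some $v_S$ with $u\in S$, paying the cost of $S$; the correspondence with SCD solutions is again cost-preserving up to a constant, in both the deadline and the delay variants. The main obstacle — and the only real subtlety — is ensuring the reduction is faithful for the \emph{strong}/subset subtleties of the online model: one must verify that a single transmission in the Steiner problem corresponds to a single ``batch'' purchase in SCD (so that the aggregation/timing structure matches), and that zero-weight or infinitesimally-weighted connector edges/nodes do not let the algorithm cheat across time steps. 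Once the reduction is verified to preserve cost exactly (or up to a constant) and to map online strategies both ways, invoking the SCD lower bound of \cite{DBLP:conf/latin/CarrascoPSV18} completes the proof.
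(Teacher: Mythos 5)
Your proposal is correct and takes essentially the same route as the paper: reduce from set cover with delay via the standard bipartite (root $\to$ sets $\to$ elements) gadget, placing set costs on the $\rho\to v_S$ arcs for directed Steiner tree and on the $v_S$ nodes for the node-weighted version, and then translate the parameters so that $\sqrt{i}=\Omega(\sqrt{\log n})$. One small correction: the $\Omega(\sqrt{\cdot})$ information-theoretic lower bound for SCD you need is from \cite{DBLP:journals/corr/abs-1807-08543} (which gives the family with $n'=3^i$ elements, $m'=2^i$ sets, and ratio $\Omega(\sqrt{i})$), not from \cite{DBLP:conf/latin/CarrascoPSV18}, which merely introduced the problem; with $n=n'+m'+1$ this immediately gives $i=\Theta(\log n)$ and settles the parameter-translation worry you flagged.
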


In the set cover with delay problem, $n'$ elements and $m'$ sets are given. Requests arrive on the elements over time, each with an associated delay function. At any point in time, the algorithm may transmit a set $S$ at a cost $c(S)$, serving all pending requests on elements in the set $S$.

In \cite{DBLP:journals/corr/abs-1807-08543}, a lower bound was presented for set cover with delay, which also applies to deadlines (as all requests in this lower bound construction can be replaced with deadline requests).  Specifically, they gave for every $i$ an instance of SCD in which:
\begin{enumerate}
	\item The number of elements is $n'=3^i$.
	
	\item The number of sets is $m'=2^i$.
	
	\item The competitiveness of any randomized algorithm is at least $\Omega(\sqrt{i})$.
\end{enumerate}

Now, we use standard reductions from set cover to either node-weighted Steiner tree or directed Steiner tree, both on a graph of $n = n'+m'+1$ vertices. The reductions are shown in Figure \ref{fig:LB_Reduction}. Using the lower bound for SCD, we have that $i=\Omega(\log n)$, and thus the competitive ratio of any randomized algorithm is $\Omega(\sqrt{\log n})$, proving Theorem \ref{thm:LB_LowerBound}.
\begin{figure}
	\caption{\label{fig:LB_Reduction}Reduction from Set Cover to Node-Weighted Steiner Tree}
	\includegraphics*{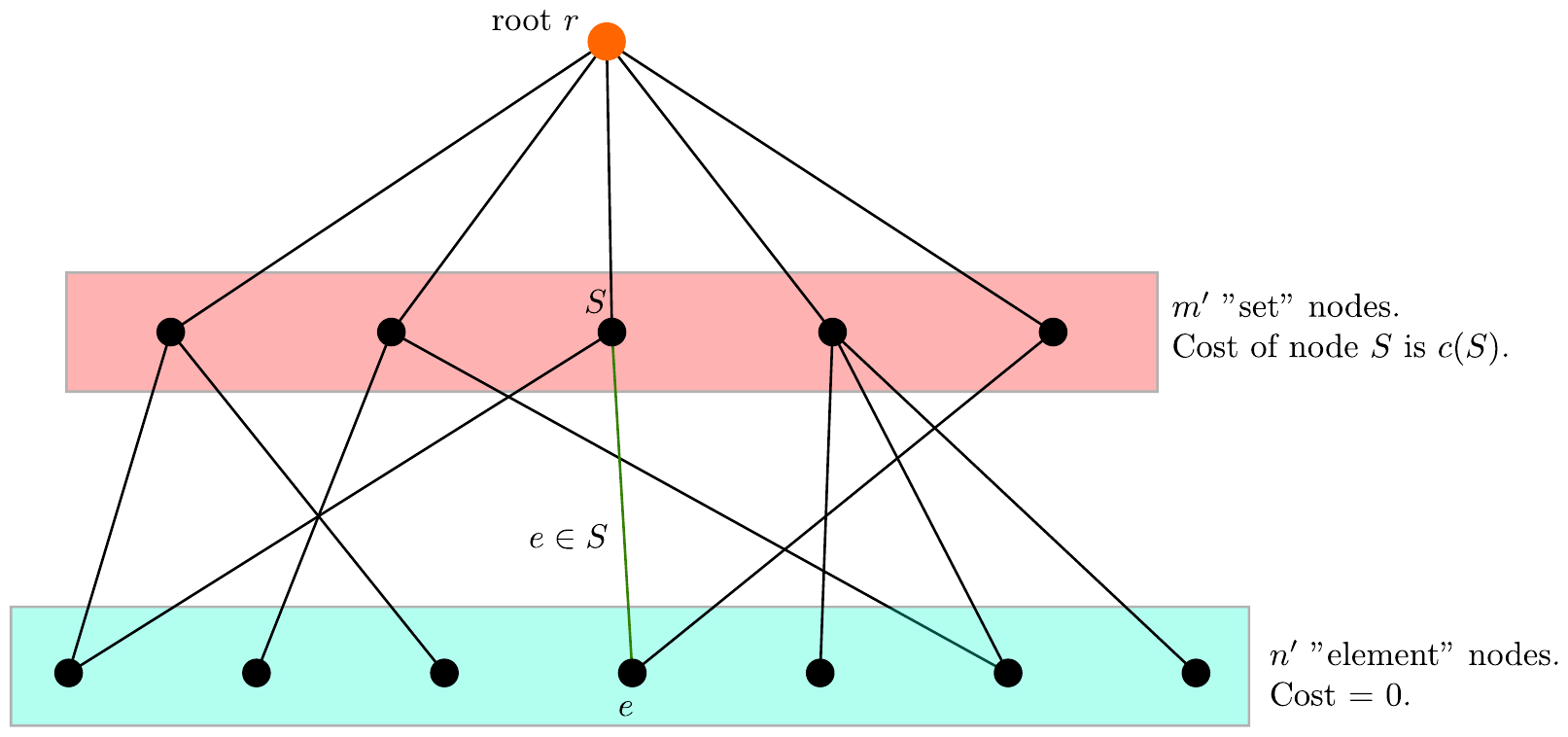}
	
	This figure describes a node-weighted Steiner tree graph of $n'+m'+1$ nodes formed from a set cover instance with $m'$ sets and $n'$ elements. In this graph, the root is connected to $m'$ nodes corresponding to the sets of the set cover instance. There are also $n'$ nodes corresponding to the elements of the instance. Each "set" node is connected to the "element" nodes corresponding to elements in the set. The cost of each set node is exactly the cost of the set in the set cover instance; the cost of the remainder of the nodes is $0$. The reduction from SCD to node-weighted Steiner tree with deadlines consists of translating a request on an element to a request on the corresponding element node. 
	
	The reduction of set cover to directed Steiner tree is similar -- the only differences are that the edges are now directed downward, and that the costs are on the edges from the root to the sets instead of on the set nodes themselves.	
\end{figure}

%
%
%
%
%
%
%
%
%
%
%

\end{document}